\documentclass[12pt]{article}
\usepackage{amsmath, amsfonts, amssymb, amsthm, graphicx, geometry, arydshln, umoline, subfig, newtxtext, newtxmath, color, comment, soul, enumerate, bm,float,multirow,booktabs,varwidth}
\usepackage{tikz}
\usetikzlibrary{matrix, positioning,fit,calc,arrows.meta}
\usepackage{appendix}
\usepackage[colorlinks=true,citecolor=blue]{hyperref}
\usepackage{natbib}
\usepackage{booktabs}
\usepackage[table]{xcolor}
\def\BibTeX{{\rm B\kern-.05em{\sc i\kern-.025em b}\kern-.08em
    T\kern-.1667em\lower.7ex\hbox{E}\kern-.125emX}}
\usepackage{makecell}
\usepackage{bbm}

\numberwithin{equation}{section}

\theoremstyle{definition}
\newtheorem{theorem}{Theorem}

\newtheorem{corollary}{Corollary}

\newtheorem{definition}{Definition}
\newtheorem{example}{Example}

\newtheorem{lemma}{Lemma}

\newtheorem{proposition}{Proposition}

\begin{document}

\title{\fontsize{20pt}{1cm}\selectfont{Recursive Borda Aggregation\thanks{This paper supersedes a  previous version circulated as ``Iterative Elimination of Borda Losers: Axiomatizations of the Baldwin and Nanson Rules''.
We thank Satoru Fujishige, Takashi Kurihara, Vincent Merlin, Masashi Umezawa and participants at the spring meeting of the Operations Research Society of Japan 2026 and SSCW 2026 for their helpful comments.
Satoshi Nakada acknowledges the financial support by JSPS KAKENHI Grant Number 25K16606 and 25K00618.
All the remaining errors are our own.}
}}
\author{
Leo Goto\thanks{Undergraduate School of Management, Department of Business Economics, Tokyo University of Science, 1-11-2, Fujimi, Chiyoda-ku, Tokyo, 102-0071, Japan. Email: leogotodeb@gmail.com}
\and
Satoshi Nakada\thanks{School of Management, Department of Business Economics, Tokyo University of Science, 1-11-2, Fujimi, Chiyoda-ku, Tokyo, 102-0071, Japan. Email: snakada@rs.tus.ac.jp}
}
\date{\today}
\maketitle

\begin{abstract}
We study a new class of voting rules, \textit{recursive Borda rules}, in which the Borda criterion is
applied recursively through successive elimination decisions rather than in a single step.
Our first result provides an axiomatic foundation for a regular subclass of recursive Borda rules.
We show that recursive reformulations of Young's (\citeyear{Young1974}) axioms characterize a regular subclass of recursive Borda rules.
Our second result establishes an exact connection between recursive Borda rules and the Condorcet criterion.
We characterize precisely when a recursive Borda rule satisfies \textit{Condorcet Consistency}. 
As a consequence, every regular recursive Borda rule satisfies \textit{Condorcet Consistency}, and \textit{Condorcet Consistency} identifies a sharp upper bound on admissible elimination regions.
Within this class, the Baldwin rule, the strict Nanson rule, and the Nanson rule emerge as the three canonical procedures, and we provide unified axiomatic characterizations of all three.
Taken together, our results extend Young's axiomatic theory of the Borda rule from one-shot positional aggregation to recursive aggregation and derive \textit{Condorcet Consistency} rather than postulating it as an independent axiom.

\noindent\textit{JEL classification}: D71.
\newline\noindent\textit{Keywords}: Recursive rules, Baldwin rule, Nanson rule, Condorcet winner, Axiomatization.
\end{abstract}

\section{Introduction}
\subsection{Motivation}
Voting theory has long been shaped by two influential approaches to preference aggregation originating in the works of Borda~\citep{Borda1781} and Condorcet~\citep{Condorcet1785}.\footnote{Borda introduced what is now known as the \emph{Borda count} as a positional method based on complete rankings, whereas Condorcet advocated selecting an alternative that defeats every opponent in pairwise majority comparisons whenever such an alternative exists. 
The terminology \textit{Condorcet criterion} was later popularized by \citet{Black1958}.
}
The Borda rule evaluates alternatives according to their aggregate positions in voters' complete rankings, whereas the Condorcet criterion evaluates an alternative through pairwise majority comparisons.
Although both approaches seek socially desirable outcomes, they rely on fundamentally different principles of aggregation.

The distinction between these approaches is not merely procedural but also informational.
As emphasized by \citet{Saari1994,Saari1995,Saari2000,Saari2001}, positional and pairwise methods aggregate different aspects of voters' preferences.
Pairwise-majority methods aggregate only binary majority comparisons, whereas positional methods exploit information contained in complete rankings.
Consequently, the two approaches need not agree even when applied to the same preference profile.

The informational distinction between the two traditions also has an important axiomatic counterpart.
\citet{Young1974} characterized the Borda rule by four axioms: \textit{Neutrality}, \textit{Consistency}, \textit{Faithfulness}, and \textit{Cancellation}.\footnote{\citet{Young1974} characterizes the Borda \emph{choice correspondence}, whereas \citet{NitzanRubinstein1981} provide the corresponding characterization of the complete social ranking.}
Among these axioms, \emph{Consistency} occupies a central position.
It captures the additive structure of Borda aggregation by requiring social comparisons supported separately by two disjoint electorates to be preserved when those electorates are combined.\footnote{Closely related reinforcement principles have long played a central role in the literature on social choice correspondences; see, for example, \citet{YoungLevenglick1978}.}
In this sense, \textit{Consistency} expresses the aggregation principle underlying the Borda rule.

Reconciling the Borda principle with the Condorcet criterion has long been a central problem in social choice.\footnote{The literature on Condorcet-consistent voting rules itself is extensive; classical examples include the Black, Copeland, Dodgson, and Schulze rules \citep{Black1958,Copeland1951,Dodgson1876,Schulze1997,Schulze2025}.
}
There are, of course, many possible approaches to this problem. 
We focus on an approach that retains Borda evaluation itself and changes only how it is applied. 
Whereas the Borda rule uses a single
score calculation to determine the entire social ranking, a recursive procedure uses the current Borda scores to identify the alternatives to be placed in the current bottom tier, removes them, and then reapplies the same criterion to the remaining alternatives. 
This is a conservative extension of the Borda principle: the information used
and the local criterion of evaluation remain unchanged, while the complete ranking is constructed through a sequence of elimination decisions.

This approach is particularly well motivated by \citet{Smith1973}, who showed that, among elimination procedures based on scoring rules, the Borda score occupies a unique position in guaranteeing
\textit{Condorcet Consistency}.\footnote{The Borda score possesses several other distinctive properties among scoring rules. 
For example, \citet{OkamotoSakai2019} show that it is the unique scoring rule satisfying a natural variant of the Condorcet loser criterion.}
Thus, although recursive elimination is not the only possible route from the Borda principle to the Condorcet criterion, recursive use of the Borda score provides a distinguished route within the framework of positional aggregation.

The Baldwin~\citep{Baldwin1926} and Nanson~\citep{Nanson1882} rules provide perhaps the most natural realization of this recursive perspective.\footnote{Unlike most Condorcet-consistent voting rules, the Baldwin and Nanson rules have also been implemented in practice, most notably in elections at the University of Melbourne and in municipal elections in Marquette, Michigan; see \citet{Gruber2025}. Gruber further reports that the complexity of the counting procedure eventually led to the repeal of the Marquette ordinance in 1933.}
A natural intermediate procedure, which we call the strict Nanson rule, eliminates exactly those alternatives whose Borda scores are strictly below the average.
Like the Borda rule, all three rules evaluate the alternatives exclusively through Borda scores computed from voters' complete rankings.
Their distinguishing feature is that these Borda evaluations are repeatedly applied after eliminating poorly performing alternatives.
Thus, the principle of social evaluation continues to be based on Borda scores, while the aggregation procedure itself becomes recursive.
An important consequence is that the three rules satisfy the Condorcet criterion.

Although the three rules differ only in how they eliminate alternatives, they need not select the same winner.
As noted by \citet{Niou1987}, a normative understanding of these differences has remained elusive.
These three rules thus raise two questions.
First, which axioms characterize recursive Borda aggregation itself?
Second, once recursive Borda aggregation has been identified, which additional principles distinguish its canonical members?
In particular, how should Young's \textit{Consistency} be reformulated for recursive Borda aggregation so as to retain the Borda principle while recovering the Condorcet criterion?

Young's \textit{Consistency} governs the complete social ranking, whereas a recursive rule constructs that ranking through successive elimination decisions.
As we show, no recursive Borda rule satisfies full \textit{Consistency}.
This motivates a recursive reformulation of Young's aggregation principle.
Instead of requiring consistency of the final ranking, we require consistency only of the current elimination decision and then apply the same rule recursively to the reduced problem.
These two ideas are formalized by \emph{Bottom Consistency} and \emph{Bottom Independence}, respectively.

Our contribution is threefold.
First, we introduce the class of recursive Borda rules and provide its first axiomatic foundation.
Rather than postulating recursive use of Borda scores, we derive regular recursive Borda rules from recursive reformulations of Young's axioms.
Second, we establish a sharp connection between recursive Borda aggregation and the Condorcet criterion.
We characterize exactly when a recursive Borda rule satisfies \textit{Condorcet Consistency}.
As a consequence, every regular recursive Borda rule satisfies \textit{Condorcet Consistency}, and \textit{Condorcet Consistency} imposes a simple upper bound on admissible elimination regions.
The Baldwin rule, the strict Nanson rule, and the Nanson rule emerge as the three canonical procedures within this class.
Third, we provide unified axiomatic characterizations of these three
canonical rules.
The Baldwin rule is identified by a stronger recursive consistency requirement, whereas the strict Nanson and Nanson rules are distinguished by the treatment of surviving and zero-score alternatives.

The remaining axioms are adapted in the same recursive spirit.
Young's \textit{Faithfulness} is replaced by \emph{Weak Faithfulness},
requiring only that a single voter's most preferred alternative not be eliminated in the first round.
Boundary conditions on zero-score alternatives are captured by \emph{Dummy Retention} and \emph{Dummy Exclusion}.
Together these axioms yield a unified framework for recursive Borda
aggregation.

Our results reveal that the Baldwin, strict Nanson, and Nanson rules share a common recursive foundation inherited from Young's characterization of the Borda rule and differ only in a small number of recursive principles.
The resulting axiomatic foundations also uncover a sharp distinction with respect to intermediate Condorcet winners and losers \citep{BarberaBossert2025}.
To the best of our knowledge, this is the first unified axiomatic treatment of the recursive Borda aggregation.

The rest of the paper is organized as follows.
Section~\ref{sec:model} introduces the model and notation.
Section~\ref{sec:recursive-borda} introduces recursive Borda aggregation rules and provides their axiomatic foundation.
Section~\ref{sec:condorcet-safe-interval} establishes the connection between recursive Borda aggregation and Condorcet Consistency and identifies the Baldwin, strict Nanson, and Nanson rules as the three canonical procedures.
Section~\ref{sec:canonical-axiomatizations} presents unified direct axiomatic characterizations of these three rules.
Section~\ref{sec:discussion} concludes.

\subsection{Related literature}

Our paper contributes to the literature on the axiomatic foundations of positional voting, particularly the Borda rule and its recursive extensions.
Young's seminal characterization of the Borda rule \citep{Young1974}, later extended to complete social rankings by \citet{NitzanRubinstein1981}, established \textit{Consistency} as the fundamental aggregation principle underlying positional voting.
This characterization has inspired a substantial literature on the axiomatic foundations of Borda-type rules.
Alternative proofs are given by \citet{HanssonSahlquist1976} and \citet{mihara2017}, while \citet{barbera2023} characterize both the Borda rule and majority rule in an opinion aggregation framework.
More recently, \citet{barbera2026} characterize the Black rule, which selects the Condorcet winner whenever one exists and otherwise chooses the Borda winner.

Our work extends this line of research from one-shot positional aggregation to recursive aggregation.
Complementing Young's axiomatic approach, \citet{Smith1973} shows that the Borda score is the unique scoring method yielding \textit{Condorcet Consistency} within the class of score-based elimination procedures.
Whereas Smith identifies the distinguished role of the Borda score from an algorithmic perspective, we provide the corresponding axiomatic foundation by identifying recursive reformulations of Young's framework.
Relatedly, \citet{YoungLevenglick1978} reveal the delicate relationship between \textit{Consistency} and the Condorcet principle, while \citet{OkamotoSakai2019} identify another distinctive property of the Borda rule through a natural variant of the Condorcet loser criterion.

Recent work has also investigated axiomatic foundations for recursive elimination rules.
\citet{FreemanBrillConitzer2014} characterize the Baldwin rule using \textit{Neutrality}, \textit{Anonymity}, \textit{Strong Bottom Consistency}, \textit{Bottom Independence}, \textit{Continuity at the Bottom}, and \textit{Condorcet Consistency}.\footnote{\citet{FreemanBrillConitzer2014} consider social preference functions based on the parallel-universes tie-breaking framework (PUT).}
\citet{BrandtDongPeters2025} characterize the Nanson rule using a strengthened form of \textit{Condorcet Consistency} in the three-alternative case.
Our work differs in three respects.
First, we introduce and axiomatize the broader class of recursive Borda aggregation rules before identifying individual voting rules.
Second, our characterizations apply to an arbitrary number of alternatives, thereby extending the scope of \citet{BrandtDongPeters2025}.
Third, unlike both \citet{FreemanBrillConitzer2014} and \citet{BrandtDongPeters2025}, we do not impose any form of \textit{Condorcet Consistency} axiomatically.
Instead, \textit{Condorcet Consistency} is derived from recursive reformulations of Young's framework.
Both the Baldwin and Nanson rules can also be viewed as margin-matrix rules, and \citet{ding2025characterizations} characterize this broader class.

From a computational perspective, \citet{NarodytskaWalshXia2011} show that manipulation of the Baldwin and Nanson rules is NP-hard, while \citet{FavardinLepelley2006} find them less susceptible to manipulation than positional scoring rules in elections with three alternatives.
Moreover, \citet{BoehmerBrederekPeters2026} show that determining whether a candidate can win or be ranked in a given position under recursive scoring rules, including the Baldwin rule, is generally NP-hard.
More broadly, \citet{AndoSugimotoSukegawa2026} develop recursive heuristics for the linear ordering problem, showing that a recursive Borda method satisfies the Condorcet criterion while remaining computationally efficient.

Finally, our proofs build on the linear-algebraic approach initiated by \citet{Young1974}.
In particular, \citet{Zwicker1991} developed the cycle--cocycle decomposition of majority-margin matrices, which plays a central role in our analysis of recursive Borda aggregation.
Related linear-algebraic techniques have also been employed by \citet{DuddyPigginsZwicker2016} in characterizing the mean rule and by \citet{BrandlPeters2019} in characterizing the Borda mean rule.

\section{The model and notation}
\label{sec:model}

\subsection{Preferences and profiles}
\label{subsec:preferences}

Let $\mathcal N=\{1,2,\ldots\}$ be a countably infinite set of potential voters.
Let $\mathcal U$ denote the collection of all finite and nonempty subsets of $\mathcal N$.
An element $N\in\mathcal U$ is called an electorate, and we write $n=|N|$ for its cardinality.
Let $A=\{a_1,\ldots,a_m\}$ with $m\geq3$ be the universal set of alternatives.
Let $\mathcal B=\{B\subseteq A:B\neq\emptyset\}$ denote the collection of all nonempty feasible sets.
For every $B\in\mathcal B$, let $\mathcal L(B)$ denote the set of strict linear orders  on $B$ (i.e., complete, transitive, and irreflexive binary relations).

For $\succ_i\,\in\mathcal L(B)$ and $a\in B$, define the rank of $a$ under $\succ_i$ by
\[
r_B(\succ_i,a)=
\left|
\{b\in B:b\succ_i a\}
\right|
+1.
\]
Thus, $r_B(\succ_i,a)=1$ if and only if $a$ is the most preferred alternative under $\succ_i$.
For every nonempty $C\subseteq B$ and every $\succ_i\,\in\mathcal L(B)$, let $\succ_i\!\!|_C\in\mathcal L(C)$ denote the restriction of $\succ_i$ to $C$, where the restriction is defined by $a\succ_i\!\!|_C b \Longleftrightarrow a\succ_i b$ for all $a,b\in C$.
For every electorate $N\in\mathcal U$ and every feasible set $B\in\mathcal B$, define $\mathcal P^N(B)=\prod_{i\in N}\mathcal L(B)$.
An element $\succ^N=(\succ_i^N)_{i\in N}\in\mathcal P^N(B)$ is called a preference profile on $B$ for electorate $N$.
For every nonempty $C\subseteq B$, define the restriction of $\succ^N$ to $C$ by $\succ^N\!\!|_C
=(\succ_i^N\!\!|_C)_{i\in N}\in \mathcal P^N(C)$.

Let $N,N'\in\mathcal U$ be disjoint electorates.
For profiles $\succ^N\in\mathcal P^N(B)$ and $\succ^{N'}\in\mathcal P^{N'}(B)$, their union is the profile $\succ^N\cup\succ^{N'}\in \mathcal P^{N\cup N'}(B)$ defined by
\[
(\succ^N\cup\succ^{N'})_i
=
\begin{cases}
\succ_i^N,
&
i\in N,
\\
\succ_i^{N'},
&
i\in N'.
\end{cases}
\]
For every $k\in\mathbb N$, let $k*\!\succ^N$ denote a $k$-fold replication of $\succ^N$.
That is, $k*\!\succ^N$ is the union of $k$ pairwise disjoint copies of $\succ^N$.
Define $\mathcal P(B)=\bigcup_{N\in\mathcal U}\mathcal P^N(B)$ and $\mathcal P=\bigcup_{B\in\mathcal B}\mathcal P(B)$.
The electorate and the feasible set are regarded as part of the description of a profile.
When no confusion can arise, we suppress the electorate superscript and write simply $\succ$.

\subsection{Social rankings and voting rules}
\label{subsec:social-rankings}

For every feasible set $B\in\mathcal B$, let $\mathcal R(B)$ denote the set of weak orders on $B$ (i.e., complete and transitive binary relations).
Let $\mathcal R=\bigcup_{B\in\mathcal B}\mathcal R(B)$.
For a weak order $R\in\mathcal R(B)$, let $P_R$ and $I_R$ denote its asymmetric and symmetric parts, respectively.
Thus, $aP_Rb$ holds if $aRb$ and not $bRa$, whereas $aI_Rb$ holds if both $aRb$ and $bRa$.
For every nonempty $C\subseteq B$, let $R|_C$ denote the restriction of $R$ to $C$.

A voting rule is a function $F:\mathcal P\longrightarrow\mathcal R$
such that $F(\succ)\in\mathcal R(B)$ for every $\succ\,\in\mathcal P(B)$.
For a profile $\succ\,\in\mathcal P(B)$, we write $R_{F(\succ)}$ for the weak order selected by $F$.
We write $aR_{F(\succ)}b$, $aP_{F(\succ)}b$, and $aI_{F(\succ)}b$ for the corresponding weak, strict, and indifferent social comparisons.

Define the top tier of $F(\succ)$ by
\[
\operatorname{top}(F(\succ))
=
\left\{
a\in B:
aR_{F(\succ)}b
\text{ for every }b\in B
\right\},
\]
and define the bottom tier of $F(\succ)$ by
\[
\operatorname{bottom}(F(\succ))
=
\left\{
a\in B:
bR_{F(\succ)}a
\text{ for every }b\in B
\right\}.
\]
Because $B$ is finite and $F(\succ)$ is a weak order, both tiers are nonempty.
If all alternatives are socially indifferent, we write $F(\succ)=(B)$.
Thus, $F(\succ)=(B)$ if and only if $\operatorname{top}(F(\succ))=\operatorname{bottom}(F(\succ))=B$.

Let $\Pi(A)$ denote the set of permutations of $A$.
For $\sigma\in\Pi(A)$ and $B\in\mathcal B$, define $\sigma(B)=\{\sigma(a):a\in B\}$.
For $\succ_i\,\in\mathcal L(B)$, define the relabeled preference $\sigma(\succ_i)\in\mathcal L(\sigma(B))$ by $\sigma(a)\,\sigma(\succ_i)\,\sigma(b)\Longleftrightarrow a\succ_i b$ for every $a,b\in B$.
The relabeled profile $\sigma(\succ^N)$ is obtained by applying $\sigma$ to every individual preference relation.
Similarly, for $R\in\mathcal R(B)$, define $\sigma(R)\in\mathcal R(\sigma(B))$ by $\sigma(a)\,\sigma(R)\,\sigma(b) \Longleftrightarrow aRb$ for every $a,b\in B$.

\subsection{Majority margins and Borda scores}
\label{subsec:borda-scores}

For a profile $\succ^N\in\mathcal P^N(B)$, define its majority-margin matrix $T(\succ^N)\in\mathbb Z^{B\times B}$ by $[T(\succ^N)]_{ab}=\left|\{i\in N:a\succ_i^N b\}\right|-\left|\{i\in N:b\succ_i^N a\}\right|$ for every $a,b\in B$.
By definition, each majority-margin matrix is skew-symmetric: $T(\succ^N)^\top=-T(\succ^N)$.
Let $0_B$ denote the zero matrix in $\mathbb Z^{B\times B}$.
For every nonempty $C\subseteq B$, let $T(\succ^N)|_C$ denote the principal submatrix of $T(\succ^N)$ indexed by $C$.
Restriction of a profile corresponds to restriction of its majority-margin matrix: $T(\succ^N\!\!|_C)=T(\succ^N)|_C$.
Note that majority-margin matrices are additive across disjoint electorates: $T(\succ^N\cup\succ^{N'})=T(\succ^N)+T(\succ^{N'})$ and homogeneous under replication: $T(k*\!\succ^N)=kT(\succ^N)$.
For $\sigma\in\Pi(A)$ and a matrix $T\in\mathbb R^{B\times B}$, define the relabeled matrix $\sigma(T)\in\mathbb R^{\sigma(B)\times\sigma(B)}$ by $[\sigma(T)]_{\sigma(a)\sigma(b)}=T_{ab}$ for every $a,b\in B$.
The majority-margin matrix is compatible with relabeling: $T(\sigma(\succ))=\sigma(T(\succ))$.

A voting rule $F$ is a \emph{margin matrix rule} if, for every $B\in\mathcal B$ and every $\succ,\succ'\,\in\mathcal P(B)$, if $T(\succ)=T(\succ')$, then $F(\succ)=F(\succ')$.
For every profile $\succ\,\in\mathcal P(B)$ and every $a\in B$, define the centered Borda score of $a$ by
\[
\beta_a^B(\succ)
=
\sum_{b\in B\setminus\{a\}}
[T(\succ)]_{ab}.
\]
Let $\beta^B(\succ)=\left(\beta_a^B(\succ)\right)_{a\in B}\in\mathbb Z^B$ denote the centered Borda-score vector.
When the feasible set is clear from the context, we write $\beta_a(\succ)$ and $\beta(\succ)$.
Note that the centered Borda scores sum to zero: $\sum_{a\in B}\beta_a^B(\succ)=0$.
Let $\bm 0_B$ denote the zero vector in $\mathbb R^B$.
Note also that centered Borda-score vectors are additive across disjoint electorates:$\beta^B(\succ^N\cup\succ^{N'})=\beta^B(\succ^N)+\beta^B(\succ^{N'})$ and homogeneous under replication: $\beta^B(k*\!\succ^N)=k\beta^B(\succ^N)$.
For a vector $v=(v_a)_{a\in B}\in\mathbb R^B$ and a permutation $\sigma\in\Pi(A)$, define $\sigma(v)\in\mathbb R^{\sigma(B)}$ by $[\sigma(v)]_{\sigma(a)}=v_a$ for every $a\in B$.
Centered Borda-score vectors are compatible with relabeling: $\beta^{\sigma(B)}(\sigma(\succ))=\sigma\bigl(\beta^B(\succ)\bigr)$.
For every feasible set $B\in\mathcal B$, let
\[
\mathcal S(B)
=
\left\{
s\in\mathbb Q^B:
\sum_{a\in B}s_a=0
\right\}
\]
denote the space of rational centered score vectors on $B$.

To relate the centered score to the standard positional Borda score, let $q=|B|$ and define
\[
b_a^B(\succ^N)
=
\sum_{i\in N}
\left(
q-r_B(\succ_i^N,a)
\right).
\]
Then, the centered and positional Borda scores satisfy
\[
\beta_a^B(\succ^N)
=
2b_a^B(\succ^N)
-
n(q-1).
\]
Hence, the centered Borda score is an affine transformation of the standard positional Borda score and induces the same ranking.
The average positional Borda score is $n(q-1)/2$.
Consequently, $\beta_a^B(\succ)\leq0$ if and only if the positional Borda score of $a$ is weakly below the average.
Similarly, $\beta_a^B(\succ)<0$ if and only if the positional Borda score of $a$ is strictly below the average.
The \textit{Borda rule}, denoted by $F^{\mathrm{Borda}}$, ranks alternatives according to their centered Borda scores:
\[
aR_{F^{\mathrm{Borda}}(\succ)}b \Longleftrightarrow \beta_a^B(\succ)
\geq
\beta_b^B(\succ)
\]
for every $a,b\in B$.

\subsection{Young's axioms}
\label{subsec:young-benchmark}

\citet{Young1974} characterizes the Borda choice correspondence, and \citet{NitzanRubinstein1981} provide the corresponding characterization of the complete Borda ranking.\footnote{\citet{Young1974} considers the choice correspondence $C(\succ)=\operatorname{top}(F^{\mathrm{Borda}}(\succ))$.}
Applied to every feasible set, the benchmark axioms are the following.

\medskip
\noindent
\textbf{Neutrality.}
For every $B\in\mathcal B$, every $\succ\,\in\mathcal P(B)$, and every $\sigma\in\Pi(A)$, $F(\sigma(\succ))=\sigma(F(\succ))$.

\medskip
\noindent
\textbf{Consistency.}
For every $B\in\mathcal B$, every pair of disjoint electorates $N,N'\in\mathcal U$, every $\succ^N\in\mathcal P^N(B)$ and $\succ^{N'}\in\mathcal P^{N'}(B)$, and every $a,b\in B$, if $aR_{F(\succ^N)}b$ and $aR_{F(\succ^{N'})}b$, then $aR_{F(\succ^N\cup\succ^{N'})}b$.
If, in addition, $aP_{F(\succ^N)}b$ or $aP_{F(\succ^{N'})}b$, then $aP_{F(\succ^N\cup\succ^{N'})}b$.

\medskip
\noindent
\textbf{Faithfulness.}
For every $B\in\mathcal B$ and every one-voter profile $\succ^{\{i\}}\in\mathcal P^{\{i\}}(B)$,
$aP_{F(\succ^{\{i\}})}b \Longleftrightarrow a\succ_i b$ for every pair of distinct alternatives $a,b\in B$.

\medskip
\noindent
\textbf{Cancellation.}
For every $B\in\mathcal B$ and every $\succ\,\in\mathcal P(B)$, if $T(\succ)=0_B$, then $F(\succ)=(B)$.
\medskip

\textit{Neutrality} requires invariance with respect to the names of the alternatives.
\textit{Consistency} requires social comparisons supported by two disjoint electorates to be preserved when the electorates are combined.
\textit{Faithfulness} requires the social ranking in a one-voter problem to coincide with that voter's ranking.
\textit{Cancellation} requires complete social indifference whenever all pairwise majority margins are zero.

\section{Recursive Borda aggregation rules}
\label{sec:recursive-borda}

We now introduce a general class of recursive elimination rules.
We then specialize this class to procedures whose elimination decisions are based on Borda scores.

For every feasible set $B \in \mathcal{B}$, let $D_B : \mathcal{P}(B) \longrightarrow 2^{B}\backslash \{\emptyset\}$ be a correspondence.
For every profile $\succ \,\in \mathcal{P}(B)$, the set $D_B(\succ)$ is interpreted as the set of alternatives eliminated in the current round. The elimination set is allowed to coincide with the entire feasible set. The case $D_B(\succ) = B$ means that the procedure stops and places all remaining alternatives in the same final tier. Let $D = (D_B)_{B \in \mathcal{B}}$ denote the collection of elimination correspondences.

\begin{definition}\label{def:elim}
For every feasible set $B\in\mathcal{B}$, an \emph{elimination correspondence on $B$} is a
mapping $D_{B}:\mathcal{P}(B)\longrightarrow 2^{B}\setminus\{\emptyset\}$ satisfying
$\emptyset\neq D_{B}(\succ)\subseteq B$ for every $\succ\,\in\mathcal{P}(B)$. A
\emph{collection of elimination correspondences} is a family $D=(D_{B})_{B\in\mathcal{B}}$
of such mappings.
\end{definition}

The elimination set is allowed to
coincide with the entire feasible set: the case $D_{B}(\succ)=B$ means that the procedure
stops and places all remaining alternatives in the same final tier.

Fix a collection of elimination correspondences $D=(D_B)_{B\in\mathcal B}$.
Let us consider the following procedure.
\begin{itemize}
  \item For every profile $\succ \,\in \mathcal{P}(B)$, set $B^{0} = B$.

  \item At every round $t \ge 0$, define
  \[
    E^{t}_{D}(\succ) \;=\; D_{B^{t}}\!\left(\succ\!\!|_{B^{t}}\right).
  \]
  If $E^{t}_{D}(\succ) = B^{t}$, the procedure stops. Every alternative in $B^{t}$ is then assigned to the same final tier.

  \item If $E^{t}_{D}(\succ) \subsetneq B^{t}$, define
  \[
    B^{t+1} \;=\; B^{t} \setminus E^{t}_{D}(\succ)
  \]
  and proceed to the next round.
\end{itemize}
Because the feasible set is finite and at least one alternative is removed whenever the procedure continues, the procedure terminates after finitely many rounds.

For every alternative $a \in B$, define its \emph{terminal round} $\tau_{D}(a;\succ)$ by
\[
  \tau_{D}(a;\succ) = t
  \quad\Longleftrightarrow\quad
  a \in E^{t}_{D}(\succ).
\]
Since the procedure terminates and the elimination sets partition $B$, the terminal round is well defined for every $a \in B$.


The \emph{recursive elimination rule generated by $D$}, denoted by
$F^D$, is defined by
\[
aR_{F^D(\succ)}b \Longleftrightarrow
\tau_D(a;\succ)\geq\tau_D(b;\succ).
\]

Thus, alternatives assigned to the same terminal round are socially
indifferent, whereas alternatives surviving longer are ranked higher.

By construction,
\[
D_B(\succ)=W_{F^D}(\succ),
\]
where $W_F(\succ)=\operatorname{bottom}(F(\succ))$ for every profile $\succ\,\in\mathcal P(B)$ as a notational convenience.
Moreover, if
$D_B(\succ)\subsetneq B$, then
\[
F^D(\succ)|_{B\setminus D_B(\succ)}
=
F^D\bigl(\succ\!\!|_{B\setminus D_B(\succ)}\bigr).
\]
Thus, after the first elimination set has been removed, the continuation ranking is obtained by applying the same procedure to the restricted profile.

\subsection{Recursive Borda rules}
\label{subsec:recursive-borda-definition}

We next restrict attention to elimination procedures whose current
decision is determined by the centered Borda-score vector.
A recursive elimination rule $F^D$ is called a \emph{recursive Borda rule} if its generating collection of
elimination correspondences $D=(D_B)_{B \in \mathcal{B}}$ satisfies Neutrality, Borda Reducibility, Homogeneity, and Minimum-Score Elimination, which are defined as follows.
\begin{itemize}
\item[] \textbf{Neutrality.}
For every feasible set $B\in\mathcal B$, every profile $\succ\,\in\mathcal P(B)$, and every
$\sigma\in\Pi(A)$,
\[
D_{\sigma(B)}(\sigma(\succ))=\sigma\bigl(D_B(\succ)\bigr).
\]
\item[] \textbf{Borda Reducibility.}
For every feasible set $B\in\mathcal B$ and every pair of profiles
$\succ,\succ'\,\in\mathcal P(B)$,
\[
\beta^B(\succ)=\beta^B(\succ')
\quad\Longrightarrow\quad
D_B(\succ)=D_B(\succ').
\]
\item[] \textbf{Homogeneity.}
For every $B\in \mathcal B$, every $\succ\,\in\mathcal P(B)$, and every
$k\in\mathbb N$, $D_B(k*\!\succ)=D_B(\succ)$.
\item[] \textbf{Minimum-Score Elimination.}
For every feasible set $B\in\mathcal B$ and every profile
$\succ\,\in\mathcal P(B)$ such that $\beta^B(\succ)\neq\bm 0_B$,
\[
\arg\min_{a\in B}\beta_a^B(\succ)\subseteq D_B(\succ)\subsetneq B .
\]
\end{itemize}

Neutrality requires the elimination decision to be invariant with respect to the names of the
alternatives.
Borda Reducibility requires the current elimination decision to depend
only on the centered Borda-score vector.
Homogeneity requires that the elimination set be the same under the replication of electorates.
Minimum-Score Elimination requires the procedure to eliminate every alternative with the lowest
Borda score, and not to stop, whenever the score vector is nonzero. This is a minimal requirement
for an active elimination rule. Indeed, if at least one alternative must be eliminated, the
lowest-scoring alternatives are the most natural candidates for elimination.

Note that these conditions already imply that the procedure stops exactly when all remaining
alternatives have the same Borda score.

\begin{lemma}\label{lem:zst}
Let $F^{D}$ be a recursive Borda rule. Then, for every feasible set $B\in\mathcal B$ and every
profile $\succ\,\in\mathcal P(B)$,
\[
\beta^B(\succ)=\bm 0_B \quad\Longleftrightarrow\quad D_B(\succ)=B .
\]
\end{lemma}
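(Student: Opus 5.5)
The two directions of Lemma~\ref{lem:zst} use different conditions from the definition of a recursive Borda rule. The implication from $D_B(\succ)=B$ to $\beta^B(\succ)=\bm 0_B$ follows from Minimum-Score Elimination. The converse follows from Neutrality and Borda Reducibility. Homogeneity is not needed.

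\emph{From $D_B(\succ)=B$ to $\beta^B(\succ)=\bm 0_B$.} I argue by contraposition. If $\beta^B(\succ)\neq\bm 0_B$, Minimum-Score Elimination gives $D_B(\succ)\subsetneq B$, so $D_B(\succ)\neq B$.

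\emph{From $\beta^B(\succ)=\bm 0_B$ to $D_B(\succ)=B$.} Suppose $\beta^B(\succ)=\bm 0_B$. If $|B|=1$, the claim is immediate, since $D_B(\succ)$ is a nonempty subset of $B$.

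Otherwise, fix any permutation $\sigma\in\Pi(A)$ with $\sigma(B)=B$; for example, take a permutation of $B$ extended by the identity on $A\setminus B$. Then $\sigma(\succ)$ is a profile in $\mathcal P(B)$ for the same electorate. By compatibility of Borda scores with relabeling,
\[
\beta^{B}(\sigma(\succ))=\sigma\bigl(\beta^B(\succ)\bigr)=\sigma(\bm 0_B)=\bm 0_B=\beta^B(\succ).
\]
Borda Reducibility then yields $D_B(\sigma(\succ))=D_B(\succ)$. Neutrality yields $D_B(\sigma(\succ))=D_{\sigma(B)}(\sigma(\succ))=\sigma\bigl(D_B(\succ)\bigr)$. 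Hence $\sigma\bigl(D_B(\succ)\bigr)=D_B(\succ)$ for every permutation $\sigma$ of $B$.

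A nonempty subset of $B$ that is invariant under every permutation of $B$ must be all of $B$. Indeed, pick $a\in D_B(\succ)$ and any $b\in B$. The transposition of $a$ and $b$ maps $a$ to $b$, so $b\in D_B(\succ)$. Therefore $D_B(\succ)=B$.

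The only point needing care is that the relabeled profile $\sigma(\succ)$ lies in the same domain $\mathcal P(B)$. This holds because $\sigma$ fixes $B$ setwise and relabeling does not change the electorate. Once this is noted, Borda Reducibility applies to $\succ$ and $\sigma(\succ)$ directly.
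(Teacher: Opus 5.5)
Your proof is correct and follows the paper's argument: you use the contrapositive of Minimum-Score Elimination for one direction, and Neutrality combined with Borda Reducibility under permutations fixing $B$ setwise for the other. Your transposition argument and your check that $\sigma(\succ)\in\mathcal P(B)$ spell out steps the paper states without elaboration.
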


\begin{proof}
The implication from right to left is Minimum-Score Elimination. For the converse, suppose that
$\beta^B(\succ)=\bm 0_B$ and let $\sigma\in\Pi(A)$ satisfy $\sigma(B)=B$. Then
$\beta^B(\sigma(\succ))=\sigma\bigl(\beta^B(\succ)\bigr)=\bm 0_B$, so Borda Reducibility gives
$D_B(\sigma(\succ))=D_B(\succ)$, while Neutrality gives $D_B(\sigma(\succ))=\sigma\bigl(D_B(\succ)\bigr)$.
Hence $\sigma\bigl(D_B(\succ)\bigr)=D_B(\succ)$ for every permutation $\sigma$ of $B$. The only
subsets of $B$ with this property are $\emptyset$ and $B$, and $D_B(\succ)\neq\emptyset$ by
definition. Therefore $D_B(\succ)=B$.
\end{proof}

We now define three canonical elimination correspondences.
For every profile $\succ\,\in\mathcal P(B)$, let
\[
  D^{\mathrm{Baldwin}}_{B}(\succ) =
    \displaystyle\operatorname*{arg\,min}_{a \in B} \beta^{B}_{a}(\succ),
\]
\[
  D^{\mathrm{sN}}_{B}(\succ) =
  \begin{cases}
    \bigl\{\, a \in B : \beta^{B}_{a}(\succ) < 0 \,\bigr\}
      & \text{if } \beta^{B}(\succ) \neq \bm 0_{B}, \\[1ex]
    B & \text{if } \beta^{B}(\succ) = \bm 0_{B},
  \end{cases}
\]
and
\[
  D^{\mathrm{Nanson}}_{B}(\succ) =
    \bigl\{\, a \in B : \beta^{B}_{a}(\succ) \le 0 \,\bigr\}.
\]

The recursive elimination rules generated by these correspondences are called the Baldwin rule, the strict Nanson rule, and the Nanson rule, respectively, and we denote them by
$F^{\mathrm{Baldwin}}$, $F^{\mathrm{sN}}$, and $F^{\mathrm{Nanson}}$.

We can see that each recursive Borda rule is a margin matrix rule. Indeed,
let $\succ, \succ' \,\in \mathcal{P}(B)$ satisfy $T(\succ) = T(\succ')$ and
argue by induction on $|B|$. By definition, their centered Borda-score vectors coincide. Borda Reducibility therefore implies that the two profiles generate the same first elimination set $E$.
If $E = B$, both procedures stop and produce complete indifference.
Otherwise, the same alternatives $C = B \setminus E$ remain after the first
round, and $T(\succ\!\!|_{C}) = T(\succ)|_{C} = T(\succ')|_{C} = T(\succ'\!\!|_{C})$
because restriction of a profile corresponds to restriction of its
majority-margin matrix. The induction hypothesis applied to $C$ therefore
gives $F^{D}(\succ\!\!|_{C}) = F^{D}(\succ'\!\!|_{C})$, and hence
$F^{D}(\succ) = F^{D}(\succ')$.


We illustrate how the elimination procedure works.
In particular, Borda Reducibility concerns the elimination decision in the current
round.
It does not imply that the complete recursive ranking is determined by
the initial Borda-score vector.
The reason is that restricting the feasible set changes the scores of the surviving alternatives according to their pairwise margins against the removed alternatives.
More precisely, let $\succ\,\in\mathcal P(B)$, let $\emptyset\neq C\subseteq B$, and let
$a\in C$.
Then
\[
\beta_a^C(\succ\!\!|_C)
=
\beta_a^B(\succ)
-
\sum_{z\in B\setminus C}
[T(\succ)]_{az}.
\]
That is, the centered Borda score on $B$ is the sum of the margins of $a$ against all alternatives in $B\setminus\{a\}$.
The score on $C$ omits exactly the margins against the alternatives in
$B\setminus C$.

\begin{example}[Identical initial scores and different recursive rankings]
\label{ex:initial-borda-scores}

Let $B=\{x,y,z\}$ and let $N=\{1,2,3\}$.
Consider the following two profiles.

\begin{table}[H]
\centering
\begin{tabular}{c|ccc|ccc}
\toprule
&
\multicolumn{3}{c|}{Profile $\succ^1$}
&
\multicolumn{3}{c}{Profile $\succ^2$}
\\
Rank
&
Voter 1
&
Voter 2
&
Voter 3
&
Voter 1
&
Voter 2
&
Voter 3
\\
\midrule
1
&
$x$
&
$x$
&
$y$
&
$x$
&
$y$
&
$y$
\\
2
&
$y$
&
$y$
&
$z$
&
$z$
&
$x$
&
$x$
\\
3
&
$z$
&
$z$
&
$x$
&
$y$
&
$z$
&
$z$
\\
\bottomrule
\end{tabular}
\caption{Profiles with identical initial centered Borda scores}
\label{tab:identical-initial-borda}
\end{table}

In the coordinate order $(x,y,z)$, both profiles have the same initial
centered Borda-score vector:
\[
\beta^B(\succ^1)
=
\beta^B(\succ^2)
=
(2,2,-4).
\]

Consequently, the Baldwin, strict Nanson, and Nanson correspondences
all eliminate $z$ in the first round at both profiles.

After $z$ has been removed, however, the reduced score vectors are $\beta^{\{x,y\}}(\succ^1\!\!|_{\{x,y\}})=(1,-1)$ and $\beta^{\{x,y\}}(\succ^2\!\!|_{\{x,y\}})=(-1,1)$.
Hence, for every
$F\in
\{F^{\mathrm{Baldwin}},F^{\mathrm{sN}},F^{\mathrm{Nanson}}\}$, we have 
\[
xP_{F(\succ^1)}yP_{F(\succ^1)}z \quad \text{and} \quad yP_{F(\succ^2)}xP_{F(\succ^2)}z.
\]

To see the source of the difference, note that at $\succ^1$, $[T(\succ^1)]_{xz}=1$ and $[T(\succ^1)]_{yz}=3$, whereas at $\succ^2$, $[T(\succ^2)]_{xz}=3$ and $[T(\succ^2)]_{yz}=1$.
Removing $z$ therefore lowers $y$'s score more than $x$'s score at $\succ^1$, while it lowers $x$'s score more than $y$'s score at $\succ^2$.
\end{example}

Example~\ref{ex:initial-borda-scores} clarifies the scope of Borda
Reducibility.
The initial Borda-score vector determines the current elimination set,
but it does not determine the sequence of Borda-score vectors generated
after successive restrictions.
The complete recursive ranking may therefore depend on pairwise-margin information that is not contained in the initial Borda-score vector.

\subsection{A foundation for recursive Borda rules}
\label{sec:recursive-borda-foundation}

We develop an axiomatic foundation for recursive Borda rules.
In particular, we consider weakenings of the axioms developed by \cite{Young1974} to see what properties allow us to use non-trivial elimination procedures with natural regularity properties.

The following axiom is the core axiom for the recursive elimination procedure, originally introduced by \cite{FreemanBrillConitzer2014} for their characterization of the Baldwin rule.

\medskip
\noindent
\textbf{Bottom Independence.}
For every feasible set $B\in\mathcal B$ and every profile $\succ\,\in\mathcal P(B)$ such that $W_F(\succ)\neq B$, $F(\succ)|_{B\setminus W_F(\succ)}=F\bigl(\succ\!\!|_{B\setminus W_F(\succ)}\bigr)$.
\medskip

\textit{Bottom Independence} requires the ranking among the surviving alternatives to coincide with the ranking obtained by applying the rule directly to the restricted profile.
It therefore turns the local bottom-tier decision into a recursive procedure.
The following lemma shows that this condition indeed captures the idea of recursive elimination rules.

\begin{lemma}
\label{lem:recursive-elimination}
A voting rule $F$ is a recursive elimination rule if and only if it satisfies \textit{Bottom Independence}.
\end{lemma}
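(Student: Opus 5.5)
We prove the two directions separately. In both directions we use the bottom-tier operator $W_F(\succ)=\operatorname{bottom}(F(\succ))$, which is always nonempty.

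\emph{Only if.} Let $F=F^D$ for some collection $D$. The paper already observed, right after defining $F^D$, that two facts hold by construction. First, $D_B(\succ)=W_{F^D}(\succ)$, because the alternatives with the smallest terminal round are exactly those in $E^0_D(\succ)=D_B(\succ)$. Second, when $D_B(\succ)\subsetneq B$, we have $F^D(\succ)|_{B\setminus D_B(\succ)}=F^D(\succ|_{B\setminus D_B(\succ)})$. To justify the second fact, we check that the procedure run on $\succ|_{B^1}$, with $B^1=B\setminus D_B(\succ)$, generates the sets $E^{t-1}$ of the original run, shifted by one round. This holds because $(\succ|_{B^1})|_{B^t}=\succ|_{B^t}$. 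Hence $\tau_D(a;\succ|_{B^1})=\tau_D(a;\succ)-1$ for every $a\in B^1$, and orderings by terminal round agree on $B^1$. Combining the two facts gives \emph{Bottom Independence}.

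\emph{If.} Let $F$ satisfy \emph{Bottom Independence}. Define $D_B(\succ)=W_F(\succ)$ for every $B$ and $\succ\in\mathcal P(B)$. This is nonempty and contained in $B$, so $D$ is a collection of elimination correspondences in the sense of Definition~\ref{def:elim}. We show $F(\succ)=F^D(\succ)$ by strong induction on $|B|$.

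If $W_F(\succ)=B$, then $F(\succ)=(B)$. The procedure for $F^D$ also stops at round $0$, so $F^D(\succ)=(B)$ as well. This case covers $|B|=1$.

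Otherwise let $C=B\setminus W_F(\succ)$, which is nonempty with $|C|<|B|$. By \emph{Bottom Independence} and the induction hypothesis,
\[
F(\succ)|_C=F(\succ|_C)=F^D(\succ|_C).
\]
By the shifting argument from the first direction, $F^D(\succ|_C)=F^D(\succ)|_C$. Moreover, $W_F(\succ)=D_B(\succ)=W_{F^D}(\succ)$.

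It remains to show that two weak orders on $B$ coincide when they have the same bottom tier $W$ and the same restriction to $B\setminus W$. For $a,b\in W$ both orders give $aIb$. For $a\notin W$ and $b\in W$, both orders give $aPb$, because $a$ is not in the bottom tier while $b$ is. For $a,b\notin W$, the orders agree by equality of the restrictions. Hence $F(\succ)=F^D(\succ)$, so $F$ is the recursive elimination rule generated by $D$.

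The only point requiring care is the index-shifting argument for terminal rounds, which is used in both directions. It is routine, so we expect no real obstacle.
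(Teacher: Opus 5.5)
Your proof is correct and takes essentially the same route as the paper: set $D_B(\succ)=W_F(\succ)$, induct on $|B|$, and combine \emph{Bottom Independence} with the induction hypothesis on $C=B\setminus W_F(\succ)$. The only difference is that you make explicit two steps the paper treats as immediate: the one-round shift of terminal rounds that gives $F^D(\succ)|_C=F^D(\succ|_C)$, and the fact that two weak orders with the same bottom tier and the same restriction to its complement coincide (one small wording slip: round $t-1$ of the restricted run reproduces the original set $E^{t}$, not ``the sets $E^{t-1}$'').
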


\begin{proof}
Suppose first that $F$ is a recursive elimination rule generated by a collection of elimination correspondences $D$.
By construction, $D_B(\succ)=W_F(\succ)$.
If $D_B(\succ)\subsetneq B$ and $C=B\setminus D_B(\succ)$, the recursive construction gives $F(\succ)|_C=F(\succ\!\!|_C)$.
Since $C=B\setminus W_F(\succ)$, $F$ satisfies \textit{Bottom Independence}.

Conversely, suppose that $F$ satisfies \textit{Bottom Independence}.
Define the elimination correspondence
\[
D^F_B(\succ):=W_F(\succ).
\]
We show by induction on $|B|$ that $F=F^{D^F}$.
If $W_F(\succ)=B$, then all alternatives are socially indifferent, and hence $F(\succ)=(B)=F^{D^F}(\succ)$.
Suppose instead that $W_F(\succ)\subsetneq B$ and let $C=B\setminus W_F(\succ) \neq \emptyset$.
\textit{Bottom Independence} gives $F(\succ)|_C=F(\succ\!\!|_C)$.
By the induction hypothesis,
\[
F(\succ\!\!|_C)=F^{D^F}(\succ\!\!|_C).
\]
Moreover, both $F(\succ)$ and $F^{D^F}(\succ)$ place the alternatives $D^F_B(\succ)=W_F(\succ)$ together in the bottom indifference class and place every alternative in $C$ strictly above them.  Therefore, $F(\succ)=F^{D^F}(\succ)$.
Thus, $F$ is a recursive elimination rule.
\end{proof}

Lemma~\ref{lem:recursive-elimination} separates the recursive structure of a ranking rule from the criterion used at each stage.
\textit{Bottom Independence} is equivalent to the existence of a recursive
elimination representation, and the elimination correspondence is
uniquely given by the bottom correspondence $W_F(\succ)$.
Accordingly, once \textit{Bottom Independence} is imposed, the remaining task is to characterize the one-step correspondence $W_F$ that determines
which alternatives are eliminated at the current stage.

We now consider how this recursive structure interacts with electorate aggregation.  
A natural benchmark is Young's \textit{Consistency}, which requires the complete social ranking to respond
coherently when disjoint electorates are combined.  
The following result shows, however, that this  requirement is incompatible with recursive Borda aggregation.

\begin{proposition}
\label{prop:recursive-borda-inconsistency}
No recursive Borda rule satisfies \textit{Consistency}.
\end{proposition}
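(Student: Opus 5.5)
The plan is to build an explicit counterexample on a three-element feasible set $B=\{x,y,z\}\subseteq A$, which exists since $m\ge 3$. The one construction should work for every recursive Borda rule $F^D$ at once. The idea is to combine a profile whose first elimination decision is pinned down by symmetry with copies of a Condorcet cycle. The cycle has zero centered Borda scores, so adding it leaves the first-round decision unchanged, but it can reverse the pairwise margin that decides the second round.

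First I will show that the rule is determined whenever the score vector has the form $(s,s,-2s)$ with $s>0$ in the order $(x,y,z)$. By Minimum-Score Elimination, $z\in D_B(\succ)\subsetneq B$. Let $\sigma$ be the transposition of $x$ and $y$. Then $\beta^B(\sigma(\succ))=\sigma(\beta^B(\succ))=\beta^B(\succ)$, so Borda Reducibility and Neutrality together give $\sigma(D_B(\succ))=D_B(\succ)$; this is the argument of Lemma~\ref{lem:zst}. A $\sigma$-invariant set that contains $z$ and is a proper subset of $B$ must be $\{z\}$. The second round is then played on $\{x,y\}$, and there a nonzero score vector forces elimination of the lower alternative. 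Hence $x P y P z$ if $[T]_{xy}>0$ and $y P x P z$ if $[T]_{xy}<0$.

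Next I fix the two electorates.
\begin{itemize}
\item $\succ^N$ is profile $\succ^1$ of Example~\ref{ex:initial-borda-scores}. It has $\beta=(2,2,-4)$ and $[T]_{xy}=1$, so the first step gives $xP_{F(\succ^N)}y$.
\item $\succ^{N'}$ is two disjoint copies of the reverse cycle $x\succ z\succ y$, $z\succ y\succ x$, $y\succ x\succ z$.
\end{itemize}
All pairwise margins of one cycle are $\pm1$, so $\beta^B(\succ^{N'})=\bm 0_B$. By Lemma~\ref{lem:zst}, $D_B(\succ^{N'})=B$ and $F(\succ^{N'})=(B)$; in particular $xR_{F(\succ^{N'})}y$.

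Consistency therefore demands $xP_{F(\succ^N\cup\succ^{N'})}y$. By additivity, however, the union has $\beta=(2,2,-4)+\bm 0_B=(2,2,-4)$ and $[T]_{xy}=1+2\cdot(-1)=-1$. The first step then gives $yP_{F(\succ^N\cup\succ^{N'})}x$, a contradiction.

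The only delicate point is that a general $D$ might eliminate more than the argmin, and this is handled by the symmetry argument in the first step. Beyond that, I only need to check the margin and score arithmetic.
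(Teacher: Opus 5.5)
Your proof is correct and follows essentially the same route as the paper. Both add a zero-score Condorcet-cycle profile to a profile with score vector $(2,2,-4)$, so that the first-round elimination of the bottom alternative is unchanged while the margin between the two survivors flips sign; this contradicts Consistency, given the complete indifference of the cycle profile. Your transposition argument, which pins the first elimination set to $\{z\}$ via Borda Reducibility and Neutrality, makes explicit a step the paper only asserts.
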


\begin{proof}
Let $B=\{a,b,c\}$ and consider two profiles on disjoint electorates:
\[
\succ^0
=
4[b\succ a\succ c]
+
4[c\succ b\succ a]
+
4[a\succ c\succ b],
\]
and
\[
\succ^1
=
3[a\succ b\succ c]
+
[c\succ a\succ b]
+
2[b\succ c\succ a],
\]
where each coefficient denotes the number of voters having the corresponding preference ordering.
In the coordinates $(ab,ac,bc)$, their pairwise majority-margin
vectors are $T[\succ^0]=(-4,4,-4)$ and $T[\succ^1]=(2,0,4)$.
Consequently, their centered Borda-score vectors are $\beta_B(\succ^0)=(0,0,0)$ and $\beta_B(\succ^1)=(2,2,-4)$.

Since $\succ^0$ has the zero score vector, zero-score termination of the rule gives
\begin{equation}\label{eq:succ0}
F(\succ^0)=(B).
\end{equation}
Moreover, $T[\succ^0\cup \succ^1]=(-2,4,0)$, and hence $\beta_B(\succ^0\cup \succ^1)=(2,2,-4)=\beta_B(\succ^1)$.
Neutrality, Minimum-Score Elimination, and Lemma~\ref{lem:zst} therefore imply that the current bottom tier is $W_F(\succ^1)=W_F(\succ^0\cup \succ^1)=\{c\}$.

Consider now the recursive continuation after $c$ has been removed.
On $\{a,b\}$, the majority margin at $\succ^1$ is $T(\succ^1)_{ab}=2$, whereas the majority margin at the combined profile is $T[\succ^0\cup \succ^1]_{ab}=-2$.
Thus, $F(\succ^1\!\!|_{\{a,b\}})=a\succ b$, while $F((\succ^0\cup \succ^1)|_{\{a,b\}})=b\succ a$.
Since $F$ is recursively generated by elimination of its bottom tier, or equivalently satisfies \textit{Bottom Independence},
\begin{equation}\label{eq:succ1}
F(\succ^1)=a\succ b\succ c
\end{equation}
and
\begin{equation}\label{eq:succ0+1}
F(\succ^0\cup \succ^1)=b\succ a\succ c.
\end{equation}

On the other hand, (\ref{eq:succ0}) and Consistency imply $F(\succ^0\cup \succ^1)=F(\succ^1)$, contradicting (\ref{eq:succ1}) and (\ref{eq:succ0+1}).
Therefore, no recursive Borda rule satisfies \textit{Consistency}.
\end{proof}

The contradiction does not arise at the first elimination stage.
Indeed,
\[
W_F(\succ^1)
=
W_F(\succ^0\cup \succ^1)
=
\{c\}.
\]
Thus, adding $\succ^0$ does not change the alternative currently eliminated.
The contradiction arises only after $c$ has been removed.  
Although $\succ^0$ has zero centered Borda scores on the original set $\{a,b,c\}$, it contains pairwise information concerning $a$ and $b$.  
This information becomes relevant in the restricted problem on $\{a,b\}$ and reverses their ranking.
\textit{Consistency} requires $\succ^0$, which produces complete social indifference on the original problem, to be irrelevant to the entire social ranking.  
Recursive elimination instead requires the rule to use the restriction of $\succ^0$ once the current bottom alternative has been removed. 
These two requirements cannot be reconciled.

Proposition~\ref{prop:recursive-borda-inconsistency} does not claim
that \textit{Consistency} and \textit{Bottom Independence} are logically incompatible for arbitrary ranking rules.  The impossibility is specific to their
combination with Borda-based recursive elimination: a profile that is irrelevant according to the Borda scores on the original feasible set may become relevant after an alternative has been removed.

The proof also indicates how \textit{Consistency} should be weakened.
The current bottom tier is unaffected by the addition of $\succ^0$:
\[
W_F(\succ^1)
=
W_F(\succ^0\cup \succ^1)
=
\{c\}.
\]
What cannot be preserved is the complete ranking of the alternatives that survive the current elimination stage.
Accordingly, we retain consistency only for the current bottom
correspondence. 
The resulting axiom does not require the entire social ranking to be invariant under electorate aggregation. 
It requires only the current elimination decision to respond consistently.
Therefore, we consider the following weaker form of \textit{Consistency}.

\medskip
\noindent
\textbf{Bottom Consistency.}
For every feasible set $B\in\mathcal B$, every pair of disjoint electorates $N,N'\in\mathcal U$, and every pair of profiles $\succ^N\in\mathcal P^N(B)$ and
$\succ^{N'}\in\mathcal P^{N'}(B)$, the following conditions hold.

\begin{enumerate}[(i)]
\item
If $W_F(\succ^N) \cap W_F(\succ^{N'}) \neq \emptyset$, then $W_F(\succ^N)\cap W_F(\succ^{N'})\subseteq W_F(\succ^N\cup\succ^{N'})$.

\item
If $W_F(\succ^N)=B$, then $W_F(\succ^N\cup\succ^{N'})=W_F(\succ^{N'})$.
The symmetric requirement applies when $W_F(\succ^{N'})=B$.

\item
If $W_F(\succ^N)=W_F(\succ^{N'})$, then $W_F(\succ^N\cup\succ^{N'})=W_F(\succ^N)$.
\end{enumerate}

The first condition, (i), preserves alternatives that both electorates place in
the current bottom tier.
The second condition, (ii), requires an electorate that is completely
indifferent to have no effect on the bottom tier of another electorate.
The third condition, (iii), preserves a common bottom tier exactly.
The qualifications in~(i) are stated only for transparency, since the
requirement is vacuous when the intersection is empty and already follows
from~(ii) when the intersection equals $B$.

The asymmetry between the inclusion in~(i) and the equalities in~(ii) and~(iii)
is deliberate, and reflects the passage from pairwise comparisons to a
set-valued elimination decision.
Young's \textit{Consistency} is a requirement about agreement, in that a social
comparison endorsed by two disjoint electorates must survive their union.
It is silent about comparisons on which the two electorates disagree.
For the bottom correspondence, the two electorates agree about an alternative
precisely when that alternative belongs to
$W_F(\succ^N)\cap W_F(\succ^{N'})$, and the inclusion in~(i) states that such
agreement is preserved under aggregation.
Imposing equality would add the upper bound
\[
W_F(\succ^N\cup\succ^{N'})\ \subseteq\ W_F(\succ^N)\cap W_F(\succ^{N'}),
\]
which has no counterpart in \textit{Consistency}, since it legislates on
disagreement.
Whenever one electorate would eliminate an alternative and the other would
retain it, the upper bound forces the combined electorate to retain it.
Preserving agreement and resolving disagreement in a designated direction are
separate normative demands, and only the former belongs to Young's aggregation
principle.

The distinction also affects what the axiom governs.
Under equality, the bottom tier can never grow when electorates are combined.
A consistency requirement would then also settle how many alternatives a single
round may remove, since the bottom tier of a large electorate could never be
coarser than that of any of its parts.
An axiom intended to govern the response of the elimination decision to the
composition of the electorate would thereby also fix the coarseness of the
elimination step itself.\footnote{The equality version of~(i) is
\textit{Strong Bottom Consistency}, introduced in Section~\ref{sec:canonical-axiomatizations}. Theorem~\ref{thm:unified-characterization}~(i)
shows that, together with the remaining axioms of Theorem~1, it determines a
single rule.}
How coarsely a rule eliminates at each stage is a question about the
elimination criterion rather than about aggregation, and it is better addressed
by an explicit axiom whose content can be assessed on its own terms.
Condition~(i) therefore retains the lower bound alone.

No such issue arises for~(ii) and~(iii).
In both cases the intersection coincides with one of the two bottom tiers, with
$W_F(\succ^{N'})$ under~(ii) and with the common tier under~(iii), so that the
two electorates do not disagree about which alternatives to eliminate.
There being no disagreement to resolve, the upper bound carries no additional
content, and equality is the natural formulation.


Young's \textit{Faithfulness} requires the complete social ranking in a one-voter problem to reproduce that voter's ranking.
The following weak version, \textit{Weak Faithfulness}, requires the voter's top-ranked alternative to survive the first elimination round.

\medskip
\noindent
\textbf{Weak Faithfulness.}
For every feasible set $B\in\mathcal B$ with $|B|\geq2$ and every
one-voter profile $\succ^{\{i\}}\in\mathcal P^{\{i\}}(B)$, if $x\succ_i y$ for every $y\in B\setminus\{x\}$, then $x\notin W_F(\succ^{\{i\}})$.
\medskip

The following result shows that \textit{Neutrality}, \textit{Cancellation}, \textit{Bottom Consistency} and \textit{Weak Faithfulness} jointly determine the local Borda structure of the rule.

\begin{proposition}
\label{prop:local-borda-structure}
Suppose that a voting rule $F$ satisfies \textit{Neutrality}, \textit{Bottom Consistency}, \textit{Weak Faithfulness}, and \textit{Cancellation}.
Then, for every feasible set $B\in\mathcal B$, there exists a well-defined correspondence $\Delta_B:\mathcal S(B)\longrightarrow2^B\setminus\{\emptyset\}$
with the following properties.

\begin{enumerate}[(i)]

\item
For every profile $\succ\,\in\mathcal P(B)$, $W_F(\succ)=\Delta_B\bigl(\beta^B(\succ)\bigr)$.

\item
For every permutation $\sigma\in\Pi(A)$ and every
$s\in\mathcal S(B)$, $\Delta_{\sigma(B)}(\sigma(s))=\sigma\bigl(\Delta_B(s)\bigr)$.

\item
For every $s\in\mathcal S(B)$ and every positive rational number $\lambda$, $\Delta_B(\lambda s)=\Delta_B(s)$.

\item
Complete indifference occurs exactly at the zero score vector: $\Delta_B(s)=B\Longleftrightarrow
s=\bm 0_B$.

\item 
The bottom tier is a lower interval of the centered Borda-score ordering: there exists a non-positive real number $c_B(s) \in [\min_{a\in B}s_a, 0]$ such that $\Delta_B(s)=\{a \in B : s_a \le c_B(s)\}$.

\item 
For every $s,t\in S(B)$, if $\Delta_B(s)\cap\Delta_B(t)\neq\emptyset$, then
$\Delta_B(s)\cap\Delta_B(t) \subseteq \Delta_B(s+t)$, with equality whenever $\Delta_B(s)=\Delta_B(t)$.
\end{enumerate}
\end{proposition}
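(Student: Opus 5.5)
The plan is to follow Young's linear-algebraic route in three stages. First, show that $W_F$ is determined by the margin matrix $T$. Second, show that it is in fact determined by the centered score vector $\beta^B$. Third, read off properties (ii)--(vi) from the axioms, transported to $\mathcal S(B)$.

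\textbf{Stage 1: dependence on $T$.} The basic tool is Bottom Consistency (ii) together with Cancellation. If $T(X)=0_B$, then $W_F(X)=B$, so $W_F(\succ\cup X)=W_F(\succ)$ for every disjoint $\succ$. Given $\succ$ and $\succ'$ with $T(\succ)=T(\succ')$ on disjoint electorates, let $\succ^{r}$ be the voter-by-voter reversal of $\succ$ on a fresh electorate. Then $T(\succ\cup\succ^{r})=T(\succ'\cup\succ^{r})=0_B$. Applying the tool twice gives
\[
W_F(\succ')=W_F(\succ'\cup\succ\cup\succ^{r})=W_F(\succ\cup\succ'\cup\succ^{r})=W_F(\succ).
\]
This also handles renaming of voters, so $W_F$ is anonymous. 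Bottom Consistency (iii) applied to copies gives $W_F(k*\!\succ)=W_F(\succ)$.

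\textbf{Stage 2: dependence on $\beta^B$.} This is the step I expect to be the main obstacle. The key claim is that $\beta^B(\succ)=\bm 0_B$ forces $W_F(\succ)=B$. For the basic building block, fix $a,b,c\in B$ and an ordering $R$ of $B\setminus\{a,b,c\}$ with reversal $R'$. Take the six-voter profile $Z$ consisting of $(abc,R),(bca,R),(cab,R)$ followed by $(R',abc),(R',bca),(R',cab)$. Its margin matrix is twice the oriented $3$-cycle on $\{a,b,c\}$ and zero elsewhere. The following symmetries all leave $T(Z)$ unchanged, so by Stage 1 and Neutrality $W_F(Z)$ is invariant under them:
\begin{itemize}
\item the rotation $(a\,b\,c)$;
\item every transposition of two alternatives outside $\{a,b,c\}$.
\end{itemize}
Hence $W_F(Z)$ is $\{a,b,c\}$, $B\setminus\{a,b,c\}$, or $B$.

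The reversal of $Z$ has the opposite margin matrix, which equals $\tau(T(Z))$ for $\tau=(a\,b)$. So its bottom tier is $\tau(W_F(Z))=W_F(Z)$. If $W_F(Z)\neq B$, Bottom Consistency (iii) would give $Z$ together with its reversal a bottom tier different from $B$, contradicting Cancellation. Therefore $W_F(Z)=B$.

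By Zwicker's cycle--cocycle decomposition, any integer margin matrix with zero row sums satisfies that some multiple $kT$ is a nonnegative integer combination of oriented $3$-cycle matrices; each cycle is oriented so that its coefficient is positive. The matching union of copies of such $Z$'s then has bottom tier $B$ by repeated use of (ii). Stage 1 and homogeneity transfer this conclusion to the original profile.

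Now let $\beta^B(\succ)=\beta^B(\succ')$. Then $\succ$ together with the reversal of $\succ'$ has zero scores, so the same two-sided argument as in Stage 1 gives $W_F(\succ)=W_F(\succ')$. Every integer vector in $\mathcal S(B)$, after scaling, is $\beta^B$ of some profile; for example, use sums of score vectors of the form $e_x-e_y$ realized by a voter pair. So setting $\Delta_B(s):=W_F(\succ)$ for any $\succ$ with $\beta^B(\succ)$ a positive multiple of $s$ is well defined. This gives (i) and (iii), and Neutrality gives (ii). Bottom Consistency, read through (i), gives (vi).

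\textbf{Stage 3: properties (iv) and (v).} For (iv), the direction $s=\bm 0_B\Rightarrow\Delta_B(s)=B$ is Stage 2. Conversely, suppose $\Delta_B(s)=B$ with $s\neq\bm 0_B$. By (ii) every permutation of $s$ also yields $B$, and by Bottom Consistency (ii) so does every positive combination of them. Because $s$ is nonzero and centered, the convex hull of its orbit contains $\bm 0_B$ in its relative interior, so these positive combinations cover all of $\mathcal S(B)$. This includes the score vector of a single voter, contradicting Weak Faithfulness.

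For (v), I would first compute $\Delta_B(e_a-e_b)$. Its stabilizer argument leaves the options $\{b\}$, $\{a\}$, $B\setminus\{a,b\}$, or unions of these. Weak Faithfulness, applied via a one-voter profile decomposition, together with (vi), should force $b\in\Delta_B(e_a-e_b)$ and $a\notin\Delta_B(e_a-e_b)$. Now suppose $a\in\Delta_B(s)$, $s_b<s_a$, and $b\notin\Delta_B(s)$. Write
\[
s=\tau(s)+(s_a-s_b)(e_a-e_b), \qquad \tau=(a\,b).
\]
Then $b$ lies in both $\Delta_B(\tau(s))$ and $\Delta_B(e_a-e_b)$, so (vi) puts $b$ in $\Delta_B(s)$, a contradiction. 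Equal scores are handled by the transposition symmetry combined with (vi). Hence $\Delta_B(s)$ is a lower interval of the score ordering.

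Nonpositivity of the threshold $c_B(s)$ is the remaining delicate point. I would try to show that an alternative with $s_a>0$ cannot be eliminated. The approach is to decompose $s$ into one-voter score vectors in each of which $a$ is top-ranked, apply Weak Faithfulness to each, and combine using (vi). I expect this decomposition step to need the most care.
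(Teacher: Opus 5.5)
Your Stages 1 and 2 are sound. They follow the paper's route (margin dependence, cyclic indifference via the $(a\,b)$-reversal of a three-cycle, then the cycle--cocycle reduction), with explicit profiles replacing Debord's realizability theorem. Your orbit-cone proof of (iv) is a valid and more direct alternative to the paper's star-vector route. It needs one added line: the orbit of a nonzero $s\in\mathcal S(B)$ spans $\mathcal S(B)$, since you can take $s_a\neq s_b$ and use $s-\tau(s)=(s_a-s_b)(e_a-e_b)$ with $\tau=(a\,b)$.

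\emph{First gap, in (v).} The input $b\in\Delta_B(e_a-e_b)$ is asserted, not proved. This is exactly where Weak Faithfulness has to act (in the paper, through the canonical-star lemma). It does not follow from (iv): eliminating the maximal scorers satisfies Neutrality, Bottom Consistency, Cancellation and (iv), yet gives $\Delta_B(e_a-e_b)=\{a\}$. You can close it with your own tools. By Neutrality, $\Delta_B(e_b-e_a)=\tau\bigl(\Delta_B(e_a-e_b)\bigr)$. A tier containing both or neither of $a,b$ would therefore equal its $\tau$-image. The equality clause of (vi) would then give $\Delta_B(\bm 0_B)=\Delta_B(e_a-e_b)$, which is a proper subset of $B$ by (iv), contradicting $\Delta_B(\bm 0_B)=B$. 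Hence exactly one of $a,b$ lies in $\Delta_B(e_a-e_b)$. Suppose it were $a$. Your identity $s=\tau(s)+(s_a-s_b)(e_a-e_b)$ would then show, mirror-symmetrically, that every $\Delta_B(s)$ is an upper set of the score order. The top alternative of a one-voter profile is its unique top scorer, so it would be eliminated, contradicting Weak Faithfulness.

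\emph{Second gap: the proposed method for nonpositivity of $c_B(s)$ cannot work.}
\begin{enumerate}[(a)]
\item Condition (vi) only propagates membership in the bottom tier, never survival. Propagating survival is Survival Consistency, which is not assumed and which the Baldwin rule, a rule satisfying all hypotheses, violates. So knowing $a\notin\Delta_B(v^j)$ for each one-voter summand tells you nothing about $\Delta_B\bigl(\sum_j v^j\bigr)$.
\item Any positive combination of one-voter vectors that all have $a$ on top makes $a$ the strict top scorer. So the decomposition you want does not exist when $s_a>0$ but $a$ is not a top scorer.
\end{enumerate}
Argue by contradiction instead, as the paper does. Suppose $a\in\Delta_B(s)$ with $s_a>0$. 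Neutrality gives $a\in\Delta_B(\sigma(s))$ for every permutation $\sigma$ of $B$ fixing $a$. Repeated use of the inclusion in (vi), with $a$ lying in every partial intersection, yields $a\in\Delta_B\bigl(\sum_\sigma\sigma(s)\bigr)$. This sum equals $(|B|-2)!\,s_a\,u_B^a$, so by (iii) $a\in\Delta_B(u_B^a)$, where $u_B^a$ has $a$-coordinate $|B|-1$ and all other coordinates $-1$. But $a$ is the unique top scorer of $u_B^a\neq\bm 0_B$. The lower interval $\Delta_B(u_B^a)$ would therefore be all of $B$, contradicting (iv).
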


The proof has five steps.
First, \textit{Bottom Consistency} and \textit{Cancellation} imply that the bottom
tier depends only on the majority-margin matrix and is invariant under replication.
Second, \textit{Neutrality} implies that cyclic components of a majority-margin
matrix generate complete social indifference.
The bottom tier therefore depends only on the cocyclic component and hence only on the centered Borda-score vector.
Third, \textit{Neutrality}, \textit{Bottom Consistency}, \textit{Weak Faithfulness}, and \textit{Cancellation} determine the bottom tiers of the canonical positive and negative star vectors.
Fourth, symmetrizing an arbitrary score vector over the permutations fixing an alternative with a positive score yields a positive multiple of the corresponding positive star vector, whose bottom tier excludes that alternative.
Finally, these star vectors and \textit{Bottom Consistency} imply that the bottom tier is downward closed in the Borda-score ordering.
Hence no alternative with a positive score can belong to the bottom tier, and the nondegeneracy in (iv) follows.
Condition (vi) is the score-space counterpart of  \textit{Bottom Consistency} under the score-space representation established in (i).
Indeed, by part (i), the bottom tier depends only on the associated score vectors.
Part (i) of \textit{Bottom Consistency} yields the inclusion in (vi), while part (iii) yields equality whenever $\Delta_B(s)=\Delta_B(t)$.
Conversely, (vi) immediately implies \textit{Bottom Consistency} after replacing score vectors by realizing profiles.
The complete proof is provided in
Appendix~\ref{app:local-borda-structure}.

Proposition~\ref{prop:local-borda-structure}  completely characterizes the current bottom-tier correspondence in score space.
\textit{Bottom Independence} then determines the continuation ranking once that bottom tier has been removed. Combining the two yields the following characterization of regular recursive Borda rules.


\begin{definition}\label{def:regular}
A recursive Borda rule $F^{D}$ is \emph{regular} if its generating collection of elimination correspondences $D=(D_{B})_{B\in\mathcal B}$ satisfies the conditions of Proposition~\ref{prop:local-borda-structure}.
\end{definition}

\begin{theorem}\label{thm:recursive-borda}
A voting rule $F:\mathcal P\to\mathcal R$ satisfies \textit{Neutrality}, \textit{Bottom Consistency}, \textit{Weak
Faithfulness}, \textit{Cancellation}, and \textit{Bottom Independence} if and only if $F$ is a regular recursive Borda
rule.
\end{theorem}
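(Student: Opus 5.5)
The proof splits into the two directions, with Lemma~\ref{lem:recursive-elimination} and Proposition~\ref{prop:local-borda-structure} doing most of the work. The one genuinely new verification is that a regular recursive Borda rule satisfies \textit{Cancellation} and \textit{Weak Faithfulness}.

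\emph{Necessity of the rule form.} Suppose $F$ satisfies the five axioms. By Lemma~\ref{lem:recursive-elimination}, \textit{Bottom Independence} gives $F=F^{D^F}$ with $D^F_B=W_F$. By Proposition~\ref{prop:local-borda-structure}(i), $W_F=\Delta_B\circ\beta^B$. I then check the four defining conditions of a recursive Borda rule one at a time.
\begin{itemize}
\item \emph{Borda Reducibility} is immediate from (i).
\item \emph{Neutrality} of $D^F$ follows from (ii) together with $\beta^{\sigma(B)}(\sigma(\succ))=\sigma(\beta^B(\succ))$.
\item \emph{Homogeneity} follows from (iii) with $\lambda=k$, since $\beta^B(k*\succ)=k\beta^B(\succ)$.
\item \emph{Minimum-Score Elimination} uses (iv) and (v). When $\beta\neq\bm 0_B$, (iv) gives $\Delta_B(\beta)\subsetneq B$. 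By (v), $\Delta_B(\beta)=\{a:\beta_a\le c_B(\beta)\}$ with $c_B(\beta)\ge\min_a\beta_a$, so this set contains $\arg\min_a\beta_a$.
\end{itemize}
Hence $F$ is a recursive Borda rule whose generating $D$ satisfies the conditions of Proposition~\ref{prop:local-borda-structure}, i.e. a regular one.

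\emph{Sufficiency.} Let $F=F^D$ be regular, so that $D_B(\succ)=\Delta_B(\beta^B(\succ))$ for some $\Delta_B$ with properties (ii)--(vi). I verify each axiom as follows.
\begin{itemize}
\item \emph{Bottom Independence} holds by Lemma~\ref{lem:recursive-elimination}.
\item \emph{Bottom Consistency}: this is where additivity of $\beta^B$ across disjoint electorates is used, since $W_F(\succ^N\cup\succ^{N'})=\Delta_B(\beta(\succ^N)+\beta(\succ^{N'}))$. Part~(i) of the axiom is the inclusion in (vi), and part~(iii) is the equality case of (vi). For part~(ii), $W_F(\succ^N)=B$ forces $\beta(\succ^N)=\bm 0_B$ by (iv), so the sum equals $\beta(\succ^{N'})$.
\item \emph{Cancellation}: $T=0_B$ gives $\beta^B=\bm 0_B$, so $D_B=B$ by Lemma~\ref{lem:zst} and the procedure stops at once with $F(\succ)=(B)$.
\item \emph{Weak Faithfulness}: for a single voter with top alternative $x$ and $|B|\ge2$, the score $\beta_x=|B|-1>0$. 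Since (v) gives $c_B\le0$, $x\notin\Delta_B(\beta)$.
\item \emph{Neutrality of $F$} is proved by induction on $|B|$. Neutrality of $D$ relabels the first elimination set, and the restricted profiles correspond under $\sigma$, so the induction hypothesis relabels the continuation ranking.
\end{itemize}

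The main obstacle is confined to the necessity direction and is already resolved by Proposition~\ref{prop:local-borda-structure}. The only delicate point left is bookkeeping: that regularity as defined in Definition~\ref{def:regular} is exactly conditions (i)--(vi) transported to $D$, so that (vi) yields \textit{Bottom Consistency} only after additivity of scores is invoked. I would state that translation explicitly once and reuse it in both directions.
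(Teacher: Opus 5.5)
Your proposal is correct and follows the paper's route. The axioms-to-rule direction is exactly the paper's argument: Lemma~\ref{lem:recursive-elimination} plus Proposition~\ref{prop:local-borda-structure} give Borda Reducibility, Neutrality, Homogeneity, and Minimum-Score Elimination of $D=W_F$. For the converse, you spell out the axiom-by-axiom checks that the paper dismisses as following ``directly from the definition.'' In doing so you correctly note that part~(ii) of \textit{Bottom Consistency} needs (iv) in addition to (vi), which is slightly more careful than the paper's remark that (vi) alone implies \textit{Bottom Consistency}.
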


\begin{proof}
Necessity follows directly from the definition of a regular recursive Borda rule together with Bottom Independence.
We show sufficiency.
Suppose that $F$ satisfies the five axioms.
Proposition~\ref{prop:local-borda-structure} gives the Borda dependence, Neutrality, positive Homogeneity,
nondegeneracy, the lower-interval structure, and the zero upper bound of the
current bottom tier, $W_F$.
Therefore, $D=(D_B)_{B\in\mathcal B}$ satisfies Neutrality, Borda Reducibility, Homogeneity, and Minimum-Score Elimination.
\textit{Bottom Independence} directly implies that $F$ is a recursive elimination rule with respect to $D=(D_B)_{B \in \mathcal{B}}$.
Since $D_B(\succ)=W_{F^D}(\succ)$, for every profile $\succ\,\in\mathcal P(B)$, the above arguments show that $D=(D_B)_{B\in\mathcal B}$ satisfies the conditions of Proposition~\ref{prop:local-borda-structure}, so that $F$ is a regular recursive Borda rule.
\end{proof}

\section{The Condorcet-consistent aggregation}
\label{sec:condorcet-safe-interval}

We now examine the restriction imposed by Condorcet Consistency on the elimination set.
An alternative $x\in B$ is a \emph{Condorcet winner} at $\succ\,\in\mathcal P(B)$ if $[T(\succ)]_{xy}>0$ for every $y\in B\setminus\{x\}$.
A voting rule $F$ satisfies \emph{Condorcet Consistency} if $\operatorname{top}(F(\succ))=\{x\}$ whenever $x$ is a Condorcet winner at $\succ$.
The key observation is that any alternative with a positive centered Borda score can be made a Condorcet winner without changing the
Borda-score vector.

\begin{lemma}
\label{lem:condorcet-completion}

Let $B$ be a finite set of alternatives and let $s=(s_a)_{a\in B}\in\mathbb Q^B$ satisfy
$\sum_{a\in B}s_a=0$.
For every $x\in B$ such that $s_x>0$, there exists a rational skew-symmetric matrix $T\in\mathbb Q^{B\times B}$ whose row-sum vector is $s$ and such that $T_{xy}>0$ for every $y\in B\setminus\{x\}$.
\end{lemma}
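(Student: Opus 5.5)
The plan is to write the required matrix as the sum of a canonical \emph{cocyclic} part, which reproduces the row sums, and a \emph{cyclic} part, which has zero row sums and makes every margin out of $x$ positive. Let $q=|B|$. The cocyclic part is
\[
T^0_{ab}=\frac{s_a-s_b}{q},
\]
which is rational and skew-symmetric. Its row sums are correct, since
\[
\sum_{b\neq a}T^0_{ab}=\frac{(q-1)s_a-(\sum_b s_b - s_a)}{q}=s_a
\]
by $\sum_b s_b=0$. In general $T^0_{xy}=(s_x-s_y)/q$ may be nonpositive for some $y$, because $s_y$ can exceed $s_x$. So $T^0$ alone does not suffice, and the cyclic part must correct these entries without changing any row sum.

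For the correction, take any $y,z\in B\setminus\{x\}$ with $y\neq z$. Let $C^{xyz}$ be the skew-symmetric matrix of the $3$-cycle $x\to y\to z\to x$, with entries
\[
C_{xy}=C_{yz}=C_{zx}=1
\]
and their negatives in the transposed positions, all other entries zero. Each row of $C^{xyz}$ has one $+1$ and one $-1$, so every row sum is zero. The matrix raises $T_{xy}$ by one and lowers $T_{xz}$ by one.

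The difficulty is that a single cycle through $x$ always lowers some other margin $T_{xz}$. The cycles must therefore be combined so that the total gain is spread over all $y\neq x$. This is possible only because $s_x>0$, and that is the step I expect to need care. Two facts make it work.

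\begin{itemize}
\item Any cyclic (zero row-sum) perturbation leaves the sum $\sum_{y\neq x}T_{xy}=s_x$ unchanged. Hence the requirement $T_{xy}>0$ for all $y\neq x$ is consistent precisely because this sum is positive.
\item The cycle matrices $C^{xyz}$ span all zero-sum redistributions of row $x$. Indeed, their effect on row $x$ is $e_y-e_z$, restricted to $B\setminus\{x\}$.
\end{itemize}

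Concretely, the target row is $t_y=s_x/(q-1)>0$ for every $y\neq x$. The vector $d_y=t_y-T^0_{xy}$ has zero sum over $y\neq x$, so it is a rational combination $\sum \lambda_{yz}(e_y-e_z)$. For instance, fix $y_0\neq x$ and write
\[
d=\sum_{y\neq y_0,x} d_y\,(e_y-e_{y_0}),
\]
which is valid because $d_{y_0}=-\sum_{y\neq y_0}d_y$. Then set
\[
T=T^0+\sum_{y\neq x,y_0} d_y\,C^{xyy_0}.
\]

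When $q=3$ there is only one such $y$, so this is a single cycle. If $q=2$, then $B=\{x,y\}$ and $s_x=-s_y>0$, so $T_{xy}=s_x>0$ holds directly.

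To finish, verify four properties of $T$:
\begin{itemize}
\item $T$ is rational, since all $d_y$ are rational.
\item $T$ is skew-symmetric, since each summand is.
\item Its row-sum vector is $s$, since the added cycles have zero row sums.
\item Its row $x$ equals $s_x/(q-1)>0$ in every off-diagonal entry.
\end{itemize}
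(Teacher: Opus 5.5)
Your proof is correct, but it builds the matrix in the opposite order from the paper. The paper fixes row $x$ first. It chooses arbitrary positive rationals $\varepsilon_y$ with $\sum_{y\neq x}\varepsilon_y=s_x$ and sets $T_{xy}=\varepsilon_y$. It then fills the block on $C=B\setminus\{x\}$ with the potential-difference matrix $T_{yz}=(r_y-r_z)/|C|$, where $r_y=s_y+\varepsilon_y$. The within-$C$ row sums of this block are $r_y$, so the full row sums are $r_y-\varepsilon_y=s_y$.

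You instead start from the global cocyclic matrix $K_B(s)$, which already has the right row sums. You then repair row $x$ by adding three-cycles through $x$, which change no row sum.

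The paper's route is shorter and needs no cycle matrices. It also produces a whole family of completions at once, one for each positive split of $s_x$.

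Your route shows why the hypothesis is exactly right. The quantity $\sum_{y\neq x}T_{xy}=s_x$ is invariant under the cyclic component of Lemma~\ref{lem:cycle-cocycle}. The three-cycles through $x$ realize every zero-sum redistribution of row $x$. So your method can reach any positive split, and $s_x>0$ is visibly the only obstruction.

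Your construction also does not use the sign of $s_x$ until the final positivity check. It therefore yields the uniform row $T_{xy}=s_x/(q-1)$ for every $s\in\mathcal S(B)$, which is exactly Lemma~\ref{lem:coordinate-completion}. The paper proves that lemma separately, by rerunning its own construction with $\varepsilon_y=s_x/(q-1)$.
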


\begin{proof}
Choose positive rational numbers
$(\varepsilon_y)_{y\in B\setminus\{x\}}$
satisfying $\sum_{y\neq x}\varepsilon_y=s_x$.
Set $T_{xy}=\varepsilon_y$ and $T_{yx}=-\varepsilon_y$ for every $y\neq x$.

Let $C=B\setminus\{x\}$ and define
$r_y=s_y+\varepsilon_y$ for every $y\in C$.
Then $\sum_{y\in C}r_y=0$.
Let $q=|C|$ and define
\[
T_{yz}
=
\frac{r_y-r_z}{q}
\]
for every $y,z\in C$.
The resulting matrix is skew-symmetric.
For every $y\in C$, its row sum within $C$ is $r_y$.
Its full row sum is therefore $r_y-\varepsilon_y=s_y$ and the row sum of $x$ is
$\sum_{y\neq x}\varepsilon_y=s_x$.
Finally, $T_{xy}>0$ for every $y\neq x$.
\end{proof}

We show that \textit{Condorcet Consistency} requires the centered Borda score of every eliminated alternative to be weakly below the average, which implies the following statements.

\begin{theorem}\label{thm:condorcet-safe}
A recursive Borda rule satisfies \textit{Condorcet Consistency} if and only if
\[
D_{B}(\succ)\subseteq\{a\in B:\beta^{B}_{a}(\succ)\le 0\}
\]
for every feasible set $B\in\mathcal B$ and every profile $\succ\in\mathcal P(B)$.
In particular, every regular recursive Borda rule satisfies \textit{Condorcet Consistency}. 
\end{theorem}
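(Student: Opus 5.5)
The plan is to prove the two directions separately. For both, I will use the fact that restricting a profile to a subset only deletes majority margins. The regularity claim will then follow from part~(v) of Proposition~\ref{prop:local-borda-structure}.

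\emph{Sufficiency.} Assume $D_B(\succ)\subseteq\{a:\beta^B_a(\succ)\le0\}$ for every $B$ and every $\succ$. Let $x$ be a Condorcet winner at $\succ\in\mathcal P(B)$; the case $|B|=1$ is trivial.
\begin{enumerate}[(1)]
\item Since $T(\succ\!\!|_C)=T(\succ)|_C$, the alternative $x$ is still a Condorcet winner at $\succ\!\!|_C$ for every $C\subseteq B$ with $x\in C$.
\item If in addition $|C|\ge2$, then $\beta^C_x(\succ\!\!|_C)=\sum_{y\in C\setminus\{x\}}[T(\succ)]_{xy}>0$.
\item Consequently $\beta^C(\succ\!\!|_C)\neq\bm 0_C$, so Lemma~\ref{lem:zst} gives $D_C(\succ\!\!|_C)\neq C$, and the hypothesis gives $x\notin D_C(\succ\!\!|_C)$.
\end{enumerate}
By induction on the rounds, every surviving set $B^t$ with $|B^t|\ge2$ contains $x$, and the procedure does not stop there. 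Since each round removes at least one alternative, some round reaches $B^t=\{x\}$. Then $E^t_D(\succ)=\{x\}$ and $x$ alone has the maximal terminal round, so $\operatorname{top}(F^D(\succ))=\{x\}$.

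\emph{Necessity.} Suppose there exist $B$, $\succ\in\mathcal P(B)$ and $a\in D_B(\succ)$ with $\beta^B_a(\succ)>0$. Put $s=\beta^B(\succ)$.
\begin{enumerate}[(1)]
\item Lemma~\ref{lem:condorcet-completion} gives a rational skew-symmetric $T$ with row sums $s$ and $T_{ay}>0$ for all $y\ne a$.
\item Multiply by a positive integer $k$ chosen so that $kT$ is an integer matrix with all off-diagonal entries even.
\item Realize $kT$ by a profile $\succ'$ using McGarvey's construction. For each ordered pair $(y,z)$ with $(kT)_{yz}>0$, add $(kT)_{yz}/2$ copies of a pair of voters: one ranks $y\succ z$ followed by the remaining alternatives in a fixed order $\pi$, and the other ranks the reverse of $\pi$ followed by $y\succ z$. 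Each such pair contributes margin $2$ on $(y,z)$ and $0$ elsewhere.
\item Then $\beta^B(\succ')=ks=\beta^B(k*\!\succ)$. Borda Reducibility and Homogeneity give $D_B(\succ')=D_B(k*\!\succ)=D_B(\succ)\ni a$.
\item Since $\beta^B(\succ')\ne\bm 0_B$, Lemma~\ref{lem:zst} (equivalently Minimum-Score Elimination) gives $D_B(\succ')\subsetneq B$. Hence some alternative survives round $0$ and is ranked strictly above $a$.
\end{enumerate}
So $a\notin\operatorname{top}(F^D(\succ'))$, although $a$ is a Condorcet winner at $\succ'$ by construction. This contradicts \emph{Condorcet Consistency}.

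\emph{Regular rules.} For a regular rule, part~(v) of Proposition~\ref{prop:local-borda-structure} gives $D_B(\succ)=\{a:\beta_a\le c_B\}$ with $c_B\le0$, so the inclusion holds and sufficiency applies.

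The main obstacle I expect is the realizability step~(3) of the necessity direction. The paper has not stated a McGarvey/Debord-type result, so the pair-of-voters construction has to be spelled out and its contribution to each margin checked. The transfer of the elimination set from $\succ$ to $\succ'$ also relies on combining Homogeneity with Borda Reducibility, because the two profiles have only proportional, not equal, score vectors.
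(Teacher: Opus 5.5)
Your proposal is correct and follows essentially the same route as the paper. For sufficiency, both arguments use the fact that the Condorcet winner persists under every restriction. For necessity, both use Lemma~\ref{lem:condorcet-completion}, scale to even integer margins, realize that matrix by a profile, and transfer the elimination set through Borda Reducibility and Homogeneity before invoking Minimum-Score Elimination.

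The differences are minor. You spell out the McGarvey pair-of-voters construction, where the paper simply cites \citet{Debord1987}. You also make the ``in particular'' claim explicit via part~(v) of Proposition~\ref{prop:local-borda-structure}, which the paper leaves implicit.
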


\begin{proof}
Suppose first that a recursive Borda rule $F^D$ satisfies \textit{Condorcet Consistency}.
Assume to the contrary that there exist $B\in\mathcal B$, $\succ\,\in\mathcal P(B)$, and $x\in D_B(\succ)$ such that $\beta_x^B(\succ)>0$.
Let $s=\beta^B(\succ)$.
By Lemma~\ref{lem:condorcet-completion}, there exists a rational skew-symmetric matrix $T'$ with row-sum vector $s$ at which $x$ is a Condorcet winner.

Choose a positive integer $k$ such that every off-diagonal entry of $kT'$ is an even integer.
By the majority-margin realizability theorem of \citet{Debord1987}, there exists
a profile $\succ'\,\in\mathcal P(B)$ that satisfies $T(\succ')=kT'$.
Since $k$ is positive, the alternative $x$ is a Condorcet winner at $\succ'$.
Moreover, the centered Borda-score vector is the row-sum vector of the majority-margin matrix, so that
$\beta^B(\succ')=k\beta^B(\succ)=\beta^B(k*\!\!\succ)$.
Borda Reducibility implies $D_B(\succ')=D_B(k*\!\!\succ)$ and Homogeneity implies
$D_B(k*\!\!\succ)=D_B(\succ)$.
Hence, $x\in D_{B}(\succ')$. Moreover, since $\beta^{B}_{x}(\succ')=k\beta^{B}_{x}(\succ)>0$,
we have $\beta^{B}(\succ')\neq \bm{0}_{B}$, and Minimum-Score Elimination gives
$D_{B}(\succ')\subsetneq B$.
Thus, the procedure does not stop at the first round, and the recursive procedure eliminates a Condorcet winner in the first round, contradicting \textit{Condorcet Consistency}.
It follows that every eliminated alternative has a nonpositive centered Borda score.

Conversely, suppose that the stated inclusion holds.
Let $x$ be a Condorcet winner in $\succ \, \in\mathcal P(B)$.
For every $C\subseteq B$ containing $x$ with
$|C|\geq2$, the alternative $x$ remains a Condorcet winner at $\succ\!\!|_C$.
Therefore, $\beta_x^C(\succ\!\!|_C)>0$.
The assumed inclusion implies that $x\notin D_C(\succ\!\!|_C)$.
When $x$ is the only remaining alternative, its centered Borda score is zero, and the procedure stops.
Hence, $x$ is uniquely top-ranked, so that \textit{Condorcet Consistency} holds.
\end{proof}


For two collections of elimination correspondences $D$ and $D'$, write
$D\preceq D'$ if $D_B(\succ)\subseteq D'_B(\succ)$ for every $B\in\mathcal B$ and every
$\succ\,\in\mathcal P(B)$.
Minimum-Score Elimination and Theorem \ref{thm:condorcet-safe} imply that the Baldwin and the Nanson rules are the unique minimal and maximal rules in the class of Condorcet-consistent recursive Borda rules.

\begin{corollary}
\label{cor:baldwin-nanson-interval}
For any Condorcet-consistent recursive Borda rule $F^D$ generated by $D$, it holds that 
\[
D^{\mathrm{Baldwin}}
\preceq
D
\preceq
D^{\mathrm{Nanson}}.
\]
\end{corollary}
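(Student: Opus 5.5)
The corollary splits into a lower inclusion and an upper inclusion, which I would check separately for an arbitrary feasible set $B\in\mathcal B$ and profile $\succ\,\in\mathcal P(B)$.

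For the lower inclusion $D^{\mathrm{Baldwin}}\preceq D$, I would distinguish whether the score vector vanishes.
\begin{itemize}
\item If $\beta^B(\succ)\neq\bm 0_B$, Minimum-Score Elimination gives directly
\[
D^{\mathrm{Baldwin}}_B(\succ)=\arg\min_{a\in B}\beta^B_a(\succ)\subseteq D_B(\succ).
\]
\item If $\beta^B(\succ)=\bm 0_B$, every alternative attains the minimum, so $D^{\mathrm{Baldwin}}_B(\succ)=B$. Lemma~\ref{lem:zst} gives $D_B(\succ)=B$, so the inclusion again holds.
\end{itemize}
This uses only that $F^D$ is a recursive Borda rule; Condorcet Consistency plays no role here.

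For the upper inclusion $D\preceq D^{\mathrm{Nanson}}$, I would invoke the necessity direction of Theorem~\ref{thm:condorcet-safe}. Since $F^D$ satisfies Condorcet Consistency, for every $B$ and $\succ$ we have
\[
D_B(\succ)\subseteq\{a\in B:\beta^B_a(\succ)\le 0\},
\]
and the right-hand side is exactly $D^{\mathrm{Nanson}}_B(\succ)$ by definition.

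For completeness, I would also confirm that the two bounds are themselves Condorcet-consistent recursive Borda rules, which justifies calling them the unique minimal and maximal elements.
\begin{itemize}
\item Both $D^{\mathrm{Baldwin}}$ and $D^{\mathrm{Nanson}}$ depend only on $\beta^B$, which gives Borda Reducibility. They are invariant under scaling, which gives Homogeneity, and they are compatible with relabeling, which gives Neutrality.
\item Both contain the argmin of the scores. When $\beta^B\neq\bm 0_B$, the scores sum to zero, so some score is positive; hence neither correspondence eliminates all of $B$. This gives Minimum-Score Elimination.
\item Both satisfy the inclusion in Theorem~\ref{thm:condorcet-safe}. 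For Baldwin this holds because the minimum of a zero-sum vector is nonpositive. So both rules are Condorcet-consistent.
\end{itemize}
Uniqueness of the minimal and maximal elements is then immediate, because $\preceq$ is antisymmetric.

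I expect no real obstacle. The only point needing care is the zero-score case of the lower bound, which is handled by Lemma~\ref{lem:zst}.
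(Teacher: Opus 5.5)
Your proof is correct and follows the paper's route: the lower bound comes from Minimum-Score Elimination and the upper bound from the necessity direction of Theorem~\ref{thm:condorcet-safe}. Your explicit treatment of the zero-score case via Lemma~\ref{lem:zst} fills in a detail that the paper's one-line justification leaves implicit. The extra check that $D^{\mathrm{Baldwin}}$ and $D^{\mathrm{Nanson}}$ themselves belong to the class is not needed for the stated inclusions, but it correctly supports the paper's ``unique minimal and maximal'' remark.
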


The strict Nanson rule is a natural interior member of the interval.
It eliminates all alternatives whose Borda scores are strictly below the average while retaining alternatives whose scores equal the
average.
The interval generally contains other recursive Borda rules as well.

The class of recursive Borda rules considered in Corollary \ref{cor:baldwin-nanson-interval} is still quite large even if elimination correspondences have interval structures.
For example, the recursive Borda rule generated by the following elimination correspondences $D^\ast$ belongs to this class: 
\[
D_B^\ast(\succ)
=
\begin{cases}
D_B^{\mathrm{Baldwin}}(\succ)~\text{if}~ |B| \text{ is odd},\\
D_B^{\mathrm{sN}}(\succ)~ \text{if } |B| \text{ is even}.
\end{cases}
\]

To identify canonical members of the interval characterized in Corollary \ref{cor:baldwin-nanson-interval}, we impose a uniformity condition linking current bottom-tier decisions across profiles and feasible sets.
For every real number $r$, define
\[
\operatorname{sgn}(r)
=\begin{cases}
-1~\text{if}~r<0,\\
0~\text{if}~r=0,\\
1,~\text{if}~r>0.
\end{cases}
\]
For every profile $\succ\,\in\mathcal P(B)$, let
\[
M_B(\succ)=
\arg\min_{a\in B}\beta_a^B(\succ).
\]

\begin{itemize}
\item[] \textbf{Nonminimum Sign-Threshold Uniformity.} For every pair of feasible sets $B,B'\in\mathcal B$, every pair of profiles $\succ\,\in\mathcal P(B)$ and $\succ'\,\in\mathcal P(B')$, and every pair of alternatives $a\in B\setminus M_B(\succ)$ and $a'\in B'\setminus M_{B'}(\succ')$, if $\operatorname{sgn}\bigl(\beta_a^B(\succ)\bigr)\leq\operatorname{sgn}\bigl(\beta_{a'}^{B'}(\succ')\bigr)$ and $a'\in W_F(\succ')$, then $a\in W_F(\succ)$.
\end{itemize}

The condition applies both within and across profiles.
It requires the treatment of nonminimum alternatives to be governed by a common cutoff in the sign ordering.
In particular, alternatives with the same sign are treated identically.
Moreover, if an alternative with a given sign belongs to the bottom tier, then every nonminimum alternative with a lower sign also belongs to the bottom tier.
Minimum-score alternatives are excluded from the condition because their inclusion in the bottom tier is already implied by Minimum-Score Elimination.

\begin{corollary}
\label{cor:baldwin-nanson-extremes}
Suppose that $F$ is a regular recursive Borda rule and satisfies Nonminimum Sign-Threshold Uniformity.
Then $F$ is Baldwin, strict Nanson, or Nanson rule.
\end{corollary}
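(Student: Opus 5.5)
Since $F$ is regular, Proposition~\ref{prop:local-borda-structure}(v) applies: for every $B$ and every $\succ\,\in\mathcal P(B)$, the bottom tier $W_F(\succ)=D_B(\succ)$ is a lower interval of the Borda ordering with cutoff $c_B(s)\le 0$. Consequently no alternative with a positive score is ever eliminated. Minimum-Score Elimination adds that $M_B(\succ)\subseteq W_F(\succ)$, and Lemma~\ref{lem:zst} says $W_F(\succ)=B$ exactly when $\beta^B(\succ)=\bm 0_B$. Apart from the minimum set, the only open question is therefore what happens to nonminimum alternatives with negative or zero score. The plan is to use Nonminimum Sign-Threshold Uniformity to reduce this to two yes/no choices that are uniform across all profiles and feasible sets: whether such alternatives are eliminated when their score is negative, and whether they are eliminated when their score is zero.

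Let $\mathcal K$ be the set of signs $\kappa\in\{-1,0,1\}$ for which there exist some $B$, some $\succ\,\in\mathcal P(B)$ and some $a\in B\setminus M_B(\succ)$ with $\operatorname{sgn}(\beta^B_a(\succ))=\kappa$ and $a\in W_F(\succ)$. By Nonminimum Sign-Threshold Uniformity, if $\kappa\in\mathcal K$ then every nonminimum alternative, in any profile, whose sign is at most $\kappa$ lies in the bottom tier. Conversely, if $\kappa\notin\mathcal K$, then by definition no nonminimum alternative of sign $\kappa$ is ever eliminated. The set $\mathcal K$ is thus downward closed, and $1\notin\mathcal K$ by the first paragraph. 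This leaves three possibilities: $\mathcal K=\emptyset$, $\mathcal K=\{-1\}$, or $\mathcal K=\{-1,0\}$.

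Next I check each case against the three canonical rules. If $\mathcal K=\emptyset$, then for $\beta\neq\bm 0$ we get $W_F=M_B$, and for $\beta=\bm 0$ we get $W_F=B=M_B$; so $D=D^{\mathrm{Baldwin}}$. If $\mathcal K=\{-1,0\}$, then every nonminimum alternative with nonpositive score is eliminated, and minimum alternatives have negative score when $\beta\ne\bm 0$; together with the exclusion of positive scores this gives $D=D^{\mathrm{Nanson}}$, and the zero vector case is consistent with that as well. If $\mathcal K=\{-1\}$, nonminimum alternatives are eliminated exactly when their score is negative. Since minimum alternatives are negative whenever $\beta\neq\bm0$, the bottom tier is $\{a:\beta_a<0\}$ in that case and $B$ in the zero case, which is $D^{\mathrm{sN}}$. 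By Lemma~\ref{lem:recursive-elimination} the elimination correspondence determines the rule, so $F$ is the Baldwin, strict Nanson or Nanson rule respectively.

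The main point needing care is the case analysis over minimum alternatives. First, when $\beta\neq\bm 0$ the minimum score is strictly negative, because the scores sum to zero; this is what lets the minimum set be absorbed into the threshold description. Second, an alternative with score $0$ can be nonminimum, so the sign-$0$ choice genuinely matters, and it must be applied uniformly. That uniformity is exactly what rules out mixtures such as $D^\ast$. No realizability argument is needed, because $\mathcal K$ is defined existentially and each case is settled directly by the uniformity axiom.
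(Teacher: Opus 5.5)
Your proof is correct and is essentially the paper's argument. Proposition~\ref{prop:local-borda-structure}(v) rules out eliminating positive-score alternatives, Minimum-Score Elimination fixes the minimum set, and Nonminimum Sign-Threshold Uniformity forces a common sign cutoff, which leaves exactly the Baldwin, strict Nanson, and Nanson regions. Your set $\mathcal K$ just makes the paper's one-line case analysis explicit. Strictly, the downward closure of $\mathcal K$ uses that some profile has a nonminimum alternative with negative score, which is easy to exhibit, but your final case split does not depend on it.
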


\begin{proof}
By Definition~\ref{def:regular} and Proposition~\ref{prop:local-borda-structure}~(v), every eliminated alternative has a nonpositive centered Borda score. 
Since every regular recursive Borda rule eliminates all minimum-score alternatives, Nonminimum Sign-Threshold Uniformity implies that the only possible elimination regions are three cases: $M_B(\succ)$, $\left\{ a\in B: \beta_a^B(\succ)<0\right\}$, and $\left\{a\in B: \beta_a^B(\succ)\leq 0 \right\}$.
These are precisely the elimination regions of the Baldwin, strict Nanson, and Nanson rules.  
\end{proof}

The inclusion relations among the classes identified in Theorem~\ref{thm:recursive-borda} and Corollaries \ref{cor:baldwin-nanson-interval} and \ref{cor:baldwin-nanson-extremes} are summarized below.

\begin{figure}[H]
\centering
\begin{tikzpicture}[
    font=\small,
    class/.style={
        draw,
        rounded corners=5pt,
        thick,
        align=center
    },
    rule/.style={
        draw,
        rounded corners=2pt,
        fill=white,
        align=center,
        inner sep=4pt
    },
    theorem/.style={
        font=\footnotesize\itshape,
        align=center
    }
]

\node[class,
      minimum width=13.2cm,
      minimum height=8.7cm,
      anchor=north west]
      (RB) at (0,0) {};

\node[class,
      minimum width=11.6cm,
      minimum height=7.2cm,
      anchor=north west]
      (RRB) at (0.8,-0.9) {};

\node[class,
      minimum width=10.0cm,
      minimum height=5.7cm,
      anchor=north west]
      (CCRB) at (1.6,-1.8) {};

\node[class,
      minimum width=8.4cm,
      minimum height=3.9cm,
      anchor=north west]
      (BNI) at (2.4,-2.8) {};

\node[anchor=north west, align=left]
      at ($(RB.north west)+(0.25,-0.20)$)
      {\textbf{Recursive Borda rules}};

\node[anchor=north west, align=left]
      at ($(RRB.north west)+(0.25,-0.20)$)
      {\textbf{Condorcet-consistent recursive Borda rules}};

\node[anchor=north west, align=left]
      at ($(CCRB.north west)+(0.25,-0.20)$)
      {\textbf{Regular recursive Borda rules}};

    \node[theorem, anchor=north east]
      at ($(RRB.north east)+(-0.9,-1.0)$)
      {Theorem~\ref{thm:recursive-borda}};

\node[theorem, anchor=north east]
      at ($(BNI.north east)+(-0.25,-0.20)$)
      {Corollary~\ref{cor:baldwin-nanson-interval}};

\draw[-{Latex[length=2mm]}, thick]
      ($(BNI.south west)+(0.85,1.25)$)
      --
      ($(BNI.south east)+(-0.85,1.25)$);

\node[align=center, anchor=south]
      at ($(BNI.south west)+(1.5,1.2)$)
      {\textbf{Minimal}\\
       \footnotesize minimum-score \\ elimination};

\node[align=center, anchor=south]
      at ($(BNI.south east)+(-1.5,1.2)$)
      {\textbf{Maximal}\\
       \footnotesize nonpositive-score \\ elimination};

\node[rule]
      (Baldwin)
      at ($(BNI.south west)+(1.00,0.55)$)
      {\textbf{Baldwin}};

\node[rule]
      (StrictNanson)
      at ($(BNI.south)+(0,0.55)$)
      {\textbf{Strict Nanson}};

\node[rule]
      (Nanson)
      at ($(BNI.south east)+(-1.00,0.55)$)
      {\textbf{Nanson}};

\node[theorem, below=0.3cm of StrictNanson]
      {Corollary~\ref{cor:baldwin-nanson-extremes}: Uniqueness under the uniformity condition.};

\fill ($(Baldwin.north)+(0,0.17)$) circle (1.6pt);
\fill ($(StrictNanson.north)+(0,0.17)$) circle (1.6pt);
\fill ($(Nanson.north)+(0,0.17)$) circle (1.6pt);

\end{tikzpicture}

\end{figure}

\section{Axiomatizations of the canonical rules}
\label{sec:canonical-axiomatizations}

Section \ref{sec:condorcet-safe-interval} identifies the Baldwin, strict Nanson, and Nanson rules as the canonical Condorcet-consistent rules in the class of regular recursive Borda rules.
We now characterize each of the three rules directly.

First, observe that the Baldwin rule satisfies a stronger version of \textit{Bottom Consistency}.

\medskip
\noindent
\textbf{Strong Bottom Consistency.}
For every feasible set $B\in\mathcal B$, every pair of disjoint electorates $N,N'\in\mathcal U$, and every pair of profiles $\succ^N\in\mathcal P^N(B)$ and $\succ^{N'}\in\mathcal P^{N'}(B)$, if $W_F(\succ^N)\cap W_F(\succ^{N'})\neq\emptyset$,
then $W_F(\succ^N\cup\succ^{N'})=W_F(\succ^N)\cap W_F(\succ^{N'})$.
\medskip

\textit{Strong Bottom Consistency} requires the bottom tier of the combined electorate to consist exactly of the alternatives placed in the bottom tier by both constituent electorates.

The strict Nanson rule satisfies a consistency condition concerning surviving alternatives.
For every profile $\succ\,\in\mathcal P(B)$, write
$S_F(\succ)=B\setminus W_F(\succ)$.

\medskip
\noindent
\textbf{Survival Consistency.}
For every feasible set $B\in\mathcal B$, every pair of disjoint
electorates $N,N'\in\mathcal U$, and every pair of profiles $\succ^N\in\mathcal P^N(B)$ and $\succ^{N'}\in\mathcal P^{N'}(B)$, if $F(\succ^N\cup\succ^{N'})\neq(B)$,
then $S_F(\succ^N)\cap S_F(\succ^{N'}) \subseteq S_F(\succ^N\cup\succ^{N'})$.
\medskip

\textit{Survival Consistency} preserves alternatives that survive in both
electorates.
It complements \textit{Bottom Consistency}, which preserves alternatives
eliminated in both electorates.
\medskip

We next distinguish the strict Nanson and Nanson rules through the
treatment of dummy alternatives.
An alternative $x\in B$ is a \emph{dummy} at a profile
$\succ\,\in\mathcal P(B)$ if $[T(\succ)]_{xy}=0$
for every $y\in B\setminus\{x\}$.

\medskip
\noindent
\textbf{Dummy Retention.}
For every feasible set $B\in\mathcal B$, every profile
$\succ\,\in\mathcal P(B)$, and every dummy alternative $x\in B$, if $x\in W_F(\succ)$, then 
$F(\succ)=(B)$.
\medskip

\textit{Dummy Retention} requires a dummy alternative to survive unless all alternatives are socially indifferent.

\medskip
\noindent
\textbf{Dummy Exclusion.}
For every feasible set $B\in\mathcal B$, every profile$\succ\,\in\mathcal P(B)$, and every dummy alternative $x\in B$, $x\in W_F(\succ)$.
\medskip

\textit{Dummy Exclusion} requires every dummy alternative to belong to the bottom tier.
Note that each dummy condition implies Cancellation.
Combined with the axioms of Theorem~\ref{thm:recursive-borda},
each of these axioms pins down the bottom tier as that of the Baldwin, strict Nanson, and Nanson rules, respectively.

\begin{proposition}
\label{prop:canonical-local-identification}
Let $F$ satisfy \textit{Neutrality}, \textit{Bottom Consistency}, \textit{Weak Faithfulness}, \textit{Cancellation}, and \textit{Bottom Independence}.
Let $\Delta_B$ be the score-space correspondence derived in Proposition~\ref{prop:local-borda-structure}.

\begin{enumerate}[(i)]

\item
If $F$ satisfies \textit{Strong Bottom Consistency}, then
\[
\Delta_B(s)
=
\arg\min_{a\in B}s_a
\]
for every $s\in\mathcal S(B)$.

\item
If $F$ satisfies \textit{Survival Consistency} and \textit{Dummy Retention}, then
\[
\Delta_B(s)
=\begin{cases}
\{a\in B:s_a<0\},~\text{if}~s\neq\bm 0_B,\\
B,~\text{if}~s=\bm 0_B.
\end{cases}
\]

\item
If $F$ satisfies \textit{Dummy Exclusion}, then
\[
\Delta_B(s)
=
\{a\in B:s_a\leq0\}
\]
for every $s\in\mathcal S(B)$.

\end{enumerate}

\end{proposition}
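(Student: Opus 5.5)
Throughout, we work in score space via Proposition~\ref{prop:local-borda-structure}. Every $s\in\mathcal S(B)$ is, after multiplication by a positive integer, the Borda vector of some profile. Indeed, by Lemma~\ref{lem:condorcet-completion}-type constructions, or simply the matrix $T_{ab}=(s_a-s_b)/|B|$ scaled to even integers, together with Debord's theorem, every such $s$ is realized. Parts (i), (iii) and (vi) of that proposition then let us translate each profile axiom into a statement about $\Delta_B$. By (v), $\Delta_B(s)=\{a:s_a\le c_B(s)\}$ with $c_B(s)\le 0$, so $M(s):=\arg\min s\subseteq\Delta_B(s)\subseteq\{s_a\le 0\}$. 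In each part it remains to pin down the cutoff.

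\textbf{(i) Strong Bottom Consistency.} In score space it reads: if $\Delta_B(s)\cap\Delta_B(t)\neq\emptyset$ then $\Delta_B(s+t)=\Delta_B(s)\cap\Delta_B(t)$. Fix $s\neq\bm 0_B$ with minimizers $M(s)$, and suppose some $b\notin M(s)$ lies in $\Delta_B(s)$. Pick $a\in M(s)$ and construct $t$ with $\Delta_B(t)=\{a\}$, so that the intersection with $\Delta_B(s)$ is $\{a\}$. A natural choice is the negative star vector at $a$: $t_a=-(|B|-1)$ and $t_y=1$ otherwise. Its minimum is unique, and its bottom tier is $\{a\}$ by (v), since every other coordinate is positive. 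Strong Bottom Consistency then gives $\Delta_B(s+t)=\{a\}$. Now scale $t$ down: replace $t$ by $\varepsilon t$, whose tier is unchanged by (iii). For small rational $\varepsilon$, the tier of $s+\varepsilon t$ must contain all of its minimizers, which are exactly $\{a\}$. The target is instead a contradiction using the other element $b$. Apply the same argument with a star vector at $b$, i.e.\ $t'$ with $\Delta_B(t')=\{b\}$, whose intersection with $\Delta_B(s)$ is $\{b\}$. Then $\Delta_B(s+\varepsilon t')=\{b\}$. But for small $\varepsilon$, $a$ is still a minimizer of $s+\varepsilon t'$ whenever $s_a<s_b$ and $\varepsilon$ is small, since then $b$ does not overtake $a$. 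By Minimum-Score Elimination (part (v)), $a\in\Delta_B(s+\varepsilon t')=\{b\}$, a contradiction. Hence $\Delta_B(s)=M(s)$ for $s\ne\bm 0_B$. The case $s=\bm 0_B$ gives $B=M(\bm 0_B)$ by (iv).

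\textbf{(iii) Dummy Exclusion.} We show that every $a$ with $s_a=0$ lies in $\Delta_B(s)$. By (v) this suffices, since the tier is then downward closed and contains the cutoff $0$. Given $s$ with $s_x=0$, construct a profile whose margin matrix has row $x$ identically zero and remaining rows realizing $s|_{B\setminus\{x\}}$. For this we take $T_{yz}=(s_y-s_z)/(|B|-1)$ on $B\setminus\{x\}$, which sums correctly because $\sum_{y\ne x}s_y=0$, scale it, and realize it via Debord. Then $x$ is a dummy with Borda vector $ks$, so $x\in W_F=\Delta_B(ks)=\Delta_B(s)$.

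\textbf{(ii) Survival Consistency and Dummy Retention.} Dummy Retention, via the same dummy construction, shows that for $s\neq\bm 0_B$ every $x$ with $s_x=0$ is excluded from $\Delta_B(s)$; here $F\neq(B)$ by (iv). So $c_B(s)<0$ and $\Delta_B(s)\subseteq\{s_a<0\}$. For the reverse inclusion, suppose $b$ with $s_b<0$ survives, and let $a\in M(s)$. The plan is a decomposition $s=u+v$ in which $b$ survives under both but is forced out of the sum, contradicting the score-space form of Survival Consistency. The simplest attempt is to write $s=\tfrac12 s+\tfrac12 s$, which is useless. Instead choose $u$ with $u_b=0$ and $b$ the unique zero, so that $b$ survives at $u$ by what was just shown, and set $v=s-u$ arranged so that $v_b<0$ but $b$ still survives at $v$. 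Survival Consistency would then give $b\in S(s)$, which is consistent and yields no contradiction. So the direction must be reversed: start from the vector where $b$ should be eliminated and split it into two parts where $b$ survives. This shows survivors are closed under addition, so the set of $s$ at which $b$ survives is a convex cone (using (iii)). Zero-score points where $b$ survives are abundant, and by perturbation they generate everything with $s_b<0$ whenever $b$ survives at even one such vector. We then aim for a contradiction with Minimum-Score Elimination by adding a vector where $b$ survives with $b$ strictly minimal in the sum. This cone/perturbation step is the main obstacle, and we expect it to require careful choice of the zero-score survivor vectors, for instance $u=\lambda(e_c-e_d)$ with $b\notin\{c,d\}$.
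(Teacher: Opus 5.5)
Your part (i) is correct but argues differently from the paper. You suppose a non-minimizer $b$ lies in $\Delta_B(s)$ and add $\varepsilon t'$, with $t'$ the negative star at $b$. Strong Bottom Consistency then forces $\Delta_B(s+\varepsilon t')=\{b\}$. But a minimizer $a$ of $s$ stays a minimizer of $s+\varepsilon t'$ once $\varepsilon|B|<s_b-s_a$, and part (v) of the local-structure proposition puts $a$ in that tier, a contradiction. The paper instead writes $s=\sum_y c_y u_B^y$ with $c_y=(s_y-\min_z s_z)/|B|\geq 0$. It then applies Strong Bottom Consistency iteratively to the positive stars, whose tiers $B\setminus\{y\}$ intersect exactly in $\arg\min_y s_y$. (Drop your abandoned first attempt with the star at $a$.)

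The genuine gaps are in (iii) and (ii). In (iii), the claim ``by (v) this suffices'' is false. Part (v) only places the cutoff somewhere in $[\min_a s_a,0]$, so knowing that zero coordinates are eliminated says nothing when $s$ has no zero coordinate. For $s=(3,-1,-2)$ you obtain only $\{c\}\subseteq\Delta_B(s)\subseteq\{b,c\}$. Indeed, consider the correspondence using cutoff $0$ when $s$ has a zero coordinate and cutoff $\min_a s_a$ otherwise. It is neutral, homogeneous, nondegenerate, a lower interval with cutoff in $[\min_a s_a,0]$, and compatible with Dummy Exclusion, so part (vi) must enter somewhere.

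The missing step, which is the paper's, is a split for each $x$ with $s_x<0$. Write $s=g+h$ with $g=\alpha(-u_B^x)$ and $\alpha=-s_x/(|B|-1)$. Then $g_x=s_x$, hence $h_x=0$. We have $x\in\Delta_B(g)=\{x\}$, and $x\in\Delta_B(h)$ by your dummy construction (or $\Delta_B(h)=B$ if $h=\bm 0_B$). Part (vi) then gives $x\in\Delta_B(s)$.

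In (ii), the inclusion $\{a:s_a<0\}\subseteq\Delta_B(s)$ is never proved; you leave it as ``the main obstacle''. The cone/perturbation plan with vectors $\lambda(e_c-e_d)$ is unnecessary. Your last sentence points the right way, and the same split finishes it in one step. Suppose $s_x<0$ but $x\notin\Delta_B(s)$, and let $w=g-s=-h$.

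\begin{enumerate}[(a)]
\item If $w=\bm 0_B$, then $s=g$ and $\Delta_B(s)=\{x\}$ outright.
\item Otherwise $w_x=0$, so $x$ survives at $w$ by your Dummy Retention argument. A single such vector suffices; nothing has to be generated.
\item Since $s+w=g\neq\bm 0_B$, Survival Consistency makes $x$ survive at $g$, contradicting $\Delta_B(g)=\{x\}$. Survival Consistency transfers to score space through $W_F(\succ)=\Delta_B(\beta^B(\succ))$ together with nondegeneracy.
\end{enumerate}

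Your first half of (ii), which excludes zero coordinates when $s\neq\bm 0_B$, and the dummy construction itself are fine.
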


The proof uses the canonical star vectors established in the proof of Proposition~\ref{prop:local-borda-structure}.
For the Baldwin rule, every centered score vector can be decomposed into positive star vectors whose bottom tiers intersect exactly at the minimum-score set.
\textit{Strong Bottom Consistency} turns this intersection into the exact bottom tier.
For the strict Nanson rule, \textit{Dummy Retention} excludes zero-score alternatives from every nontrivial bottom tier.
\textit{Survival Consistency} then separates negative from nonnegative
coordinates.
For the Nanson rule, \textit{Dummy Exclusion} includes every zero-score alternative.
\textit{Bottom Consistency} propagates this inclusion from zero coordinates to all negative coordinates.
The complete proof is provided in Appendix~\ref{app:canonical-local-identification}.

Given the identification of Bottom tiers, \textit{Bottom Independence} implies the uniqueness of the social ranking, which is formally stated in the following lemma.

\begin{lemma}
\label{lem:bottom-tier-propagation}
Let $F$ and $G$ satisfy \textit{Bottom Independence}.
If $\operatorname{bottom}(F(\succ))=\operatorname{bottom}(G(\succ))$ for every feasible set and every profile, then $F=G$.
\end{lemma}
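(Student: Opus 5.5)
We argue by strong induction on $|B|$, showing that $F(\succ)=G(\succ)$ for every $\succ\,\in\mathcal P(B)$. The most direct route uses Lemma~\ref{lem:recursive-elimination}. Since $F$ and $G$ satisfy \textit{Bottom Independence}, each is the recursive elimination rule generated by its own bottom correspondence: $F=F^{D^F}$ with $D^F_B(\succ)=W_F(\succ)$, and likewise $G=F^{D^G}$. The hypothesis says precisely that $D^F=D^G$ as collections of elimination correspondences. A recursive elimination rule is determined by its generating collection, because the rounds $B^t$, the sets $E^t_D(\succ)$, and the terminal rounds $\tau_D$ are all defined from $D$ alone. Hence $F=G$.

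Because that argument leans on the uniqueness buried in the proof of Lemma~\ref{lem:recursive-elimination}, we will also give a self-contained induction for transparency.

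\textbf{Base case.} If $|B|=1$, then $\mathcal R(B)$ has exactly one element, so $F(\succ)=G(\succ)$ trivially.

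\textbf{Inductive step.} Fix $\succ\,\in\mathcal P(B)$ and let $W=W_F(\succ)=W_G(\succ)$.

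If $W=B$, then both $F(\succ)$ and $G(\succ)$ equal $(B)$, since a weak order whose bottom tier is all of $B$ is complete indifference.

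If $W\subsetneq B$, set $C=B\setminus W$, which is nonempty and satisfies $|C|<|B|$. \textit{Bottom Independence} gives
\[
F(\succ)|_C=F(\succ\!\!|_C)
\qquad\text{and}\qquad
G(\succ)|_C=G(\succ\!\!|_C).
\]
The induction hypothesis applies on $C$ because the bottom-tier equality is assumed for every feasible set and every profile. It yields $F(\succ\!\!|_C)=G(\succ\!\!|_C)$, and hence $F(\succ)|_C=G(\succ)|_C$.

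\textbf{Reassembly.} It remains to check that a weak order $R$ on $B$ is pinned down by its bottom tier $W$ together with $R|_C$. This follows from the definition of the bottom tier:
\begin{itemize}
\item[] all elements of $W$ are mutually indifferent, since each is weakly below every element of $B$;
\item[] every $c\in C$ is strictly above every $w\in W$, because $wRc$ always holds, and $cRw$ failing would contradict $w\in\operatorname{bottom}(R)$ only if $c$ were not in the bottom tier, which is the case since $c\notin W$;
\item[] comparisons within $C$ are given by $R|_C$.
\end{itemize}
Applying this to $F(\succ)$ and $G(\succ)$ gives $F(\succ)=G(\succ)$.

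The only step needing any care is this reassembly. Specifically, one must verify that $c\notin W$ forces $c\,P_R\,w$ for every $w\in W$. This holds because $c\notin\operatorname{bottom}(R)$ means some $b\in B$ satisfies $b\,P_R\,c$. Since $w\,R\,b$ by the definition of the bottom tier, transitivity gives $w\,P_R\,c$ is impossible, so $c\,R\,w$; and $w\,R\,c$ with $c\,R\,w$ would put $c$ in the bottom tier, so $c\,P_R\,w$.
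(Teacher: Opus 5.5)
Your self-contained induction is essentially the paper's proof, and your shortcut via Lemma~\ref{lem:recursive-elimination} is also sound. In fact it needs only the statement of that lemma, not anything buried in its proof: for any generating collection $D$ one has $D_B(\succ)=W_{F^D}(\succ)$ by construction, so $F=F^{D}$ forces $D=W_F$, and likewise for $G$.

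The paper stops at $F(\succ)|_C=G(\succ)|_C$ and leaves the reassembly implicit, so you are right to check it. However, your justification has the orientation of the bottom tier reversed.
\begin{itemize}
\item By definition, $w\in\operatorname{bottom}(R)$ means $b\,R\,w$ for every $b\in B$. So it is $c\,R\,w$, not $w\,R\,c$, that always holds.
\item Likewise, $c\notin\operatorname{bottom}(R)$ means $c\,P_R\,b$ for some $b$, not $b\,P_R\,c$.
\item From your stated premises $w\,R\,b$ and $b\,P_R\,c$, transitivity actually yields $w\,P_R\,c$, the opposite of your conclusion. The final paragraph is therefore a non sequitur as written.
\item The second bullet is also muddled. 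Failure of $c\,R\,w$ contradicts $w\in W$ outright, whatever the status of $c$. The role of $c\notin W$ is to rule out $w\,R\,c$.
\end{itemize}

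The correct check is one line. We have $c\,R\,w$ because $w\in W$. If also $w\,R\,c$, then $b\,R\,w$ and $w\,R\,c$ give $b\,R\,c$ for every $b\in B$, so $c\in W$, a contradiction. Hence $c\,P_R\,w$. With this repair your proposal is complete.
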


\begin{proof}
We proceed by induction on the number of alternatives.
The result is immediate for singleton feasible sets.

Fix $\succ\,\in\mathcal P(B)$ and let $W=
\operatorname{bottom}(F(\succ))=\operatorname{bottom}(G(\succ))$.
If $W=B$, both rules yield complete indifference.
Suppose that $W\neq B$, and let
$C=B\setminus W$.
\textit{Bottom Independence} gives $F(\succ)|_C=F(\succ\!\!|_C)$ and $G(\succ)|_C=G(\succ\!\!|_C)$.
Therefore, the induction hypothesis gives
$F(\succ)|_{C}=F(\succ\!\!|_C)=G(\succ\!\!|_C)=G(\succ)|_{C}$.
\end{proof}

By Theorem \ref{thm:recursive-borda}, Proposition \ref{prop:canonical-local-identification} and Lemma \ref{lem:bottom-tier-propagation}, we can obtain the unified axiomatization result of the three rules.

\begin{theorem}
\label{thm:unified-characterization}
Let $F:\mathcal P\to\mathcal R$ be a voting rule.

\begin{enumerate}[(i)]
\item
$F$ is the Baldwin rule if and only if it satisfies \textit{Neutrality}, \textit{Strong Bottom Consistency}, \textit{Weak Faithfulness}, \textit{Cancellation}, and \textit{Bottom Independence}.

\item
$F$ is the strict Nanson rule if and only if it satisfies \textit{Neutrality}, \textit{Bottom Consistency}, \textit{Survival Consistency}, \textit{Weak Faithfulness}, \textit{Dummy Retention}, and \textit{Bottom Independence}.

\item
$F$ is the Nanson rule if and only if it satisfies \textit{Neutrality}, \textit{Bottom Consistency}, \textit{Weak Faithfulness}, \textit{Dummy Exclusion}, and \textit{Bottom Independence}.
\end{enumerate}
\end{theorem}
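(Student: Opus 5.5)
For each of the three parts I will prove sufficiency by reducing to bottom tiers and applying Lemma~\ref{lem:bottom-tier-propagation}. Necessity I will obtain by verifying the axioms directly on the explicit elimination correspondences.

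\emph{Sufficiency.} In each case I first check that the listed axioms contain those of Theorem~\ref{thm:recursive-borda}. In (i), Strong Bottom Consistency implies clauses (i) and (iii) of Bottom Consistency directly. For clause (ii), take $W_F(\succ^N)=B$: the intersection is $W_F(\succ^{N'})\neq\emptyset$, so the union's bottom tier equals $W_F(\succ^{N'})$. In (ii) and (iii), Cancellation follows from the dummy axiom, as the paper notes. If $T(\succ)=0_B$, every alternative is a dummy. Under Dummy Exclusion every alternative lies in the bottom tier. Under Dummy Retention some dummy lies in the nonempty bottom tier, which forces $F(\succ)=(B)$.

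Hence $F$ is a regular recursive Borda rule, and Proposition~\ref{prop:local-borda-structure} yields $\Delta_B$ with $W_F(\succ)=\Delta_B(\beta^B(\succ))$. Proposition~\ref{prop:canonical-local-identification} then identifies $\Delta_B$ as the Baldwin, strict Nanson, or Nanson elimination correspondence, respectively. So $W_F$ coincides with the bottom tier of the target rule at every profile. Both $F$ and the target rule satisfy Bottom Independence, the latter by Lemma~\ref{lem:recursive-elimination}. Lemma~\ref{lem:bottom-tier-propagation} therefore gives equality of the rules.

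\emph{Necessity.} All three rules are recursive elimination rules, so Bottom Independence holds by Lemma~\ref{lem:recursive-elimination}. Neutrality is immediate because elimination is defined through $\beta^B$, which is compatible with relabeling. For Weak Faithfulness, a single voter's top alternative has centered score $q-1>0$ when $|B|=q\ge2$, so it is never eliminated by any of the three correspondences. Cancellation holds because $T=0_B$ gives $\beta=\bm 0_B$, and all three rules stop with $(B)$ (Lemma~\ref{lem:zst}). The remaining axioms follow from additivity of $\beta^B$.

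\emph{Baldwin.} If $a$ minimizes both $s$ and $t$, then $a$ minimizes $s+t$, and $(s+t)_a=\min s+\min t$. Conversely, if $b$ attains $\min(s+t)$, then $s_b+t_b=\min s+\min t$ forces $b\in\arg\min s\cap\arg\min t$. This gives Strong Bottom Consistency, including the case where one score vector is zero and $\arg\min$ is all of $B$.

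\emph{Nanson.} Nonpositivity is preserved under sums, which gives (i) and (iii) of Bottom Consistency. For (ii), $W=B$ means $\beta=\bm 0$, so adding that electorate leaves the scores unchanged. A dummy $x$ has $\beta_x=0$, so it is eliminated, which is Dummy Exclusion.

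\emph{Strict Nanson.} For Bottom Consistency, negativity adds, and zero vectors are neutral. For (iii), if both tiers equal $B$, both score vectors are zero. If both equal $W\subsetneq B$, then the scores are negative on $W$ and nonnegative elsewhere, and sum to a nonzero vector with the same sign pattern. For Survival Consistency, assume the union is not $(B)$, so $s+t\ne\bm 0$. A survivor of both electorates has $s_a\ge0$ and $t_a\ge0$ (if one vector is zero its survivor set is empty), so $s_a+t_a\ge0$ and $a$ survives. Dummy Retention holds because a dummy has score $0$ and is eliminated only when $\beta=\bm 0$.

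The main obstacle is the bookkeeping in these degenerate cases, particularly zero score vectors in Strong Bottom Consistency and Survival Consistency. The substantive work was already done in Proposition~\ref{prop:canonical-local-identification}.
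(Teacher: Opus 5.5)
Your proof is correct and follows the paper's route. You check that each axiom system contains the five axioms of Theorem~\ref{thm:recursive-borda}, pin down $\Delta_B$ with Proposition~\ref{prop:canonical-local-identification}, and conclude with Lemma~\ref{lem:bottom-tier-propagation}. Your explicit necessity verification, which the paper leaves implicit, is also sound, with one small omission: for clause (iii) of \textit{Bottom Consistency} under the Nanson rule you also need that strictly positive scores stay positive under addition, which is immediate.
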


The summary of the characterization of Theorems \ref{thm:recursive-borda} and \ref{thm:unified-characterization} is given in Table \ref{tab:axiom-comparison}.
The axioms in each characterization are independent;
See Appendix \ref{app:independence}.

\begin{table}[H]
\centering
\begin{tabular}{l|c|ccc}
\toprule
&
Regular recursive Borda
&
\multicolumn{3}{c}{Canonical rules}
\\
\cmidrule(lr){3-5}
Axiom
&
&
Baldwin
&
Strict Nanson
&
Nanson
\\
\midrule
Neutrality
&
\checkmark
&
\checkmark
&
\checkmark
&
\checkmark
\\
Bottom Consistency
&
\checkmark
&
&
\checkmark
&
\checkmark
\\
Strong Bottom Consistency
&
&
\checkmark
&
&
\\
Survival Consistency
&
&
&
\checkmark
&
\\
Weak Faithfulness
&
\checkmark
&
\checkmark
&
\checkmark
&
\checkmark
\\
Cancellation
&
\checkmark
&
\checkmark
&
&
\\
Dummy Retention
&
&
&
\checkmark
&
\\
Dummy Exclusion
&
&
&
&
\checkmark
\\
Bottom Independence
&
\checkmark
&
\checkmark
&
\checkmark
&
\checkmark
\\
\bottomrule
\end{tabular}
\caption{Comparison of the axiomatic foundations. The symbol $\checkmark$ indicates that the axiom is used in the corresponding characterization.}
\label{tab:axiom-comparison}
\end{table}

The theorem has a common first step.
In the Baldwin case, \textit{Strong Bottom Consistency} implies \textit{Bottom
Consistency}.
In the strict Nanson case, \textit{Dummy Retention} implies \textit{Cancellation}.
In the Nanson case, \textit{Dummy Exclusion} implies \textit{Cancellation}.
Therefore, each system of axioms satisfies the five axioms in
Theorem~\ref{thm:recursive-borda}.
Each rule consequently belongs to the identified class of recursive Borda rules in Theorem~\ref{thm:recursive-borda}.
The distinctive axioms of Proposition \ref{prop:canonical-local-identification} then determine the precise lower interval selected as the first bottom tier.

The three characterizations reveal that the Baldwin and Nanson rules represent two opposite approaches to recursive elimination, while the strict Nanson rule is not merely a technical intermediate procedure.

The Baldwin rule identifies the current bottom tier through \textit{Strong Bottom Consistency}, which uniquely determines the score-minimizing alternatives.
The Nanson rule instead identifies the bottom tier through \textit{Bottom Consistency} together with \textit{Dummy Exclusion}, thereby eliminating every alternative whose Borda score is weakly below the average.
The strict Nanson rule combines a different set of principles.
\textit{Bottom Consistency} preserves alternatives eliminated by both electorates, whereas \textit{Survival Consistency} preserves alternatives surviving in both electorates.
Thus, the rule is stable with respect to both elimination and survival.
At the same time, \textit{Dummy Retention} treats zero-score alternatives as belonging to the surviving side unless complete social indifference is reached.
Consequently, the distinction between the strict Nanson and Nanson rules is precisely whether the average Borda score belongs to the surviving region or the elimination region.
The three rules therefore correspond to three conceptually distinct recursive interpretations of Young's consistency principle rather than to minor procedural variants of the same elimination method.

\section{Discussion and concluding remarks}
\label{sec:discussion}
This paper develops a new class of voting rules, recursive Borda aggregation rules, and provides an axiomatic foundation for a regular subclass of them.
The class includes the Baldwin and the Nanson rules as two extreme rules with respect to the elimination process.
In particular, we show that, under Condorcet consistency together with an additional condition, only these two rules and the strict Nanson rule survive in the class.
Given that, we then provide axiomatic foundations of Baldwin, strict Nanson, and the Nanson rules in a unified way using comparable axioms of the Borda rule provided by \cite{Young1974}, thereby connecting Young's classical axiomatization of the Borda rule with a general theory of recursive social aggregation.
To conclude, we discuss several related issues and possible directions for future research.

\subsection{From the Borda rule to recursive Borda aggregation}
\label{subsec:one-shot-recursive-discussion}

Theorem~\ref{thm:recursive-borda} shows that the use of Borda scores in recursive aggregation need not be imposed directly.
\textit{Neutrality}, \textit{Bottom Consistency}, \textit{Weak Faithfulness}, and
\textit{Cancellation} imply that the current bottom tier depends only on the Borda-scores.
They also imply that this bottom correspondence is homogeneous and forms a lower interval of the Borda-score ordering bounded above by zero.
\textit{Bottom Independence} then turns this local bottom-tier decision into a recursive social ranking.
Thus, the Borda score is a derived property for the representation of the bottom tier rather than a primitive of the axioms.

This result creates a direct parallel with Young's characterization of the Borda rule.
The two branches share \textit{Neutrality} and \textit{Cancellation}, while \textit{Faithfulness} is weakened to \textit{Weak Faithfulness}.
They differ in the aggregation and continuation requirements.
Young's \textit{Consistency} preserves the complete social comparison across electorate aggregation and identifies the Borda rule, while \textit{Bottom Consistency} preserves only the current bottom-tier
decision.
\textit{Bottom Independence} then applies the resulting local criterion recursively to the surviving alternatives.

\subsection{Intermediate Condorcet winners and losers}
\label{subsec:intermediate-condorcet}

The three canonical rules also differ in their treatment of weak pairwise winners and
losers. Following \citet{BarberaBossert2025}, an alternative $x\in B$ is an
\emph{intermediate Condorcet winner} at $\succ\,\in\mathcal{P}(B)$ if
$[T(\succ)]_{xy}\ge 0$ for every $y\in B\setminus\{x\}$, with a strict inequality for at
least one such $y$. An alternative $x\in B$ is an \emph{intermediate Condorcet loser} if
$[T(\succ)]_{xy}\le 0$ for every $y\in B\setminus\{x\}$, with a strict inequality for at
least one such $y$.

The distinction rests on the fact that the sign of the centered Borda score of an
intermediate Condorcet winner is preserved under every restriction of the feasible set.
Indeed, if $x$ is an intermediate Condorcet winner at $\succ$, then
\[
  \beta^{C}_{x}\bigl(\succ\!\!|_{C}\bigr)
  \;=\;\sum_{y\in C\setminus\{x\}}[T(\succ)]_{xy}\;\ge\;0
\]
for every $C\subseteq B$ with $x\in C$. Fix such a $C$. If
$\beta^{C}(\succ\!\!|_{C})\neq \bm{0}_{C}$, then
$\min_{a\in C}\beta^{C}_{a}(\succ\!\!|_{C})<0\le \beta^{C}_{x}(\succ\!\!|_{C})$, so neither
the Baldwin rule nor the strict Nanson rule eliminates $x$ at $C$. If $\beta^{C}(\succ\!\!|_{C})=\bm{0}_{C}$, both rules stop and $C$ becomes the top tier.
Hence both rules place every intermediate Condorcet winner in the top tier.
Symmetrically, if $x$ is an intermediate Condorcet loser at $\succ$, then
$\beta^{B}_{x}(\succ)<0$, so both the strict Nanson rule and the Nanson rule eliminate
$x$ in the first round and place it in the bottom tier.

The Nanson rule does not share the winner-side protection, and the Baldwin rule does not
share the loser-side protection.
The strict Nanson rule consequently combines two properties that are separated by the Baldwin and Nanson rules.
It inherits the winner-side protection of the Baldwin rule and the loser-side protection of the Nanson rule.

\subsection{Local identification and recursive propagation}
\label{subsec:local-global-discussion}

The proofs separate local identification from recursive propagation.
\textit{Neutrality}, the relevant \textit{consistency} conditions, \textit{Weak Faithfulness}, and the boundary axioms determine the current bottom tier.
Then, \textit{Bottom Independence} determines how the rule continues after that tier has been removed.
Notice that \textit{Bottom Independence} does not itself identify a Borda-score region.
Its role is to convert a local characterization of the bottom tier into a characterization of the complete social ranking.

This separation suggests that weaker continuation conditions may be sufficient when the object of interest is a winner correspondence rather than a complete ranking.
While a complete-ranking characterization must determine every successive tier, a winner characterization may require only enough recursive structure to identify the final surviving alternatives.
Developing such correspondence versions of the present results is a natural direction for future research.

\subsection{Alternative proof methods}
\label{subsec:alternative-proofs}

The proofs in this paper use the linear-algebraic method initiated by \citet{Young1974}.
The cycle--cocycle decomposition identifies the centered Borda-score vector as the relevant representation of the current bottom tier.
The canonical-star argument then yields the lower-interval structure and the three canonical elimination regions.

A complementary approach works directly with preference-profile transformations.
\citet{HanssonSahlquist1976} use an amplification technique to obtain a direct proof of Young's characterization of the Borda rule.

For the Baldwin rule, \citet{GotoNakada2026note} provide a corresponding profile-based argument.
Whether a common amplification proof can cover the general recursive class together with the strict Nanson and Nanson characterizations remains open.
Such a proof would provide a profile-based counterpart to the unified
geometric argument developed here.


\begin{center}
\Large{{\bf Appendix}}
\end{center}

\numberwithin{definition}{section}
\numberwithin{theorem}{section}
\numberwithin{lemma}{section}
\numberwithin{proposition}{section}
\numberwithin{corollary}{section}
\numberwithin{example}{section}
\renewcommand{\theequation}{\thesection.\arabic{equation}}

\appendix

\section{Proof of Proposition \ref{prop:local-borda-structure}}\label{app:local-borda-structure}

This appendix proves Proposition \ref{prop:local-borda-structure}, which is a key step for Theorem \ref{thm:recursive-borda}.
The argument first reduces the bottom tier to a correspondence on weighted tournaments.
It then removes the cyclic component and identifies the resulting correspondence on centered Borda-score space.

\begin{lemma}[Bottom-tier margin reduction]
\label{lem:bottom-margin-reduction}

Suppose that $F$ satisfies \textit{Bottom Consistency} and \textit{Cancellation}.
Then, for every feasible set $B$ and every pair of profiles $\succ,\succ'\,\in\mathcal P(B)$, if  $T(\succ)=T(\succ')$, then $W_F(\succ)=W_F(\succ')$.
Moreover, for every profile $\succ$ and every positive integer $k$, $W_F(k*\!\succ)=W_F(\succ)$.
\end{lemma}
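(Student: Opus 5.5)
The plan is to use \textit{Cancellation} to manufacture "null" electorates whose margin matrix is zero, and then use part~(ii) of \textit{Bottom Consistency} to add and remove them without changing the bottom tier. For a profile $\succ^N\in\mathcal P^N(B)$, write $\rho(\succ^N)$ for the reversed profile on a fresh electorate disjoint from everything in play: each voter in a copy of $N$ holds the inverse ranking of the corresponding voter in $N$. Then $T(\rho(\succ^N))=-T(\succ^N)$, so $T(\succ^N\cup\rho(\succ^N))=0_B$, and \textit{Cancellation} gives $W_F(\succ^N\cup\rho(\succ^N))=B$. Only the margin matrix of such auxiliary electorates matters, so their voter labels are irrelevant.

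First I treat the case where $\succ^N$ and $\succ'^{N'}$ have the same margin matrix and $N\cap N'=\emptyset$. I take $Q=\rho(\succ'^{N'})$ on a fresh electorate $M$ disjoint from $N\cup N'$.
\begin{itemize}
\item Then $T(\succ^N\cup Q)=T(\succ^N)-T(\succ'^{N'})=0_B$, so $W_F(\succ^N\cup Q)=B$ by \textit{Cancellation}.
\item Part~(ii) of \textit{Bottom Consistency}, applied to the disjoint pair $\succ'^{N'}$ and $\succ^N\cup Q$, gives $W_F(\succ'^{N'}\cup\succ^N\cup Q)=W_F(\succ'^{N'})$.
\item The same combined profile also equals $(\succ'^{N'}\cup Q)\cup\succ^N$, and $\succ'^{N'}\cup Q$ has zero margins, so $W_F(\succ'^{N'}\cup Q)=B$. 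Part~(ii) then gives that the combined bottom tier equals $W_F(\succ^N)$.
\end{itemize}
Hence $W_F(\succ^N)=W_F(\succ'^{N'})$.

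If $N$ and $N'$ overlap, which includes the case $N=N'$, I route through a third profile $\succ''$. I take $\succ''$ to be a copy of $\succ^N$ on a fresh electorate disjoint from $N\cup N'$, so that $T(\succ'')=T(\succ^N)=T(\succ'^{N'})$. Applying the disjoint case twice gives $W_F(\succ^N)=W_F(\succ'')=W_F(\succ'^{N'})$.

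For replication, $k*\!\succ$ is the union of $k$ pairwise disjoint copies $\succ^{(1)},\dots,\succ^{(k)}$, each with margin matrix $T(\succ)$. By the first part, every copy satisfies $W_F(\succ^{(j)})=W_F(\succ)$. I then induct on $j$ using part~(iii) of \textit{Bottom Consistency}: if $W_F(\succ^{(1)}\cup\dots\cup\succ^{(j)})=W_F(\succ)=W_F(\succ^{(j+1)})$, then the union with $\succ^{(j+1)}$ has the same bottom tier. This yields $W_F(k*\!\succ)=W_F(\succ)$.

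The main obstacle is bookkeeping rather than mathematics. Because \textit{Anonymity} is not assumed, I must never claim that a relabeled copy of a profile has the same bottom tier except through the margin-matrix argument itself. I also have to check at every application of \textit{Bottom Consistency} that the electorates are disjoint; the detour through the fresh profile $\succ''$ is exactly what handles overlapping electorates.
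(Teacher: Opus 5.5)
Your proof is correct and follows essentially the same route as the paper: cancel with a zero-margin auxiliary electorate, apply part~(ii) of \textit{Bottom Consistency} from both sides of the combined profile, pass through a fresh copy when the electorates overlap, and induct with part~(iii) for replication. The only difference is in how the canceling electorate is built: you reverse each ballot, whereas the paper combines a copy $\widehat\succ$ with a profile realizing $-2T(\succ^N)$ via Debord's theorem, so your version avoids that appeal.
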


\begin{proof}
Fix profiles $\succ^N$ and $\succ^{N'}$ on the same feasible set $B$ such that $T(\succ^N)=T(\succ^{N'})$.
Suppose first that $N$ and $N'$ are disjoint.
Choose a profile $\widehat\succ$ on an electorate satisfying $T(\widehat\succ)=T(\succ^N)$.
The matrix $-2T(\succ^N)$ is an even integer-valued skew-symmetric matrix.
Hence, by the majority-margin realizability theorem of
\citet{Debord1987}, there exists a profile $\succ^\ast$ on another electorate satisfying $T(\succ^\ast)=-2T(\succ^N)$.
All four electorates may be chosen to be pairwise disjoint.

Define $\zeta=\; \succ^N \! \cup \;\widehat\succ \;\cup\succ^\ast$ and $\zeta'=\; \succ^{N'} \! \cup \; \widehat\succ \; \cup\succ^\ast$.
Both profiles have zero majority-margin matrices.
\textit{Cancellation} gives $F(\zeta)=F(\zeta')=(B)$.

Now define
\[
\eta = \; \succ^N \! \cup \succ^{N'} \!
\cup \; \widehat\succ \; \cup \succ^\ast \!.
\]
The profile $\eta$ can be written as $\succ^N \! \cup \; \zeta'$ and as $\succ^{N'}\! \cup \; \zeta$.
Then, (ii) of \textit{Bottom Consistency} therefore gives $W_F(\succ^N)=W_F(\eta)=W_F(\succ^{N'})$.

If the original electorates overlap, choose disjoint copies of the two profiles.
The disjoint-electorate argument first identifies each original profile with its copy and then identifies the two copies.
Thus, the bottom tier depends only on the majority-margin matrix.

We next prove homogeneity of $W_F$.
Fix $\succ\,\in\mathcal P(B)$.
We proceed by induction on $k$.
The result is immediate for $k=1$.

Suppose that $W_F(k* \! \succ)=W_F(\succ)$.
Let $\widehat\succ$ be a disjoint copy of $\succ$.
Margin dependence of $W_F$ proved above gives
\[
W_F(\widehat\succ)=W_F(\succ).
\]
Then, (iii) of \textit{Bottom Consistency} therefore yields
\[
W_F((k+1)*\! \succ) = W_F(\succ).
\]
The result follows by induction.

\end{proof}

We then extend the bottom correspondence to weighted tournaments.
For every feasible set $B$, let $\mathcal V(B)=\left\{T\in\mathbb Q^{B\times B}:T^\top=-T\right\}$ denote the vector space of rational weighted tournaments on $B$.

\begin{lemma}
\label{lem:bottom-tournament-extension}
Suppose that $F$ satisfies the conclusions of
Lemma~\ref{lem:bottom-margin-reduction}.
Then, for every feasible set $B$, there exists a unique correspondence
$\Omega_B:\mathcal V(B)\longrightarrow 2^{B}\setminus\{\emptyset\}$ such that
\[
  \Omega_B(\theta)=W_F(\succ)
  \quad
  \text{whenever } \theta\in\mathcal V(B),\ \succ\,\in\mathcal P(B),
  \text{ and } k\in\mathbb N \text{ satisfy } T(\succ)=k\theta .
\]
In particular, $\Omega_B(T(\succ))=W_F(\succ)$ for every $\succ\,\in\mathcal P(B)$.
Moreover, $\Omega_B(\lambda\theta)=\Omega_B(\theta)$ for every positive rational number $\lambda$.
\end{lemma}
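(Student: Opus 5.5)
The plan is to define $\Omega_B$ by realization and then check that the definition does not depend on the chosen realization. For $\theta\in\mathcal V(B)$, choose a positive integer $k$ such that every off-diagonal entry of $k\theta$ is an even integer; this is possible because $\theta$ has finitely many rational entries. By the majority-margin realizability theorem of \citet{Debord1987}, as already used in the proofs of Theorem~\ref{thm:condorcet-safe} and Lemma~\ref{lem:bottom-margin-reduction}, there is a profile $\succ\,\in\mathcal P(B)$ with $T(\succ)=k\theta$. We then set $\Omega_B(\theta)=W_F(\succ)$. This value is a nonempty subset of $B$, since bottom tiers are nonempty.

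The key step is well-definedness. Suppose $T(\succ)=k\theta$ and $T(\succ')=k'\theta$ for positive integers $k,k'$. Replicating gives
\[
T(k'*\!\succ)=kk'\theta=T(k*\!\succ').
\]
The margin-dependence conclusion of Lemma~\ref{lem:bottom-margin-reduction} then yields $W_F(k'*\!\succ)=W_F(k*\!\succ')$. The homogeneity conclusion of the same lemma removes the replication factors on both sides, so $W_F(\succ)=W_F(\succ')$.

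This argument shows at once that the value is independent of the choice of $k$ and of the realizing profile, and that the defining property holds for every triple $(\theta,\succ,k)$ with $T(\succ)=k\theta$. Uniqueness is immediate, because every $\theta$ admits at least one such triple and any correspondence with the stated property must take the value $W_F(\succ)$ there. Taking $k=1$ and $\theta=T(\succ)$ gives $\Omega_B(T(\succ))=W_F(\succ)$.

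For scale invariance, write $\lambda=p/q$ with $p,q$ positive integers. Let $T(\succ)=k\theta$ for some profile $\succ$ and positive integer $k$. Then
\[
T(p*\!\succ)=pk\theta=(qk)(\lambda\theta),
\]
so by well-definedness $\Omega_B(\lambda\theta)=W_F(p*\!\succ)$. Homogeneity gives $W_F(p*\!\succ)=W_F(\succ)=\Omega_B(\theta)$.

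The only delicate point is the well-definedness step, specifically the cross-replication trick that reduces two different scalings to a common matrix. Once that is in place, the rest is bookkeeping.
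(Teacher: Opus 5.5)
Your proposal is correct and follows essentially the same route as the paper. You realize $k\theta$ via Debord's theorem and get well-definedness from the cross-replication identity $T(k'*\!\succ)=kk'\theta=T(k*\!\succ')$, combined with the margin-dependence and homogeneity conclusions of Lemma~\ref{lem:bottom-margin-reduction}; your explicit treatment of uniqueness and of the scaling $\lambda=p/q$ just spells out what the paper summarizes as ``the same argument.''
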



\begin{proof}
Take any $\theta\in\mathcal V(B)$ and a positive integer $k$ such that every off-diagonal entry of $k\theta$ is an even integer.
By \citet{Debord1987}, there exists a profile $\succ$ satisfying $T(\succ)=k\theta$.

Suppose that another profile $\succ'$ and another positive integer $k'$ satisfy $T(\succ')=k'\theta$.
Then, $T(k'*\!\succ)=kk'\theta=T(k*\!\succ')$.
By Lemma~\ref{lem:bottom-margin-reduction}, 
$W_F(\succ)=W_F(k'*\!\succ)=W_F(k*\!\succ')=W_F(\succ')$, which implies that $\Omega_B$ is well defined.
The same argument gives positive rational homogeneity, that is, $\Omega_B(\lambda \theta)=\Omega_B(\theta)$ for every positive rational number $\lambda$.
\end{proof}

The next step is to reduce the domain $\mathcal{V}(B)$ to the space of centered Borda-scores.
For $T\in\mathcal V(B)$, define its row-sum vector by $\beta(T)=\left(\sum_{y\in B\setminus\{x\}}T_{xy}\right)_{x\in B}$.
For $s\in\mathcal S(B)$ and $q=|B|$, define
$K_B(s)\in\mathcal V(B)$ by
\[
[K_B(s)]_{xy}
=
\frac{s_x-s_y}{q}.
\]
A direct calculation gives $\beta(K_B(s))=s$.
For three distinct alternatives $x,y,z\in B$, let $C^{xyz}$ denote the unit three-cycle tournament.
We use the standard cycle--cocycle decomposition of weighted tournaments
\citep[Mathematical Comments 1, 2 and 4]{Zwicker1991}; see also
\citet{BrandlPeters2019} and \citet{Zwicker2018}.

\begin{lemma}[Cycle--cocycle decomposition]\label{lem:cycle-cocycle}
For every $T\in\mathcal{V}(B)$ there is a unique $C\in\mathcal{V}(B)$ with
$\beta(C)=\bm{0}_{B}$ such that $T=K_{B}(\beta(T))+C$. If $|B|\ge 3$, the subspace
$\{C\in\mathcal{V}(B):\beta(C)=\bm{0}_{B}\}$ is spanned by the unit three-cycle
tournaments.
\end{lemma}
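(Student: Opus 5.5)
The plan is to treat the row-sum map $\beta:\mathcal V(B)\to\mathcal S(B)$ as a linear map and to identify its kernel. First I check that $\beta$ lands in $\mathcal S(B)$: by skew-symmetry $\sum_x\sum_{y\neq x}T_{xy}=0$. Next I check that $K_B(s)$ is a right inverse. It is skew-symmetric and rational, and
\[
\beta(K_B(s))_x=\sum_{y\neq x}\frac{s_x-s_y}{q}=\frac{(q-1)s_x-(-s_x)}{q}=s_x,
\]
since $\sum_{y\neq x}s_y=-s_x$. Existence of the decomposition then follows by setting $C=T-K_B(\beta(T))$. By linearity $\beta(C)=\beta(T)-\beta(T)=\bm 0_B$. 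Uniqueness is immediate, because $C$ is forced to equal $T-K_B(\beta(T))$ once the first summand is fixed.

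For the spanning claim I will argue by dimension counting together with an explicit generation step. Since $\beta$ is surjective onto $\mathcal S(B)$, its kernel has dimension
\[
\binom q2-(q-1)=\binom{q-1}2.
\]
Each unit three-cycle $C^{xyz}$ lies in the kernel, because every vertex has one outgoing and one incoming unit edge. So it suffices to show that the three-cycles span a space of dimension at least $\binom{q-1}{2}$. Fix a base alternative $a\in B$ and consider the cycles $C^{ayz}$ for the pairs $\{y,z\}\subseteq B\setminus\{a\}$, with a fixed orientation for each pair. There are exactly $\binom{q-1}2$ of them. They are linearly independent because $C^{ayz}$ is the only one among them with a nonzero $(y,z)$ entry. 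This gives the claim.

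Alternatively, I can give a direct reduction that avoids dimension counting. Take any $C$ in the kernel and subtract $\sum_{y<z}C_{yz}C^{ayz}$, with orientations chosen so that the $(y,z)$ entry of $C^{ayz}$ is $+1$. The result is supported only on edges incident to $a$. Its row sums vanish, so each entry $(y,a)$ equals the row sum of $y$ and is therefore zero. Hence the remainder is zero.

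The main obstacle is minor. It is keeping the orientation conventions of $C^{xyz}$ consistent so that the $(y,z)$ entry is $+1$, and noting that $q\ge3$ is needed for these cycles to exist. Everything else is routine linear algebra over $\mathbb Q$.
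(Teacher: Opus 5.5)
Your proof is correct. The paper gives no argument for this lemma; it imports it from the cycle--cocycle decomposition of \citet{Zwicker1991}. In that framework $\mathcal V(B)$ is split, under the inner product $\langle T,T'\rangle=\sum_{x<y}T_{xy}T'_{xy}$, into two pieces. One is the cocyclic subspace of ``gradient'' tournaments $(f_x-f_y)_{x,y}$, which is exactly the image of $K_B$. The other is its orthogonal complement, the cyclic subspace $\ker\beta$, and the three-cycles span it.

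You dispense with the inner product altogether. The single identity $\beta\circ K_B=\mathrm{id}_{\mathcal S(B)}$ gives existence, surjectivity of $\beta$, and $\dim\ker\beta=\binom{q-1}{2}$. Your family $\{C^{ayz}\}$ is the fundamental-cycle basis of the complete graph relative to the star spanning tree at $a$. Of your two spanning arguments, the peeling one is the more useful for this paper. It produces the explicit rational coefficients $C_{yz}$, which is the form invoked in the proof of Lemma~\ref{lem:cyclic-indifference}: a rational combination, followed by orientation reversal to make the coefficients nonnegative.

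The citation additionally provides orthogonality of the two components, i.e.\ that $K_B(\beta(T))$ is the orthogonal projection of $T$ onto the cocyclic subspace. The paper never uses this. It also follows in one line from your setup, since $\sum_{x<y}\frac{s_x-s_y}{q}C_{xy}=\frac1q\sum_{x}s_x\beta(C)_x=0$ whenever $\beta(C)=\bm 0_B$. You are right that the uniqueness clause, as the lemma is phrased, is immediate.
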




Under \textit{Neutrality}, \textit{Bottom Consistency}, and \textit{Cancellation}, the corresponding properties are inherited by $\Omega$.
In particular,
\[
\Omega_{\sigma(B)}(\sigma(T))
=
\sigma\bigl(\Omega_B(T)\bigr),
\]
and electorate union is replaced by matrix addition.
By Lemma~\ref{lem:cycle-cocycle}, we can show that the mapping $\Omega$ satisfies a variant of \textit{Cancellation} in the following sense.

\begin{lemma}[Cyclic indifference]
\label{lem:cyclic-indifference}
Suppose that $F$ satisfies \textit{Neutrality}, \textit{Bottom Consistency}, and \textit{Cancellation}.
Then, $\Omega_B(C)=B$ for every $C\in\mathcal V(B)$ satisfying $\beta(C)=\bm 0_B$.
\end{lemma}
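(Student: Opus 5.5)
The plan is to prove the claim first for a single unit three-cycle, using a symmetry argument, and then extend it to a general cyclic $C$ by additivity through Bottom Consistency. Throughout I use the score-free correspondence $\Omega_B$ of Lemma~\ref{lem:bottom-tournament-extension}. From it I need three facts.
\begin{enumerate}[(a)]
\item Neutrality transfers: $\Omega_{\sigma(B)}(\sigma(T))=\sigma(\Omega_B(T))$.
\item Conditions (ii) and (iii) of \textit{Bottom Consistency} transfer to matrix addition. If $\Omega_B(T)=B$, then $\Omega_B(T+T')=\Omega_B(T')$. If $\Omega_B(T)=\Omega_B(T')$, then $\Omega_B(T+T')=\Omega_B(T)$.
\item $\Omega_B(0_B)=B$.
\end{enumerate}
For (b), given rational $T,T'$, choose a common $k$ such that $kT$ and $kT'$ have even integer entries. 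Realize them by profiles on disjoint electorates via \citet{Debord1987}. Their union realizes $k(T+T')$, and positive homogeneity of $\Omega_B$ lets us drop the factor $k$. Fact (c) is \textit{Cancellation} applied to any profile with zero margins.

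\textbf{Step 1: a single three-cycle.} If $|B|\le2$, every skew-symmetric matrix with zero row sums is $0_B$, so the claim is (c). Assume $|B|\ge3$, fix distinct $x,y,z$, and let $W=\Omega_B(C^{xyz})$.
\begin{itemize}
\item Invariance of $W$. The rotation $x\mapsto y\mapsto z\mapsto x$ fixes $C^{xyz}$, and so does every permutation of $B\setminus\{x,y,z\}$. By (a), $W$ is invariant under all of these.
\item Shape of $W$. Consequently $W\cap\{x,y,z\}$ is either empty or all of $\{x,y,z\}$. Likewise $W\setminus\{x,y,z\}$ is either empty or all of $B\setminus\{x,y,z\}$.
\item The transposition $\tau=(x\,y)$. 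It maps $C^{xyz}$ to $C^{yxz}=-C^{xyz}$. By (a), $\Omega_B(-C^{xyz})=\tau(W)$, and $\tau(W)=W$ because $W$ has the shape just described.
\item Conclusion. The second part of (b) now gives $\Omega_B(0_B)=\Omega_B(C^{xyz}+(-C^{xyz}))=W$. By (c), $W=B$.
\end{itemize}

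\textbf{Step 2: general cyclic $C$.} By Lemma~\ref{lem:cycle-cocycle}, $C=\sum_i\lambda_iC^{x_iy_iz_i}$ with rational $\lambda_i$. Whenever $\lambda_i<0$, replace $\lambda_iC^{x_iy_iz_i}$ by $|\lambda_i|C^{y_ix_iz_i}$, so that all coefficients are positive. Multiply by a positive integer to make them integers; positive homogeneity of $\Omega_B$ ensures this does not change $\Omega_B$. Now $C$, up to a positive scalar, is a finite sum of unit three-cycles, each with $\Omega_B=B$ by Step 1. Applying the first part of (b) inductively over the summands gives $\Omega_B(C)=B$.

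I expect the main obstacle to be the symmetry step in Step 1: checking that every set invariant under the stabilizer of $C^{xyz}$ is also fixed by the orientation-reversing transposition. That is exactly the step that makes \textit{Bottom Consistency}~(iii) applicable to $C^{xyz}$ and $-C^{xyz}$. The remaining care is verifying that the transfer to rational matrices in (b) is legitimate, which rests on Lemma~\ref{lem:bottom-tournament-extension}.
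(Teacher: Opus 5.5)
Your proof is correct and follows essentially the same route as the paper. Invariance of $\Omega_B(C^{xyz})$ under the rotation of $x,y,z$ and under permutations of the outside alternatives forces invariance under the transposition $(x\,y)$; Bottom Consistency~(iii) applied to $C^{xyz}$ and $-C^{xyz}$, together with Cancellation, then gives $\Omega_B(C^{xyz})=B$, and the general case follows from the cycle--cocycle decomposition, orientation reversal, positive homogeneity, and repeated use of Bottom Consistency~(ii). Your explicit check that (ii) and (iii) transfer to rational matrix addition, via Debord realizations on disjoint electorates, fills in a step the paper only asserts.
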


\begin{proof}
The conclusion is immediate when $|B|\leq2$.
Suppose that $|B|\geq3$.
Consider first a unit three-cycle $C^{xyz}$.
Neutrality under cyclic permutations of $x,y,z$ implies that either all three cycle alternatives or none of them belong to the bottom tier.
Also, Neutrality under permutations of the alternatives outside the cycle implies the same symmetry among all outside alternatives.

Let $\sigma$ exchange $x$ and $y$: $\sigma(x)=y$ and $\sigma(y)=x$.
The transposition leaves each of the two symmetry classes invariant and satisfies $\sigma(C^{xyz})=-C^{xyz}$.
\textit{Neutrality} therefore gives $\Omega_B(-C^{xyz})=\Omega_B(C^{xyz})$.
If this common bottom tier were a proper subset of $B$, (iii) of  \textit{Bottom Consistency} would give the same proper bottom tier at
\[
C^{xyz}+(-C^{xyz})=0,
\]
which contradicts \textit{Cancellation}.
Hence, $\Omega_B(C^{xyz})=B$.

Now let $C$ be an arbitrary zero-row-sum tournament.
By Lemma~\ref{lem:cycle-cocycle}, it is a rational linear combination of unit three-cycle tournaments.
After reversing the orientation of cycles with negative coefficients, it can be written as a nonnegative rational combination of such
tournaments.
Positive homogeneity and repeated applications of part (ii) of \textit{Bottom Consistency} imply $\Omega_B(C)=B$.
\end{proof}

Define $\Delta_B(s)=\Omega_B(K_B(s))$ for every $s\in\mathcal S(B)$.
The correspondence $\Delta_B$ is neutral and positively homogeneous, so that it satisfies $\Delta_B(\bm 0_B)=B$.
In addition, Lemma~\ref{lem:cycle-cocycle} directly implies that the bottom $W_F$ depends only on the Borda score.

\begin{lemma}
\label{lem:bottom-tier-borda-dependence}
For every $T\in\mathcal V(B)$,
\[
\Omega_B(T)
=
\Delta_B(\beta(T)).
\]
Consequently, for every profile $\succ\,\in\mathcal P(B)$,
\[
W_F(\succ)
=
\Delta_B\bigl(\beta^B(\succ)\bigr).
\]

\end{lemma}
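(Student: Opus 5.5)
Fix $T\in\mathcal V(B)$ and write $T=K_B(\beta(T))+C$ with $\beta(C)=\bm 0_B$, using the decomposition of Lemma~\ref{lem:cycle-cocycle}. By Lemma~\ref{lem:cyclic-indifference}, $\Omega_B(C)=B$. The plan is to show that adding a weighted tournament with complete bottom tier does not change the bottom tier. In matrix form, this is part (ii) of \textit{Bottom Consistency}: if $\Omega_B(X)=B$, then $\Omega_B(X+Y)=\Omega_B(Y)$. This would give $\Omega_B(T)=\Omega_B(K_B(\beta(T)))=\Delta_B(\beta(T))$.

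The first step is to make the matrix version of part (ii) rigorous. Given rational $X,Y\in\mathcal V(B)$, choose a common positive integer $k$ such that $kX$ and $kY$ have even off-diagonal entries. By \citet{Debord1987}, there are profiles $\succ^N$ and $\succ^{N'}$ on disjoint electorates with $T(\succ^N)=kX$ and $T(\succ^{N'})=kY$. Then $T(\succ^N\cup\succ^{N'})=k(X+Y)$. Lemma~\ref{lem:bottom-tournament-extension} then gives
\[
\Omega_B(X)=W_F(\succ^N),\qquad \Omega_B(Y)=W_F(\succ^{N'}),\qquad \Omega_B(X+Y)=W_F(\succ^N\cup\succ^{N'}).
\]
Part (ii) of \textit{Bottom Consistency} therefore transfers directly: if $\Omega_B(X)=B$, then $\Omega_B(X+Y)=\Omega_B(Y)$. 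Apply this with $X=C$ and $Y=K_B(\beta(T))$. The case $|B|\le 2$ is included, since there $C=0$ and $\Omega_B(0)=B$ by \textit{Cancellation}.

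The second statement follows by taking $T=T(\succ)$. Lemma~\ref{lem:bottom-tournament-extension} gives $W_F(\succ)=\Omega_B(T(\succ))$. It remains to note that $\beta(T(\succ))=\beta^B(\succ)$, which holds by the definition of the centered Borda score as the row sum of the majority-margin matrix.

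The only delicate point is the common scaling in the realizability step. The constant $k$ must be chosen simultaneously for $X$ and $Y$, so that the union of the two profiles realizes exactly $k(X+Y)$. Homogeneity of $\Omega_B$ from Lemma~\ref{lem:bottom-tournament-extension} makes this choice harmless. Everything else is a direct application of the earlier lemmas.
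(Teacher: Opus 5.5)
Your proof is correct and follows the paper's argument: decompose $T=K_B(\beta(T))+C$ via Lemma~\ref{lem:cycle-cocycle}, use Lemma~\ref{lem:cyclic-indifference} to get $\Omega_B(C)=B$, and apply part (ii) of \textit{Bottom Consistency} to conclude $\Omega_B(T)=\Omega_B(K_B(\beta(T)))=\Delta_B(\beta(T))$. Your one addition is that you spell out how part (ii) transfers from profiles to rational tournaments, using a common scaling $k$ and the realizability theorem of \citet{Debord1987}; the paper only asserts this in one line (``electorate union is replaced by matrix addition'').
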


\begin{proof}
By Lemma~\ref{lem:cycle-cocycle}, write $T=K_B(\beta(T))+C$, where $\beta(C)=\bm 0_B$.
Moreover, cyclic indifference gives $\Omega_B(C)=B$.
Then, (ii) of \textit{Bottom Consistency} therefore implies $\Omega_B(T)=\Omega_B(K_B(\beta(T)))$.
The conclusion follows from the definition of $\Delta_B$.
\end{proof}

Combining Lemmas~\ref{lem:bottom-tournament-extension} and \ref{lem:bottom-tier-borda-dependence}, we have shown (i)--(iii) of Proposition \ref{prop:local-borda-structure}.

Next, we show (iv)--(v) of Proposition \ref{prop:local-borda-structure}.
By definition of $\Delta_B$, the following observation immediately holds by repeated application of \textit{Bottom Consistency}.

\begin{lemma}[Preservation of common bottom alternatives]
\label{lem:common-bottom-preservation}
Let $s^1,\ldots,s^\ell\in\mathcal S(B)$.
If $x\in\Delta_B(s^j)$ for every $j \in \{ 1, \dots, \ell\}$, then
\[
x\in
\Delta_B
\left(
\sum_{j=1}^{\ell}s^j
\right).
\]
Moreover, if
$\Delta_B(s^j)=B$
for every $j$, then
\[
\Delta_B
\left( \sum_{j=1}^{\ell}s^j \right) = B.
\]
\end{lemma}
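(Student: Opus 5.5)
The plan is an induction on $\ell$, transferring \textit{Bottom Consistency} from profiles to score space. First I check that $\Delta_B$ inherits parts (i)--(iii) of \textit{Bottom Consistency} for pairs of score vectors. Given $s,t\in\mathcal S(B)$, apply Lemma~\ref{lem:bottom-tournament-extension} together with the realizability theorem of \citet{Debord1987} to the tournaments $K_B(s)$ and $K_B(t)$. Choose a common positive integer $k$ such that $kK_B(s)$ and $kK_B(t)$ have even off-diagonal entries. Then take profiles $\succ^N$ and $\succ^{N'}$ on disjoint electorates with $T(\succ^N)=kK_B(s)$ and $T(\succ^{N'})=kK_B(t)$.

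For these profiles, $W_F(\succ^N)=\Omega_B(kK_B(s))=\Delta_B(s)$, by homogeneity of $\Omega_B$ and Lemma~\ref{lem:bottom-tier-borda-dependence}. Likewise $W_F(\succ^{N'})=\Delta_B(t)$. For the union, $T(\succ^N\cup\succ^{N'})=k\bigl(K_B(s)+K_B(t)\bigr)$, whose row-sum vector is $k(s+t)$. Lemma~\ref{lem:bottom-tier-borda-dependence} together with homogeneity then gives $W_F(\succ^N\cup\succ^{N'})=\Delta_B(s+t)$. Consequently part (i) of \textit{Bottom Consistency} yields: if $x\in\Delta_B(s)\cap\Delta_B(t)$, then $x\in\Delta_B(s+t)$. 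Part (ii) yields: if $\Delta_B(s)=\Delta_B(t)=B$, then $\Delta_B(s+t)=\Delta_B(t)=B$.

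The induction is then routine. The case $\ell=1$ is trivial. For $\ell\geq2$, set $u=\sum_{j=1}^{\ell-1}s^j\in\mathcal S(B)$. The induction hypothesis gives $x\in\Delta_B(u)$, and in the second statement $\Delta_B(u)=B$. Applying the two-vector facts to $u$ and $s^\ell$ completes the step.

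The only delicate point is the realizability step: it must use disjoint electorates and a common scaling factor, so that the union's margin matrix is exactly $k(K_B(s)+K_B(t))$. It must also invoke Lemma~\ref{lem:bottom-tier-borda-dependence} so that the union's bottom tier depends only on the row sums, even though $K_B(s)+K_B(t)=K_B(s+t)$ holds anyway by linearity. Nothing beyond these earlier lemmas is needed.
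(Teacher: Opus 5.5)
Your proof is correct and follows the paper's route. The paper justifies the lemma in one line, by the definition of $\Delta_B$ and repeated application of parts (i) and (ii) of \textit{Bottom Consistency}; your realization of $kK_B(s)$ and $kK_B(t)$ on disjoint electorates, followed by induction on $\ell$, supplies exactly the detail behind that line, and the paper uses the same realization step later for part (vi) of Proposition~\ref{prop:local-borda-structure}. One small point: the appeal to Lemma~\ref{lem:bottom-tier-borda-dependence} is unnecessary, since linearity of $K_B$ gives $T(\succ^N\cup\succ^{N'})=kK_B(s+t)$, and hence $W_F(\succ^N\cup\succ^{N'})=\Omega_B(K_B(s+t))=\Delta_B(s+t)$ directly from Lemma~\ref{lem:bottom-tournament-extension} and the definition of $\Delta_B$.
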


Let us consider the following specific vectors referred to as \textit{star vectors}.
Fix a feasible set $B$ with $q=|B|\geq2$.
For every $x\in B$, define $u_B^x\in\mathcal S(B)$ by
\[
(u_B^x)_y=
\begin{cases}
q-1~\text{if}~y=x,\\
-1~\text{if}~y \neq x.
\end{cases}
\]

\begin{lemma}
\label{lem:canonical-stars}
Suppose that $F$ satisfies \textit{Neutrality}, \textit{Bottom Consistency}, \textit{Weak Faithfulness}, and \textit{Cancellation}.
For every feasible set $B$ with $|B|\geq2$ and every $x\in B$, $\Delta_B(u_B^x)=B\setminus\{x\}$ and $\Delta_B(-u_B^x)=\{x\}$.
\end{lemma}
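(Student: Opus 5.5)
We need to prove $\Delta_B(u^x_B)=B\setminus\{x\}$ and $\Delta_B(-u^x_B)=\{x\}$. The plan is to combine symmetry with Weak Faithfulness and the score-space form of Bottom Consistency.

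\emph{Symmetry restriction.} Any permutation fixing $x$ and $B$ fixes $\pm u^x_B$. By Neutrality of $\Delta_B$, each of $\Delta_B(u^x_B)$ and $\Delta_B(-u^x_B)$ is therefore a union of the classes $\{x\}$ and $B\setminus\{x\}$. Since $u^x_B\neq\bm 0_B$, neither bottom tier is empty.

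\emph{Positive star.} First I would show $\Delta_B(u^x_B)\neq B$ with $x\notin\Delta_B(u^x_B)$. Take a single voter with $x$ on top and the rest in some fixed order. Summing the one-voter profiles over all orderings of $B\setminus\{x\}$ (with $x$ on top) gives a profile whose centered Borda vector is a positive multiple of $u^x_B$. Indeed, each $y\neq x$ gets the same score by symmetry, and $x$ beats everyone in every ballot.

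By Weak Faithfulness, $x\notin W_F$ of each one-voter profile. I need to transfer this to the union. Bottom Consistency (i) only preserves common bottom members, not survivors, so a direct transfer fails.

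Instead I would use the symmetry dichotomy. Suppose $x\in\Delta_B(u^x_B)$; by symmetry the tier is then $\{x\}$ or $B$. Note that $u^x_B=\sum_{y\neq x}(e_x-e_y)\cdot$(constant). More usefully, by Lemma~\ref{lem:bottom-tier-borda-dependence} the one-voter profile itself can be chosen. When $|B|=2$ the one-voter vector is exactly a multiple of $u^x_B$, so Weak Faithfulness gives $x\notin\Delta_B(u^x_B)$ and hence $\Delta_B(u^x_B)=\{y\}$.

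For general $q$, decompose a positive multiple of $u^x_B$ as the score of a one-voter profile $x\succ y_1\succ\cdots$ plus a cyclic-symmetric correction. The cleaner route is the following. Let $\pi$ be the score vector of a single ballot with $x$ first. By Lemma~\ref{lem:bottom-tier-borda-dependence} and Neutrality, all ballots with $x$ on top have vectors $\sigma(\pi)$ for $\sigma$ fixing $x$, and $x\notin\Delta_B(\sigma(\pi))$.

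If $\Delta_B(u^x_B)=B$ held, then Bottom Consistency (ii) would give $\Delta_B(\pi+\sum_{\sigma\neq\mathrm{id}}\sigma(\pi))=\Delta_B(\sum\ldots)$. This leads to an induction on the number of summands, which is awkward.

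So I would instead exclude $x\in\Delta_B(u^x_B)$ by testing $-u^x_B$. I expect this to be the main obstacle, namely deciding which of the two stars to attack first.

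\emph{Negative star and the key pairing.} Use $u^x_B+\sum_{y\neq x}$-type identities: $\sum_{z\in B}u^z_B=\bm 0$, so $-u^x_B=\sum_{y\neq x}u^y_B$. Suppose we know $x\in\Delta_B(u^y_B)$ for every $y\neq x$. Then Lemma~\ref{lem:common-bottom-preservation} gives $x\in\Delta_B(-u^x_B)$. By the same identity, any $z\neq x$ lies in $\Delta_B(u^y_B)$ for all $y\neq z$, but $-u^x_B$ also includes $u^z_B$. So the task reduces to establishing $\Delta_B(u^x_B)=B\setminus\{x\}$.

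For that, it suffices to rule out $\Delta_B(u^x_B)\in\{\{x\},B\}$. The case $B$ is handled as follows. If $\Delta_B(u^y_B)=B$ for all $y$ (by Neutrality), Lemma~\ref{lem:common-bottom-preservation} forces $\Delta_B(s)=B$ for every nonnegative combination of stars. Every $s\in\mathcal S(B)$ is such a combination, since $-u^x=\sum_{y\ne x}u^y$. In particular the one-voter score $\pi$ would give $W_F=B\ni x$, contradicting Weak Faithfulness.

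The case $\{x\}$ is handled similarly. If $\Delta_B(u^y_B)=\{y\}$ for all $y$, write $\pi$ as a positive combination of stars. With $\pi_x=q-1$ maximal and the others decreasing by $2$, we get $\pi=\sum_z c_z u^z_B$ with $c_z\geq0$ and $c_x$ maximal, after adding multiples of $\sum_z u^z=0$. This decomposition has the bottom voter $w$ with $c_w=0$. All terms with $c_z>0$ have bottom $\{z\}$, and these disagree, so the tool needed is to instead show $x$ is bottom via pairing.

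Here I would try choosing $|B|$-cycle-symmetric sums. I anticipate that this subcase is the genuine difficulty, and I would attack it by combining $u^x$ with $\pi$ through Bottom Consistency (iii) applied to neutral images sharing the bottom tier $\{w\}$.

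Once $\Delta_B(u^x_B)=B\setminus\{x\}$ is established, the negative star follows. For $z\neq x$, $-u^x_B$ is a positive multiple of the sum of the $u^y_B$ ($y\neq x$), all of which contain $x$; so $x\in\Delta_B(-u^x_B)$. The tier is not $B$ since $-u^x_B\ne\bm 0$ would otherwise contradict what we showed: a tier equal to $B$ propagates to all score vectors, including $u^x_B$. By the symmetry dichotomy, the tier is therefore exactly $\{x\}$.
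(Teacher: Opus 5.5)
There is a genuine gap in the case $\Delta_B(u_B^x)=\{x\}$ with $|B|\geq3$, which you leave open. The rest holds up. Your symmetry reduction is correct. So is the case $\Delta_B(u_B^x)=B$: every $s\in\mathcal S(B)$ is a nonnegative combination of positive stars, so a one-voter score vector would get bottom tier $B$. The final step from $\Delta_B(u_B^x)=B\setminus\{x\}$ to $\Delta_B(-u_B^x)=\{x\}$ via $-u_B^x=\sum_{y\neq x}u_B^y$ is also correct.

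The routes you sketch for the open case do not close it. These are decomposing the one-voter vector into positive stars, cycle-symmetric sums, and Bottom Consistency (iii) on images sharing a bottom tier $\{w\}$. In this case the positive stars have pairwise disjoint bottom tiers $\{z\}$, so Bottom Consistency (i) says nothing about their sums. Information about positive stars alone therefore cannot produce the contradiction.

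The missing idea is to determine the negative star first, using the pairing $u_B^x+(-u_B^x)=\bm 0_B$ together with $\Delta_B(\bm 0_B)=B$ from Cancellation.
\begin{itemize}
\item If $\Delta_B(-u_B^x)$ were also $\{x\}$, part (iii) of Bottom Consistency would give $\Delta_B(\bm 0_B)=\{x\}$.
\item If $\Delta_B(-u_B^x)$ were $B$, part (ii) would give $\Delta_B(\bm 0_B)=\Delta_B(u_B^x)=\{x\}$.
\end{itemize}
Hence $\Delta_B(-u_B^x)=B\setminus\{x\}$, and by Neutrality $\Delta_B(-u_B^y)=B\setminus\{y\}$ for every $y$.

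Now decompose the one-voter score vector $v$ with $x$ on top into negative stars rather than positive ones: $v=\frac1q\sum_{y\neq x}(v_x-v_y)(-u_B^y)$. You can check this coordinatewise using $(-u_B^y)_z=1-q\,\delta_{yz}$ and $\sum_{y\neq x}(v_x-v_y)=qv_x$. All coefficients are strictly positive because $v_x>v_y$. Each $\Delta_B(-u_B^y)$ with $y\neq x$ contains $x$, so Lemma~\ref{lem:common-bottom-preservation} gives $x\in\Delta_B(v)$, contradicting Weak Faithfulness.

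This is how the paper's proof closes the case. The same pairing also shows, in the paper, that the two star tiers cannot be the same proper subset, and that if one of them is $B$ then so is the other.
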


\begin{proof}
The vector $u_B^x$ is invariant under every permutation of $B\setminus\{x\}$.
\textit{Neutrality} therefore implies that $\Delta_B(u_B^x)$ is one of $\{x\}, B\setminus\{x\}, B$.
The same conclusion holds for $\Delta_B(-u_B^x)$.
Since $u_B^x+(-u_B^x)=\bm 0_B$, the two bottom tiers cannot be the same proper subset.
Otherwise, (iii) of \textit{Bottom Consistency} would contradict $\Delta_B(\bm 0_B)=B$.

If one of the two bottom tiers is $B$, (ii) of \textit{Bottom Consistency} implies that the other is also $B$.
Suppose that both bottom tiers are $B$.
\textit{Neutrality} gives the same conclusion for every alternative in place of
$x$.
Let $v$ be the centered Borda-score vector of a one-voter profile.
Then, we can see that $v=(1/q)\sum_{y\in B}v_yu_B^y$ holds.
Replace each term with a negative coefficient by a positive multiple of $-u_B^y$.
Every component then has bottom tier $B$.
Lemma~\ref{lem:common-bottom-preservation} gives $\Delta_B(v)=B$, which contradicts \textit{Weak Faithfulness}.
Hence, neither star vector has bottom tier $B$.

Suppose, contrary to the claim, that $\Delta_B(u_B^x)=\{x\}$ and $\Delta_B(-u_B^x)=B\setminus\{x\}$.
\textit{Neutrality} gives the analogous assignment for every alternative.
Let $v$ be the centered Borda-score vector of a one-voter profile
$\succ_i$ in which $x$ is top-ranked. We then have $v_x>v_y$ for every $y\neq x$.

Since $\sum_{y\in B}v_y=0$,
\begin{equation}\label{eq:delta}
  \sum_{y\in B\setminus\{x\}}(v_x-v_y)
  =(q-1)v_x-\sum_{y\in B\setminus\{x\}}v_y
  =(q-1)v_x+v_x
  =qv_x .
\end{equation}
Fixing $z\in B$, we have
$(-u_B^{y})_z=1-q\,\delta_{yz}$.
Hence, by (\ref{eq:delta}),
\[
  \Biggl[\;\sum_{y\in B\setminus\{x\}}(v_x-v_y)\bigl(-u_B^{y}\bigr)\Biggr]_z
  =\sum_{y\in B\setminus\{x\}}(v_x-v_y)
   -q(v_x-v_z)\,\delta_{z\neq x}
  =qv_x-q(v_x-v_z)\,\delta_{z\neq x}
  =qv_z ,
\]
where the last equality holds both for $z=x$ and for $z\neq x$.
Therefore
\[
  v=\frac{1}{q}\sum_{y\in B\setminus\{x\}}(v_x-v_y)\bigl(-u_B^{y}\bigr),
\]
and every coefficient $(v_x-v_y)/q$ is strictly positive.
For every $y\neq x$, the set $\Delta_B(-u_B^y)=B\setminus\{y\}$ contains $x$.
Lemma~\ref{lem:common-bottom-preservation} therefore gives $x\in\Delta_B(v)$, which contradicts \textit{Weak Faithfulness}.
Thus, $\Delta_B(u_B^x)=B\setminus\{x\}$ and
$\Delta_B(-u_B^x)=\{x\}$.
\end{proof}

\begin{lemma}\label{lem:positive-survive}
For every $s \in \mathcal{S}(B)$ and every $x \in B$, if $s_x > 0$, then
$x \notin \Delta_B(s)$.
\end{lemma}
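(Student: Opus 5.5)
The plan follows the symmetrization route announced in the sketch after Proposition~\ref{prop:local-borda-structure}. Fix $s\in\mathcal S(B)$ with $s_x>0$, and let $G$ be the group of permutations of $A$ that fix $x$ and map $B$ onto itself. These are exactly the permutations of $B\setminus\{x\}$. Suppose, for contradiction, that $x\in\Delta_B(s)$.

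\emph{Step 1 (orbit).} For every $\sigma\in G$, Proposition~\ref{prop:local-borda-structure}(ii) gives $\Delta_B(\sigma(s))=\sigma(\Delta_B(s))$. Since $\sigma(x)=x$, this set contains $x$. So $x$ lies in the bottom tier of every vector $\sigma(s)$ in the orbit of $s$.

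\emph{Step 2 (preserve $x$ in the sum).} Apply Lemma~\ref{lem:common-bottom-preservation} to the family $(\sigma(s))_{\sigma\in G}$. This gives $x\in\Delta_B\bigl(\sum_{\sigma\in G}\sigma(s)\bigr)$.

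\emph{Step 3 (compute the sum).} In the sum, the $x$-coordinate is $|G|\,s_x$. Every coordinate $y\neq x$ takes the common value $\frac{|G|}{q-1}\sum_{z\neq x}s_z=-\frac{|G|}{q-1}s_x$, where $q=|B|$. Hence
\[
\sum_{\sigma\in G}\sigma(s)=\frac{|G|\,s_x}{q-1}\,u_B^x,
\]
which is a positive rational multiple of the star vector $u_B^x$ because $s_x>0$.

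\emph{Step 4 (contradiction).} By homogeneity (Proposition~\ref{prop:local-borda-structure}(iii)) and Lemma~\ref{lem:canonical-stars}, the bottom tier of this sum is $\Delta_B(u_B^x)=B\setminus\{x\}$, which does not contain $x$. This contradicts Step 2, so $x\notin\Delta_B(s)$.

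Two side cases need care. When $|B|=1$, no $s_x>0$ exists, so the claim is vacuous. When $|B|=2$, the group $G$ is trivial and $s$ is already a positive multiple of $u_B^x$, so Step 4 applies directly.

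The main obstacle is justifying Lemma~\ref{lem:common-bottom-preservation} when the orbit vectors $\sigma(s)$ are score vectors that need not come from actual profiles. I would handle this through the tournament extension: the proof of Lemma~\ref{lem:bottom-tournament-extension} shows that any rational $\theta$ can be scaled by a positive integer so that \citet{Debord1987} realizes it by a profile. I would realize each $K_B(\sigma(s))$ after scaling all of them by one common integer factor, on pairwise disjoint electorates, and then apply part (i) of \textit{Bottom Consistency} repeatedly. The nonemptiness hypothesis of (i) holds at each step because $x$ lies in every intermediate bottom tier. Homogeneity then removes the common scaling factor.
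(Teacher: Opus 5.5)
Your proof is correct and matches the paper's argument: you symmetrize over the permutations fixing $x$, use Neutrality and Lemma~\ref{lem:common-bottom-preservation} to keep $x$ in the bottom tier of the sum, identify that sum as the positive multiple $(q-2)!\,s_x\,u_B^x$ of the star vector (your $|G|\,s_x/(q-1)$), and contradict Lemma~\ref{lem:canonical-stars} via homogeneity. Your added justification of Lemma~\ref{lem:common-bottom-preservation} is also correct; it fills in a step the paper treats as immediate, by scaling all orbit vectors by a common integer, realizing them on disjoint electorates, applying part (i) of \textit{Bottom Consistency} repeatedly, and removing the scaling by homogeneity.
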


\begin{proof}
Suppose, contrary to the claim, that $x \in \Delta_B(s)$. Let $\Pi_x(B)$
denote the group of permutations of $B$ fixing $x$. \textit{Neutrality} gives
$x \in \Delta_B(\sigma(s))$ for every $\sigma \in \Pi_x(B)$. Lemma~\ref{lem:common-bottom-preservation}
therefore gives
\[
  x \in \Delta_B
  \Biggl( \sum_{\sigma \in \Pi_x(B)} \sigma(s) \Biggr).
\]
Let $q = |B|$.
By symmetry,
\[
  \sum_{\sigma \in \Pi_x(B)} \sigma(s) = (q-2)!\, s_x\, u^x_B .
\]
Since $s_x > 0$, positive homogeneity and Lemma \ref{lem:canonical-stars} imply that the bottom tier of this vector is $B \setminus \{x\}$, which is a contradiction.
\end{proof}

\begin{corollary}[Nondegeneracy]\label{cor:nondegeneracy}
For every $s \in \mathcal{S}(B)$, $\Delta_B(s) = B$ if and only if $s = 0_B$.
\end{corollary}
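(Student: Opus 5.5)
Both directions follow directly from results already established.

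For the ``if'' direction, the zero score vector gives full indifference: the paragraph defining $\Delta_B$ notes that positive homogeneity and neutrality yield $\Delta_B(\bm 0_B)=B$. Concretely, $\bm 0_B=K_B$-preimage of the zero tournament, $\Omega_B(0)=W_F$ of a profile with zero margin matrix, and \textit{Cancellation} makes this $B$. Lemma~\ref{lem:cyclic-indifference} with $C=0$ gives the same conclusion.

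For the ``only if'' direction, suppose $s\neq\bm 0_B$. Since $\sum_{a\in B}s_a=0$ and $s$ is nonzero, some coordinate satisfies $s_x>0$. Lemma~\ref{lem:positive-survive} then gives $x\notin\Delta_B(s)$, so $\Delta_B(s)\neq B$. Taking the contrapositive, $\Delta_B(s)=B$ forces $s=\bm 0_B$.

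The argument involves no real obstacle. The substantive work lies in Lemma~\ref{lem:positive-survive}, which relies on the symmetrization identity $\sum_{\sigma\in\Pi_x(B)}\sigma(s)=(q-2)!\,s_x\,u_B^x$ and on the star-vector Lemma~\ref{lem:canonical-stars}. The only detail to check here is the edge case $|B|=1$, where $\mathcal S(B)=\{\bm 0_B\}$ and both sides of the equivalence hold trivially. For $|B|\ge2$, the star vectors are defined and the lemma applies.
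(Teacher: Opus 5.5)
Your proposal is correct and is essentially the paper's proof. \emph{Cancellation} gives $\Delta_B(\bm 0_B)=B$, and for $s\neq\bm 0_B$ the zero-sum condition yields some $s_x>0$, so Lemma~\ref{lem:positive-survive} gives $x\notin\Delta_B(s)$; your remark on the case $|B|=1$ is a harmless check that the paper leaves implicit.
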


\begin{proof}
Cancellation gives $\Delta_B(0_B) = B$. Conversely, suppose that
$s \neq 0_B$. Since $\sum_{a \in B} s_a = 0$, there exists $x \in B$ with
$s_x > 0$. Lemma~\ref{lem:positive-survive} gives $x \notin \Delta_B(s)$,
and hence $\Delta_B(s) \neq B$.
\end{proof}

We now show (v) of Proposition \ref{prop:local-borda-structure}.
\begin{lemma}
\label{lem:derived-lower-interval}
For every $s\in\mathcal S(B)$ and every $a,b\in B$, if $a\in\Delta_B(s)$ and $s_b\leq s_a$, then $b\in\Delta_B(s)$.
\end{lemma}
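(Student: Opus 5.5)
The plan is to compare $s$ with its image under the transposition of $a$ and $b$, and to absorb the difference between the two vectors into a sum of two canonical star vectors. Lemma~\ref{lem:common-bottom-preservation} then transfers membership of $b$ in the bottom tier from the pieces to $s$.

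Fix $s\in\mathcal S(B)$ and $a,b\in B$ with $a\in\Delta_B(s)$ and $s_b\le s_a$. If $a=b$ there is nothing to prove, so assume $a\neq b$; then $|B|\ge 2$, so the star vectors are defined. Let $\tau\in\Pi(A)$ exchange $a$ and $b$ and fix every other alternative. By the neutrality property (ii) of $\Delta_B$, we have $\Delta_B(\tau(s))=\tau(\Delta_B(s))$, and this set contains $\tau(a)=b$. If $s_a=s_b$, then $\tau(s)=s$, so $b\in\Delta_B(s)$ and we are done. Assume from now on that $s_a>s_b$.

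Write $d=s-\tau(s)$. Its $a$-coordinate is $s_a-s_b>0$, its $b$-coordinate is $s_b-s_a$, and all other coordinates are zero. A direct computation gives
\[
u_B^a-u_B^b=q(e_a-e_b),
\]
where $q=|B|$ and $e_a,e_b$ are the unit vectors. Hence
\[
d=\frac{s_a-s_b}{q}\bigl(u_B^a+(-u_B^b)\bigr),
\]
a positive rational multiple of $u_B^a+(-u_B^b)$. Lemma~\ref{lem:canonical-stars} gives $\Delta_B(u_B^a)=B\setminus\{a\}$ and $\Delta_B(-u_B^b)=\{b\}$, and both sets contain $b$. Lemma~\ref{lem:common-bottom-preservation} therefore yields $b\in\Delta_B(u_B^a-u_B^b)$. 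Positive homogeneity, property (iii), then gives $b\in\Delta_B(d)$.

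Finally, $s=\tau(s)+d$, and $b$ lies in both $\Delta_B(\tau(s))$ and $\Delta_B(d)$. A second application of Lemma~\ref{lem:common-bottom-preservation} gives $b\in\Delta_B(s)$.

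The step I expect to need the most care is showing that $b$ lies in the bottom tier of the difference vector $d$. A pure symmetry argument is not enough here: neutrality and Lemma~\ref{lem:positive-survive} only tell us that $a\notin\Delta_B(d)$. They leave open the possibility that $\Delta_B(d)$ consists solely of the zero-score alternatives $B\setminus\{a,b\}$. The decomposition of $d$ into $u_B^a$ and $-u_B^b$, whose bottom tiers share $b$, is what rules this out. Everything else is bookkeeping with properties (ii)--(iii) and Lemma~\ref{lem:common-bottom-preservation}.
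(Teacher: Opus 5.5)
Your proof is correct and takes the same route as the paper: you transpose $a$ and $b$, write $s-\tau(s)$ as a positive multiple of $u_B^a+(-u_B^b)$, whose two bottom tiers both contain $b$, and then apply Lemma~\ref{lem:common-bottom-preservation} twice. The only cosmetic difference is that you apply positive homogeneity after combining the two star vectors, whereas the paper applies it to each piece before combining them.
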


\begin{proof}
Fix $a,b\in B$ satisfying the assumptions, and let $q=|B|$.
The claim is trivial when $a=b$, so assume $a\neq b$.
Let $\sigma_{ab}\in\Pi(A)$ be the transposition of $a$ and $b$.
Since $\sigma_{ab}(B)=B$ and $a\in\Delta_B(s)$, Neutrality gives
$b=\sigma_{ab}(a)\in\Delta_B\bigl(\sigma_{ab}(s)\bigr)$.
If $s_a=s_b$, then $\sigma_{ab}(s)=s$ and the conclusion follows.

Suppose that $s_a>s_b$ and set $d=s_a-s_b>0$.
Define $h\in\mathcal S(B)$ by
\[
  h_a=d,\qquad h_b=-d,\qquad h_z=0
  \quad\text{for every } z\in B\setminus\{a,b\}.
\]
It can be checked here that
\[
  h=\frac{d}{q}\,u_B^{a}+\frac{d}{q}\bigl(-u_B^{b}\bigr).
\]
Since $\Delta_B(u_B^{a})=B\setminus\{a\}\ni b$ and $\Delta_B(-u_B^{b})=\{b\}$, positive homogeneity and Lemma~\ref{lem:common-bottom-preservation} gives $b\in\Delta_B(h)$.
Then, we have $\sigma_{ab}(s)+h=s$, and since the two vectors agree with $s$ off
$\{a,b\}$,
Lemma~\ref{lem:common-bottom-preservation} gives $b\in\Delta_B(s)$.
\end{proof}



\begin{proof}[Proof of Proposition \ref{prop:local-borda-structure}]
By Lemmas \ref{lem:bottom-margin-reduction}--\ref{lem:derived-lower-interval}, we have shown the properties (i)--(v).
It remains to prove the condition (vi). 
Fix $s,t\in S(B)$.
Choose a positive integer $k$ such that both $ks$ and $kt$ are centered Borda-score vectors of profiles on pairwise disjoint electorates. More precisely, by the majority-margin realizability theorem of \citet{Debord1987}, there exist profiles $\succ,\succ'\,\in P(B)$ on disjoint electorates such that $\beta^B(\succ)=ks$ and $\beta^B(\succ')=kt$.
Positive homogeneity, (iii), gives $W_F(\succ)=\Delta_B(s)$, $W_F(\succ')=\Delta_B(t)$, and $W_F(\succ\cup\succ')=\Delta_B(s+t)$.
If $\Delta_B(s)\cap\Delta_B(t)\neq\emptyset$, part (i) of \textit{Bottom Consistency} implies
\[
\Delta_B(s)\cap\Delta_B(t) \subseteq \Delta_B(s+t).
\]
If, in addition, $\Delta_B(s)=\Delta_B(t)$, part (iii) of \textit{Bottom Consistency} gives
\[
\Delta_B(s+t)=\Delta_B(s)=\Delta_B(t).
\]
Hence, condition (vi) holds.
\end{proof}

\section{Proof of Proposition \ref{prop:canonical-local-identification}}
\label{app:canonical-local-identification}

We begin with the following mathematical observation on the existence of a weighted tournament with a prescribed score vector $s\in\mathcal S(B)$.

\begin{lemma}
\label{lem:coordinate-completion}
Let $B$ be a feasible set with $q=|B|\geq2$ and $s\in\mathcal S(B)$.
For any $x\in B$, there exists a rational weighted tournament $T$ satisfying
\[
\beta(T)=s \qquad \text{and} \qquad
T_{xy}=\frac{s_x}{q-1}
\]
for every $y\neq x$.
In particular, if $s_x=0$, the tournament can be chosen so that $x$ is a dummy alternative.

\end{lemma}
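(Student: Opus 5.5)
The plan is to adapt the explicit construction already used in the proof of Lemma~\ref{lem:condorcet-completion}. First I fix the row of $x$ by setting
\[
T_{xy}=\frac{s_x}{q-1},\qquad T_{yx}=-\frac{s_x}{q-1}
\]
for every $y\in B\setminus\{x\}$. This makes the row sum of $x$ equal to $(q-1)\cdot s_x/(q-1)=s_x$, as required.

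Next I complete the tournament on $C=B\setminus\{x\}$. If $q=2$, then $C=\{y\}$ is a singleton. In that case $s_y=-s_x$, and the single entry $T_{yx}=-s_x$ already gives the correct row sum for $y$, so the construction is finished. If $q\geq3$, I define target within-$C$ row sums
\[
r_y=s_y+\frac{s_x}{q-1}\qquad\text{for } y\in C.
\]
These sum to zero, since
\[
\sum_{y\in C}r_y=\sum_{y\in C}s_y+s_x=\sum_{a\in B}s_a=0.
\]
Following Lemma~\ref{lem:condorcet-completion}, with $p=|C|=q-1$, I set $T_{yz}=(r_y-r_z)/p$ for $y,z\in C$; equivalently, this is $K_C(r)$. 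This block is rational and skew-symmetric, and its row sum within $C$ equals $r_y$, because
\[
\sum_{z\in C}\frac{r_y-r_z}{p}=r_y-\frac{1}{p}\sum_{z\in C}r_z=r_y.
\]
The full row sum of $y$ is therefore $r_y+T_{yx}=r_y-s_x/(q-1)=s_y$. Hence $\beta(T)=s$, and $T$ is rational because $s$ is rational.

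Finally, the dummy statement is immediate: if $s_x=0$, then every entry in row $x$ is $0$, so $x$ is a dummy at $T$ in the sense defined above. If this is later to be applied to profiles rather than weighted tournaments, one scales by an integer $k$ that makes all off-diagonal entries even and invokes \citet{Debord1987}; scaling preserves both the dummy property and the proportionality of the score vector.

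There is no real obstacle here. The only points that need care are the degenerate case $q=2$, where $C$ has a single element and $K_C$ is trivially zero, and checking that the within-$C$ targets $r_y$ sum to zero so that the $K_C$ construction applies.
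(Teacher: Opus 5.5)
Your proof is correct and is essentially the paper's own construction: fix row $x$ at $s_x/(q-1)$, shift the targets on $C=B\setminus\{x\}$ by that amount, and complete $C$ with $(r_y-r_z)/(q-1)$. Your separate treatment of $q=2$ is harmless but unnecessary, since the general formula already covers it.
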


\begin{proof}
Set
$r=\frac{s_x}{q-1}$.
For every $y\neq x$, define $T_{xy}=r$ and $T_{yx}=-r$.
Let $C=B\setminus\{x\}$ and define $d_y=s_y+r$ for every $y\in C$.
Since $\sum_{y\in C}d_y= \sum_{y \in C}s_y + s_x = 0$,
define
\[
T_{yz}
=
\frac{d_y-d_z}{q-1}
\]
for every $y,z\in C$.
By construction, the matrix is skew-symmetric.
The row sum of $y\in C$ within $C$ is $d_y$, so that its full row sum is therefore $d_y-r=s_y$.
Similarly, the row sum of $x$ is
$(q-1)r=s_x$.
Thus, $\beta(T)=s$.
If $s_x=0$, then $r=0$ and $x$ is a dummy.
\end{proof}

Recall from Lemma~\ref{lem:positive-survive} that no alternative with a positive centered Borda score belongs to the bottom tier.
By applying \textit{Strong Bottom Consistency}, we can obtain the following.

\begin{lemma}
\label{lem:iterated-bottom-consistency}
Suppose that $\Delta_B$ satisfies \textit{Strong Bottom Consistency}.
Let $s^1,\ldots,s^\ell\in\mathcal S(B)$ satisfy
\[
\bigcap_{j=1}^{\ell}\Delta_B(s^j)
\neq\emptyset.
\]
Then
\[
\Delta_B
\left(
\sum_{j=1}^{\ell}s^j
\right)
=
\bigcap_{j=1}^{\ell}\Delta_B(s^j).
\]
\end{lemma}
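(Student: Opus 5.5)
The plan is induction on $\ell$, using the score-space form of \textit{Strong Bottom Consistency}. First I will transfer the axiom to $\Delta_B$, exactly as condition (vi) of Proposition~\ref{prop:local-borda-structure} was obtained. For $s,t\in\mathcal S(B)$, choose a positive integer $k$ and, by the realizability theorem of \citet{Debord1987}, profiles $\succ,\succ'$ on disjoint electorates with $\beta^B(\succ)=ks$ and $\beta^B(\succ')=kt$. By parts (i) and (iii) of Proposition~\ref{prop:local-borda-structure}, $W_F(\succ)=\Delta_B(s)$, $W_F(\succ')=\Delta_B(t)$ and $W_F(\succ\cup\succ')=\Delta_B(s+t)$. \textit{Strong Bottom Consistency} then gives
\[
\Delta_B(s)\cap\Delta_B(t)\neq\emptyset\ \Longrightarrow\ \Delta_B(s+t)=\Delta_B(s)\cap\Delta_B(t).
\]
This two-term identity is what ``$\Delta_B$ satisfies Strong Bottom Consistency'' means.

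For the induction itself, the case $\ell=1$ is trivial and $\ell=2$ is the identity above. For $\ell\ge3$, set $I=\bigcap_{j=1}^{\ell}\Delta_B(s^j)\neq\emptyset$ and $S=\sum_{j=1}^{\ell-1}s^j$. Because $\bigcap_{j<\ell}\Delta_B(s^j)\supseteq I$ is nonempty, the induction hypothesis gives $\Delta_B(S)=\bigcap_{j<\ell}\Delta_B(s^j)$. Then $\Delta_B(S)\cap\Delta_B(s^\ell)=I\neq\emptyset$, and the two-term identity applied to $S$ and $s^\ell$ yields $\Delta_B(S+s^\ell)=I$, which is the claim.

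The only delicate point is the realizability step: $s$ and $t$ must be scaled by a common $k$ and realized on disjoint electorates. This is handled exactly as in the proof of (vi), by realizing $K_B(ks)$ and $K_B(kt)$ with even integer entries and relabeling the voters of one profile. Everything else is bookkeeping, so I expect no real obstacle.
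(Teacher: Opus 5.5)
Your proposal is correct and follows the paper's proof: induction on $\ell$, where every partial intersection contains the nonempty full intersection, so the two-term Strong Bottom Consistency applies at each step. Your explicit transfer of the axiom from profiles to $\Delta_B$, via realizability and parts (i) and (iii) of Proposition~\ref{prop:local-borda-structure} as in the proof of (vi), simply spells out what the paper takes as the meaning of the hypothesis.
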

\begin{proof}
The result follows by induction on $\ell$.
Every partial intersection contains the nonempty full intersection.
\textit{Strong Bottom Consistency} is therefore applicable at each step.
\end{proof}

We are ready to prove Proposition~\ref{prop:canonical-local-identification}.

\begin{proof}[Proof of
Proposition~\ref{prop:canonical-local-identification}]

We prove the three parts separately.

\medskip
\noindent
\paragraph{(i) The Baldwin rule.}
Fix $s\in\mathcal S(B)$ and let $\underline s=\min_{y\in B}s_y$.
Define
\[
c_y
=
\frac{s_y-\underline s}{|B|}
\]
for every $y\in B$.
Then,  $c_y\geq0$ and $s=\sum_{y\in B}c_yu_B^y$.
The coefficient $c_y$ is positive exactly when $s_y>\underline s$.
For every $y$ with $c_y>0$, Lemma \ref{lem:canonical-stars} gives
\[
\Delta_B(c_yu_B^y)
=
B\setminus\{y\}.
\]
Hence, the intersection of these bottom tiers is
\[
\arg\min_{y\in B}s_y,
\]
which is nonempty.
Therefore, Lemma \ref{lem:iterated-bottom-consistency} gives
\[
\Delta_B(s)=\arg\min_{y\in B}s_y.
\]

\medskip
\noindent
\paragraph{(ii) The strict Nanson rule.}
If $s=\bm 0_B$, \textit{Cancellation} gives
$\Delta_B(s)=B$.

Suppose that $s\neq\bm 0_B$ and fix $x\in B$.
If $s_x>0$, Lemma~\ref{lem:positive-survive} gives $x\notin\Delta_B(s)$.
So, suppose that $s_x=0$.
By Lemma~\ref{lem:coordinate-completion}, the same score vector is induced by a tournament in which $x$ is a dummy.
If $x\in\Delta_B(s)$, \textit{Dummy Retention} implies complete social indifference, which contradicts $\Delta_B(s)\neq B$.
Hence, $x\notin\Delta_B(s)$.

Suppose finally that $s_x<0$.
Define
\[
\alpha
=
\frac{-s_x}{|B|-1},
\qquad
g
=
\alpha(-u_B^x),
\qquad
h
=
s-g.
\]
Then
$h_x=0$.
If $h=\bm 0_B$, Lemma \ref{lem:canonical-stars} gives $x\in\Delta_B(s)$.
Suppose that $h\neq\bm 0_B$.
The zero-coordinate argument applied to $-h$ gives $x\notin\Delta_B(-h)$.
Assume, contrary to the claim, that
$x\notin\Delta_B(s)$.
The alternative $x$ survives at both $s$ and $-h$.
Moreover,$s+(-h)=g$ and $\Delta_B(g)=\{x\}\neq B$.
\textit{Survival Consistency} implies that $x$ survives at $g$, which contradicts $\Delta_B(g)=\{x\}$.
Hence, $x\in\Delta_B(s)$.

Summarizing all the arguments, 
\[
\Delta_B(s)
=
\{x\in B:s_x<0\}
\]
for every nonzero $s$.

\medskip
\noindent
\paragraph{(iii) The Nanson rule.}
Fix $s\in\mathcal S(B)$ and $x\in B$.
If $s_x>0$, Lemma~\ref{lem:positive-survive} gives $x\notin\Delta_B(s)$.
Suppose that $s_x=0$.
By Lemma \ref{lem:coordinate-completion}, we can find a weighted tournament with score vector $s$ in
which $x$ is a dummy.
\textit{Dummy Exclusion} gives $x\in\Delta_B(s)$.

Suppose that
$s_x<0$.
Define
\[
\alpha
=
\frac{-s_x}{|B|-1},
\qquad
g
=
\alpha(-u_B^x),
\qquad
h
=
s-g.
\]

Then $h_x=0$.
Lemma \ref{lem:canonical-stars} gives $x\in\Delta_B(g)$.
The zero-coordinate argument gives $x\in\Delta_B(h)$.
\textit{Bottom Consistency} therefore gives
\[
x\in\Delta_B(s).
\]
Consequently,
\[
\Delta_B(s)
=
\{x\in B:s_x\leq0\}.
\]
\end{proof}

\section{Logical independence}
\label{app:independence}
We consider the \textit{weighted-margin rules} as follows.
Fix strictly positive symmetric weights $\lambda_{\{x,y\}}=\lambda_{\{y,x\}}$ and define $q_x^{\lambda,B}(\succ)=\sum_{y\in B\setminus\{x\}}\lambda_{\{x,y\}}[T(\succ)]_{xy}$.
Let $F^{\lambda,\min}$ eliminate the minimizers of $q^{\lambda,B}$.
Let $F^{\lambda,<}$ eliminate the alternatives with $q_x^{\lambda,B}<0$.
Let $F^{\lambda,\leq}$ eliminate the alternatives with $q_x^{\lambda,B}\leq0$.

Let $F^{\mathrm{CW}}$ rank a Condorcet winner uniquely first whenever one exists and otherwise yield complete indifference.
Let $F^{\mathrm{Pl}}$ recursively eliminate the alternatives with the lowest plurality score.
Let $F^{\max}$, $F^{>}$, and $F^{\geq}$ recursively eliminate the alternatives with, respectively, the highest, positive, and nonnegative centered Borda scores until $s=\bm0_{B}$.
Let $F^{\mathrm{loss}}$ recursively eliminate
$L_{B}(\succ)=\{x\in B:[T(\succ)]_{xy}<0\text{ for some }y\in B\setminus\{x\}\}$,
with the convention that $L_{B}(\succ)=B$ whenever $L_{B}(\succ)=\emptyset$.

For $G \in\{F^{\mathrm{Baldwin}},F^{\mathrm{sN}},F^{\mathrm{Nanson}}\}$, let $\widehat F^G$ have the same bottom tier as $G$ and tie all alternatives outside that tier.
The independence of the axioms for each rule is summarized in Table \ref{tab:independence-witnesses}.

\begin{table}[H]
\begin{center}
\begin{tabular}{lll}
\toprule
Rule & Omitted axiom & Counterexample \\
\midrule
Baldwin & Neutrality & $F^{\lambda,\min}$ \\
& Strong Bottom Consistency & $F^{\mathrm{CW}}$ \\
& Weak Faithfulness & $F^{\max}$ \\
& Cancellation & $F^{\mathrm{Pl}}$ \\
& Bottom Independence & $\widehat F^{\mathrm{Baldwin}}$ \\
\midrule
Strict Nanson & Neutrality & $F^{\lambda,<}$ \\
& Bottom Consistency & $F^{\mathrm{loss}}$ \\
& Survival Consistency & $F^{\mathrm{Baldwin}}$ \\
& Weak Faithfulness & $F^{>}$ \\
& Dummy Retention & $F^{\mathrm{Nanson}}$ \\
& Bottom Independence & $\widehat F^{\mathrm{sN}}$ \\
\midrule
Nanson & Neutrality & $F^{\lambda,\leq}$ \\
& Bottom Consistency & $F^{\mathrm{CW}}$ \\
& Weak Faithfulness & $F^{\geq}$ \\
& Dummy Exclusion & $F^{\mathrm{sN}}$ \\
& Bottom Independence & $\widehat F^{\mathrm{Nanson}}$ \\
\bottomrule
\end{tabular}
\caption{Independence of axioms}
\label{tab:independence-witnesses}
\end{center}
\end{table}

We verify that each rule satisfies the claim.
For three alternatives $B=\{a,b,c\}$, write $T[u,v,w]$ for the weighted tournament satisfying $T_{ab}=u$, $T_{ac}=v$, and $T_{bc}=w$.

\paragraph{Weighted-margin rules.}
The weighted-margin vector is additive and has coordinate sum zero.
Consequently, $F^{\lambda,\min}$ satisfies \textit{Strong Bottom Consistency}, while $F^{\lambda,<}$ and $F^{\lambda,\leq}$ satisfy \textit{Bottom Consistency}.
The rule $F^{\lambda,<}$ also satisfies \textit{Survival Consistency}.
Strict positivity of the weights implies \textit{Weak Faithfulness}.
A dummy alternative has weighted-margin score zero, so $F^{\lambda,<}$ satisfies \textit{Dummy Retention} and $F^{\lambda,\leq}$ satisfies \textit{Dummy Exclusion}.
All three weighted-margin rules satisfy \textit{Cancellation} and \textit{Bottom Independence}.

They may violate \textit{Neutrality} when the weights are not invariant under permutations.
For example, let
\[
\lambda_{\{a,b\}}=2
\qquad\text{and}\qquad
\lambda_{\{a,c\}}=\lambda_{\{b,c\}}=1,
\]
and consider the permutation $\sigma$ defined by $\sigma(a)=c$, $\sigma(c)=b$, and $\sigma(b)=a$.
At $T[-4,-2,4]$, the weighted-margin vector is $(-10,12,-2)$, whereas at the permuted tournament it is $(12,-6,-6)$.
Thus, the minimum-score set is not preserved by $\sigma$, so $F^{\lambda,\min}$ violates \textit{Neutrality}.
At $T[-6,-6,-6]$, the corresponding vectors are $(-18,6,12)$ and $(-6,18,-12)$.
The negative and nonpositive sets are not preserved by $\sigma$, so $F^{\lambda,<}$ and $F^{\lambda,\leq}$ also violate \textit{Neutrality}.

\paragraph{Condorcet winner first--complete indifference for others, $F^{\mathrm{CW}}$.}
The rule $F^{\mathrm{CW}}$ satisfies \textit{Neutrality}, \textit{Weak Faithfulness}, \textit{Cancellation}, \textit{Dummy Exclusion}, and \textit{Bottom Independence}.
Consider
\[
T=T[4,-4,4]
\qquad\text{and}\qquad
T'=T[-2,8,-2].
\]
Neither tournament has a Condorcet winner, so both outcomes are complete indifference.
Their sum is $T[2,4,2]$, at which $a$ is a Condorcet winner.
Hence, the bottom tier changes from $B$ in both components to $\{b,c\}$ in their sum.
Thus, $F^{\mathrm{CW}}$ violates both \textit{Strong Bottom Consistency} and \textit{Bottom Consistency}.

\paragraph{The recursive plurality-loser rule, $F^{Pl}$.}
This rule is the set-valued variant of instant-runoff voting, also known
as the alternative vote or Hare's rule, and referred to as STV in the
computational social choice literature.\footnote{Unlike the standard formulation, $F^{Pl}$ simultaneously eliminates every alternative attaining the lowest plurality score, so that no tie-breaking is required. \citet{FreemanBrillConitzer2014} instead resolve such ties through the parallel-universes framework.}
The rule $F^{Pl}$ satisfies \textit{Neutrality}, \textit{Strong Bottom Consistency}, \textit{Weak Faithfulness}, and \textit{Bottom Independence}.
It violates \textit{Cancellation}.
Indeed, consider the profile
\[
\begin{array}{c|cccc}
\text{Multiplicity} & 2 & 2 & 1 & 1 \\
\hline
\text{Ranking}
& a\succ b\succ c
& c\succ b\succ a
& b\succ a\succ c
& c\succ a\succ b
\end{array}
\]
Every ranking is paired with its reverse, so the majority-margin matrix is zero.
The plurality scores are $(2,1,3)$, and the rule eliminates $b$.

\paragraph{The score-reversal rules, $F^{\max}$, $F^{>}$, and $F^{\geq}$.}
Each rule satisfies the relevant axioms of the Baldwin, the strict Nanson, and Nanson rule, respectively, except for \textit{Weak Faithfulness}.
This is because the top alternative in a one-voter profile has the highest positive centered Borda score and is placed in the bottom tier.

\paragraph{Recursive elimination, $F^{\mathrm{loss}}$.}
The rule $F^{\mathrm{loss}}$ satisfies \textit{Neutrality}, \textit{Survival Consistency}, \textit{Weak Faithfulness}, \textit{Dummy Retention}, and \textit{Bottom Independence}.
The rule violates \textit{Bottom Consistency}.
At
\[
T=T[-4,-4,-4]
\qquad\text{and}\qquad
T'=T[2,0,4],
\]
the bottom tiers are $\{a,b\}$ and $\{b,c\}$.
Their sum is $T[-2,-4,0]$, whose bottom tier is $\{a\}$.
Thus, the common bottom alternative $b$ is not preserved.

\paragraph{The Baldwin rule.}
The Baldwin rule satisfies every axiom in the strict Nanson characterization except \textit{Survival Consistency}.
For score vectors
\[
s=(-2,-4,6)
\qquad\text{and}\qquad
t=(-2,6,-4),
\]
the alternative $a$ survives at both vectors.
At $s+t=(-4,2,2)$, however, $a$ is the unique bottom alternative.

\paragraph{The Nanson and strict Nanson rules.}
The Nanson rule satisfies every axiom in the strict Nanson characterization except \textit{Dummy Retention}.
At $T[0,0,2]$, the centered Borda-score vector is $(0,2,-2)$ and $a$ is a dummy.
The Nanson bottom tier is $\{a,c\}$, although the outcome is not complete indifference.

The strict Nanson rule satisfies every axiom in the Nanson characterization except \textit{Dummy Exclusion}.
At the same tournament, its bottom tier is $\{c\}$, so the dummy alternative $a$ survives.

\paragraph{Truncation rules, $\widehat F^G$ for $G\in\{F^{\mathrm{Baldwin}},F^{\mathrm{sN}},F^{\mathrm{Nanson}}\}$.}
Each truncation $\widehat F^G$ has the same bottom tier as its underlying canonical rule.
It therefore satisfies every axiom concerning only the bottom tier, but violates \textit{Bottom Independence}.
At $T[-4,-2,-2]$, the centered Borda-score vector is $(-6,2,4)$, and all three canonical rules have the bottom tier $\{a\}$.
On $\{b,c\}$, the margin $T_{bc}=-2$ ranks $c$ strictly above $b$.
The truncation instead ties $b$ and $c$.

\bibliography{ref.bib}

\end{document}